\let\norm\relax
\DeclarePairedDelimiter{\norm}{\lVert}{\rVert}
\newcommand{\cu}[1]{\mathbf{#1}}
\newcommand{\tash}[2]{\frac{\partial #1}{\partial #2}}
\newcommand{\tashh}[3]{\frac{\partial^2 #1}{\partial {#2} \partial {#3}}}
\newcommand{\diff}{\mathop{}\!\mathrm{d}}
\newcommand{\expec}{\mathbb{E}} 
\newcommand{\cov}{\operatorname{Cov}}     
\newcommand{\R}{\mathbb{R}} 
\newcommand{\diag}{\operatorname{diag}}     
\newcommand{\mineig}{\lambda_{\mathrm{min}}}  
\newcommand*{\trans}{{\mkern-1.5mu\mathsf{T}}}
\newtheorem{theorem}{Theorem}
\newtheorem{myalgorithm}[theorem]{Algorithm}
\newtheorem{assumption}[theorem]{Assumption}
\newtheorem{proposition}[theorem]{Proposition}
\newtheorem{lemma}[theorem]{Lemma}
\newtheorem{remark}[theorem]{Remark}
\newtheorem{example}[theorem]{Example}
\def\BibTeX{{\rm B\kern-.05em{\sc i\kern-.025em b}\kern-.08em
    T\kern-.1667em\lower.7ex\hbox{E}\kern-.125emX}}
\begin{document}
\title{Taylor Moment Expansion for Continuous-Discrete Gaussian Filtering and Smoothing}

\author{Zheng Zhao, Toni Karvonen, Roland Hostettler,~\IEEEmembership{Member,~IEEE}, and Simo S\"{a}rkk\"{a},~\IEEEmembership{Senior Member,~IEEE}%
	\thanks{Manuscript received Jan, 2020. 
		Corresponding author: Zheng Zhao (email: zheng.zhao@aalto.fi).}
	\thanks{Zheng Zhao, Toni Karvonen, and Simo S\"{a}rkk\"{a} were with Department of Electrical Engineering and Automation, Aalto University, Finland. }
	\thanks{Roland Hostettler was with Department of Engineering Sciences, Uppsala University.}}

\maketitle

\begin{abstract}
The paper is concerned with non-linear Gaussian filtering and smoothing in continuous-discrete state-space models, where the dynamic model is formulated as an It\^{o} stochastic differential equation (SDE), and the measurements are obtained at discrete time instants. We propose novel Taylor moment expansion (TME) Gaussian filter and smoother which approximate the moments of the SDE with a temporal Taylor expansion. Differently from classical linearisation or It\^{o}--Taylor approaches, the Taylor expansion is formed for the moment functions directly and in time variable, not by using a Taylor expansion on the non-linear functions in the model. We analyse the theoretical properties, including the positive definiteness of the covariance estimate and stability of the TME Gaussian filter and smoother. By numerical experiments, we demonstrate that the proposed TME Gaussian filter and smoother significantly outperform the state-of-the-art methods in terms of estimation accuracy and numerical stability. 
\end{abstract}

\begin{IEEEkeywords}
continuous-discrete state-space model, Gaussian filtering and smoothing, Kalman filtering and smoothing, stochastic differential equation, Taylor moment expansion
\end{IEEEkeywords}

\section{Introduction}
\IEEEPARstart{B}{ayesian} filtering and smoothing refer to algorithms that estimate the states of a stochastic time-varying system from noisy observations \cite{sarkka_2013}. They have drawn significant attention in applications such as target tracking \cite{barshalomBook2002}, signal processing \cite{zzmlsp2018, zhao2018kalman, gao2019regularized, beataBook2016}, and inverse problems \cite{JariInverse2005}. Particularly useful instances of Bayesian filters and smoothers are the Gaussian assumed density filter and smoother (see, e.g., \cite[Ch.~6]{sarkka_2013} and \cite{Kazufumi2000}), where the filtering and smoothing posterior distributions are assumed to be Gaussian. The so-called Kalman filter~\cite{Kalman1960} and Rauch--Tung--Striebel smoother~\cite{Rauch1965}  are special cases of the Gaussian filter and smoother when the state-space models are linear and Gaussian. For non-linear models, the extended Kalman filter and smoother \cite{jazwinski2007stochastic}, unscented Kalman filter and smoother \cite{eric2000UKF, julierUKF2004, SARKKA2010225}, cubature Kalman filter and smoother \cite{simonCKF2009, ARASARATNAM20112245}, and Fourier--Hermite Kalman filter \cite{juha2012, juhaSmoother2012} are implementations of the Gaussian filter and smoother.

Continuous-discrete models arise when the system dynamics are in continuous time and its observations in discrete time. The continuous dynamics are usually formulated with stochastic differential equations (SDEs) which defines a Markov process~\cite{oksendal2013stochastic}. The challenge of performing Bayesian filtering and smoothing in continuous-discrete models is the intractability of transition density of the SDEs, which requires to solve the Fokker--Planck--Kolmogorov (FPK) partial differential equation (PDE)~\cite{sarkkaarnobook2019}. In the Gaussian filtering and smoothing framework, the problem is addressed by forming a  Gaussian approximation to the SDE~\cite{Kushner-early1967, Kazufumi2000, sarkkaarnobook2019, filip-2019-iterated-smoothing}. 

The It\^{o}--Taylor expansion~\cite{mallick2010cd, david2015, ckdIndian2010} and the so-called ordinary differential equation (ODE) approach~\cite{Kazufumi2000, SARKKA2013500, SARKKA20121221} are classical ways to construct the Gaussian approximation. In the It\^{o}--Taylor expansion scheme, the idea is to approximate the SDE solution numerically in discrete time steps by using the It\^{o}--Taylor expansion of the SDE~\cite{peter-koleden-book}. Examples of this kind of methods are the Euler--Maruyama and Milstein's method~\cite{peter-koleden-book}. However, it is difficult to use higher-order It\^{o}--Taylor approximations due to the intractability of iterated It\^{o} integrals~\cite{sarkkaarnobook2019, peter-koleden-book}, and thus the methods only work with small enough time intervals. In the ODE approach, the idea is to approximate the mean and covariance ODEs for the SDE. However, the mean and covariance ODEs are not analytically tractable because they require to calculate expectations of the stochastic process. This is usually circumvented by using a Gaussian approximation. An example of the ODE approach is the continuous-discrete extended Kalman filter~\cite{jazwinski2007stochastic, sarkkaarnobook2019}, where one needs to linearise the drift function of SDEs to make the ODEs tractable.

In this paper, we propose an alternative approach to form Gaussian approximations by employing the Taylor moment expansion (TME) method \cite{SANCHO1970384, kessler1997taylorexoansion, sarkkaarnobook2019} which is a Taylor expansion of moments in temporal direction. The contributions of this paper are as follows. (1) We develop a novel Taylor moment expansion based Gaussian filter and smoother for continuous-discrete state-space models. (2) In addition, we present theoretical analysis on the positive definiteness of TME covariance estimate and the stability of TME Gaussian filter and smoother. (3) Finally, we show by numerical experiments that the proposed TME Gaussian filter and smoother outperform the state-of-the-art methods in terms of both estimation accuracy and numerical stability.

The paper is structured as follows. In Section \ref{sec:cd-general}, we formulate the continuous-discrete Gaussian filtering and smoothing problems and highlight the challenges. In Section \ref{sec:taylor-expansion}, we introduce and derive TME and then form the TME Gaussian filter and smoother, as well as the analysis on the positive definiteness of TME covariance estimate and the stability of filter and smoother. The numerical experiments are presented in Section \ref{sec:numerical-experiments}, followed by conclusion in Section~\ref{sec:conclusion}. 

\section{Continuous-Discrete Gaussian Filtering and Smoothing}
\label{sec:cd-general}
Consider a continuous-discrete state-space model
\begin{subequations}\label{equ:problem_formulation}
	\begin{align}
	\diff \cu{x}_t &= \cu{f}(\cu{x}_t, t) \diff t + \cu{L}(\cu{x}_t, t) \diff\cu{W}_t, \label{equ:problem_formulation-dyn}\\
	\cu{y}_k &= \cu{h}(\cu{x}_{k}) + \cu{v}_k,
	\label{equ:problem_formulation-meas}
	\end{align}
\end{subequations}
where $\cu{x}_t \in \R^D$ is a $D$-dimensional It\^{o} process, $\cu{y}_k\in\R^Z$ is the measurement at time $t_k$, and $\cu{W}_t$ denotes an $S$-dimensional Wiener process with diffusion matrix $\cu{Q}$. We also assume the non-linear drift and dispersion functions $\cu{f}\triangleq\cu{f}(\cu{x}_t, t)$ and $\cu{L}(\cu{x}_t, t)$ are sufficiently regular so that \eqref{equ:problem_formulation-dyn} has a weakly unique solution \cite{oksendal2013stochastic, rogers_williams_2000}. As we are mostly concerned with the continuous-time part \eqref{equ:problem_formulation-dyn}, for simplicity, we model the measurement $\cu{y}_k$ in~\eqref{equ:problem_formulation-meas} with a non-linear function $\cu{h}(\cu{x}_{k})$ and a Gaussian noise $\cu{v}_k\sim\mathcal{N}(\cu{0}, \cu{V}_k)$. Furthermore, we denote $\bm{\Gamma}\triangleq\bm{\Gamma}(\cu{x}_t, t) \triangleq \cu{L}(\cu{x}_t, t)\, \cu{Q}\, \cu{L}^\trans(\cu{x}_t, t)$, and $\Gamma_{ij}$ denotes the $i$-th row and $j$-th column entry of $\bm{\Gamma}(\cu{x}_t, t)$. 

The aim is to form Gaussian approximations to the filtering and smoothing densities for any $t_k$, $k=1,\ldots, T$ as follows:
\begin{align}
p(\cu{x}_{k}\mid\cu{y}_{1:k}) &\approx \mathcal{N}(\cu{x}_{k}\mid \cu{m}_k, \cu{P}_k),
\label{equ:filter-post}\\
p(\cu{x}_{k}\mid\cu{y}_{1:T}) &\approx \mathcal{N}(\cu{x}_{k}\mid \cu{m}^s_k, \cu{P}^s_k).
\label{equ:smoother-post}
\end{align}
Above, we have used the notation $\cu{x}_k\triangleq\cu{x}_{t_k}$ at time $t_k$, and $\cu{y}_{1:k} = \left\lbrace \cu{y}_1,\, \cu{y}_2,\, \ldots,\, \cu{y}_k\right\rbrace $. Additionally, $\cu{m}_k$, $\cu{P}_k$, and $\cu{m}^s_k, \cu{P}^s_k$ are the means and covariances of \eqref{equ:filter-post} and \eqref{equ:smoother-post}, respectively. 

In order to obtain the exact posteriors on the right hand sides of \eqref{equ:filter-post} and \eqref{equ:smoother-post}, it would be necessary to compute the transition densities $p(\cu{x}_{k}\mid\cu{x}_{k-1})$ for the continuous model \eqref{equ:problem_formulation-dyn} (see, e.g., \cite{sarkka_2013,sarkkaarnobook2019}). It turns out that the transition density is only analytically tractable in limited cases, such as for linear SDEs. This is because in general, solving the FPK PDE for non-linear SDEs is not possible in closed form. Therefore, approximations are needed. Although it is possible to estimate the transition density by solving the FPK PDE numerically \cite{fpkShalom2000, sarkkaarnobook2019}, the computational effort is extensive. In the Gaussian filtering and smoothing framework, we are interested in constructing a Gaussian approximation to the transition density:
\begin{equation}
p(\cu{x}_{k}\mid\cu{x}_{k-1}) \approx \mathcal{N}(\cu{x}_k\mid \expec\left[\cu{x}_{k}\mid\cu{x}_{k-1} \right], \cov\left[\cu{x}_{k}\mid\cu{x}_{k-1} \right]),
\label{equ:transition-to-be-approx}
\end{equation}
which is also the approach that we employ here.

\subsection{Approximating Transition Density with It\^{o}--Taylor Series}
\label{sec:moment-ito-taylor}
One classical way to approximate the transition density~\eqref{equ:transition-to-be-approx} is the It\^{o}--Taylor expansion \cite{peter-koleden-book} which can be used to form a discretised solution to the SDE by expanding It\^{o} integrals iteratively using It\^{o}'s lemma. Euler--Maruyama is the simplest instance of this kind of methods, and the solution estimate $\hat{\cu{x}}_k$ is
\begin{equation}
\hat{\cu{x}}_{k} = \cu{x}_{k-1} + \cu{f}(\cu{x}_{k-1}, t_{k-1})\,\Delta t + \cu{L}(\cu{x}_{k-1}, t_{k-1}) \, \Delta \cu{W}_k, 
\end{equation}
where $\Delta t = t_k - t_{k-1}$ is the discretization interval and $\Delta \cu{W}_t \sim \mathcal{N}(\cu{0}, \cu{Q}\,\Delta t)$. From this, we can calculate the mean and covariance in \eqref{equ:transition-to-be-approx} as
\begin{equation}
\begin{split}
\expec\left[ \cu{x}_{k}\mid\cu{x}_{k-1}\right]  &\approx \cu{x}_{k-1} + \cu{f}(\cu{x}_{k-1}, t_{k-1})\,\Delta t, \\
\cov\left[\cu{x}_{k}\mid\cu{x}_{k-1}\right] &\approx \bm{\Gamma}(\cu{x}_{k-1}, t_{k-1})\,\Delta t, 
\end{split}
\label{equ:Euler--Maruyama}
\end{equation}
However, the Euler--Maruyama scheme only works well when the time interval $\Delta t$ is small enough. Other commonly used choices are, for example, the Milstein's method and the strong order 1.5 It\^{o}--Taylor (It\^{o}-1.5) method~\cite{peter-koleden-book, sarkkaarnobook2019}. However, because of the difficulty of expanding the iterative It\^{o} integrals, it is not easy to construct higher order It\^{o}--Taylor expansion \cite{peter-koleden-book} and hence this approach is inherently low order in $\Delta t$. Even when using the It\^{o}-1.5, we need the dispersion term to be constant. It\^{o}--Taylor expansion based continuous-discrete filters and smoothers can be found, for example, in~\cite{ckdIndian2010, SARKKA20121221, mallick2010cd, david2015, marelende2005}.

\subsection{Approximating Transition Density with ODEs}
\label{sec:moment-odes}
Another widely used approach is to approximate ODEs for the first two moments of the It\^{o} process \cite{jazwinski2007stochastic}. The mean and covariance of It\^{o} process \eqref{equ:problem_formulation-dyn} for any $t\in \left(t_{k-1}, t_k \right]$ are characterised by
\begin{equation}
\begin{split}
\frac{\diff \cu{m}_t}{\diff t} &= \expec\left[ \cu{f}(\cu{x}_t, t)\right], \\
\frac{\diff \cu{P}_t}{\diff t} &= \expec\left[ \cu{f}(\cu{x}_t, t)\,(\cu{x}_t-\cu{m}_t)^\trans\right]\\
&\quad+ \expec\left[ (\cu{x}_t-\cu{m}_t)\,\cu{f}^\trans(\cu{x}_t, t)\right] + \expec\left[ \bm{\Gamma}(\cu{x}_t,t)\right],
\label{equ:dmdt_mPP}
\end{split}
\end{equation}
where $\cu{m}_t = \expec\left[ \cu{x}_t\right]$ and $\cu{P}_t = \expec\left[ \left(\cu{x}_t-\cu{m}_t\right)\,(\cu{x}_t-\cu{m}_t)^\trans\right]$ \cite{sarkkaarnobook2019}. The initial values of $\cu{m}_t$ and $\cu{P}_t$ are given at time $t_{k-1}$. Notice that when using this scheme in Gaussian filtering and smoothing, it is not necessary to directly approximate the transition density \eqref{equ:transition-to-be-approx}. By solving the ODEs, they directly give the prediction step of filtering when the initial conditions are given by the previous filtering posterior. Unfortunately, the ODEs are only tractable for linear SDEs along with certain other isolated special cases, because the expectations in \eqref{equ:dmdt_mPP} are taken with respect to $p(\cu{x}_t, t)$, which requires to solve the FPK equation beforehand \cite{jazwinski2007stochastic}. 

To disentangle the intractability problem of these ODEs, one practical solution is to linearise $\cu{f}(\cu{x}_t, t)$ and $\cu{L}(\cu{x}_t, t)$ around $\cu{m}_t$, which results in the elimination of the expectation on the right-hand side of ODEs. This leads to the continuous-discrete extended Kalman filter (CD-EKF)~\cite{jazwinski2007stochastic}. However, this linearisation approach does not perform well when the models are significantly non-linear, albeit higher-order Taylor expansion can be used \cite{peter-maybeck-book}.

Another solution is to assume that the densities are Gaussian, in which case the expectations in the ODEs can be calculated with Gaussian quadrature. For non-linear integrands, numerical methods, such as Gauss--Hermite can be used \cite{Kazufumi2000}. With the aid of linearisation or Gaussian assumption strategies, solving the ODEs is straightforward. One can leverage numerical solvers, such as linear multistep methods or Runge--Kutta (RK) methods~\cite{griffithODE2010, butcher-numerical-ode}. Through this ODE approach, one can directly estimate the mean and covariance of SDE from the initial condition, which gives the prediction step of the continuous-discrete filter. Related methods are, for example, continuous-discrete unscented/cubature/Gauss--Hermite Kalman filters~\cite{SARKKA2010225, SARKKA2013500, SARKKA20121221, simo2007ukf}. For the ODE methods for smoothing, we refer to \cite{sarkkaarnobook2019} for details. 

For simplicity, we use \textit{Linear-ODE} and \textit{Gauss-ODE} to refer to the methods solving~\eqref{equ:dmdt_mPP} using linearisation and Gaussian assumptions, respectively.

\section{Taylor Moments Expansion Gaussian Filtering and Smoothing}
\label{sec:taylor-expansion}

As a Gaussian distribution is entirely characterised by its mean and covariance, a reasonable approach to Gaussian filtering and smoothing is to use moment matching to form a Gaussian approximation to the transition density. The previously presented It\^{o}--Taylor and ODE methods are useful tools for this purpose. However, because the moments of an It\^{o} process are functions of time, it is also possible to form a (deterministic) Taylor expansion with respect to time. Although this \textit{Taylor moment expansion} (TME) approach was originally introduced for likelihood-based parametric estimation of SDEs \cite{sarkkaarnobook2019, kessler1997taylorexoansion, yacine2002, florenz1986, florenz1989}, we here propose to use it in Gaussian filtering and smoothing. 

In this section, we first derive the TME approximation, which gives the approximate temporal evolution of the mean and covariance of the underlying process characterised by the SDE~\eqref{equ:problem_formulation-dyn}. Thereafter, we approximate the transition density as a Gaussian. The proposed TME Gaussian filtering and smoothing methods for continuous-discrete state-space models are then formulated based on this approximation. 

\subsection{Taylor Expansion Moment Approximation}
Let $\phi(\cu{x}_t)$ be an arbitrary twice-differentiable scalar function of the process $\cu{x}_t$. By It\^{o}'s lemma and taking the expectation on both sides yields
\begin{equation}
\begin{split}
\diff \expec\left[ \phi(\cu{x}_t)\right] &= \expec\left[\nabla \phi(\cu{x}_t)\,\cu{f}(\cu{x}_t, t) \right] \diff t \\
&\quad+ \frac{1}{2}\expec\left[ \tr\left( \nabla\nabla^\trans \phi(\cu{x}_t)\,\bm{\Gamma}(\cu{x}_t, t)\right) \right] \diff t, 
\end{split}
\label{equ:ito-moment}
\end{equation}
where $\nabla$ and $\nabla\nabla^\trans$ give the Jacobian and Hessian of $\phi(\cu{x}_t)$, respectively. With a proper choice of  $\phi$, this will lead to the moment ODEs as shown in \eqref{equ:dmdt_mPP} \cite{sarkkaarnobook2019}. The aim now is to form a Taylor expansion of the function $\expec\left[ \phi(\cu{x}_t) \right]$. We notice that the right-hand side of \eqref{equ:ito-moment} can be reformulated with the (generalized) infinitesimal generator
\begin{equation} 
\begin{split}
\mathcal{A} g &= \tash{g}{t} + \nabla g\,\cu{f}(\cu{x}_t, t) + \frac{1}{2}\tr(\nabla\nabla^\trans g\,\bm{\Gamma}(\cu{x}_t, t)),\\
&=\tash{g}{t} + \sum_i\tash{g}{x_i}f_i(\cu{x}_t, t) + \frac{1}{2}\sum_{i, j}\frac{\partial^2g}{\partial x_i\partial x_j}\Gamma_{ij},
\label{equ:generator}
\end{split}
\end{equation}
for any regular smooth function $g$, where $x_i$ is the $i$-th component of $\cu{x}_t$ (see, e.g., \cite[Ch.~7]{oksendal2013stochastic} or \cite[Ch.~9]{sarkkaarnobook2019}). Thus \eqref{equ:ito-moment} becomes
\begin{equation}
\begin{split}
\frac{\diff \expec\left[ \phi(\cu{x}_t)\right]}{\diff t}  = \expec \left[\mathcal{A}\phi(\cu{x}_t) \right],
\end{split}
\label{equ:taylor-exp-ode}
\end{equation}
which requires that $\phi\in C^2$ is twice differentiable. We also denote by $\mathcal{A}^r$ the $r$-th iteration of the generator. By taking derivatives of \eqref{equ:taylor-exp-ode} multiple times, we have \cite{sarkkaarnobook2019}
\begin{equation}
\begin{split}
\frac{\diff \expec\left[ \phi(\cu{x}_t)\right]}{\diff t} &= \expec\left[ \mathcal{A}\phi(\cu{x}_t)\right], \\
\frac{\diff^2 \expec\left[ \phi(\cu{x}_t)\right]}{\diff t^2} &= \expec\left[ \mathcal{A}^2\phi(\cu{x}_t)\right], \\
&\quad\vdots \\
\frac{\diff^M \expec\left[ \phi(\cu{x}_t)\right]}{\diff t^M} &= \expec\left[ \mathcal{A}^M\phi(\cu{x}_t)\right],
\end{split}
\label{equ:deriv-of-moment}
\end{equation}
which requires that $\phi \in C^{2M}$. Notice that \eqref{equ:deriv-of-moment} above also requires sufficient smoothness of $\cu{f}(\cu{x}, t)$ and $\bm{\Gamma}(\cu{x}_t, t)$. We can now form an $M$-th order Taylor expansion of the function $\expec\left[ \phi(\cu{x}_{k})\right]$ at time $t_k$ and centred at time $t_{k-1}$ as follows:
\begin{equation}
\begin{split}
\expec\left[ \phi(\cu{x}_{k})\right] &\approx \sum^M_{r=0}\frac{1}{r!}\frac{\diff^r \expec\left[ \phi(\cu{x}_{k-1})\right]}{\diff t^r}\,\Delta t^r \\
&= \sum^M_{r=0}\frac{1}{r!}\expec\left[ \mathcal{A}^r\phi(\cu{x}_{k-1})\right] \,\Delta t^r,
\end{split}
\label{equ:taylor-exp-1}
\end{equation}  
where $\Delta t = t_k - t_{k-1}$. Conditioning \eqref{equ:taylor-exp-1} on $\cu{x}_{k-1}$ gives
\begin{equation}
\expec\left[ \phi(\cu{x}_{k})\mid\cu{x}_{k-1}\right ] \approx \sum^M_{r=0}\frac{1}{r!}\mathcal{A}^r\phi(\cu{x}_{k-1})\,\Delta t^r.
\label{equ:taylor-exp-final}
\end{equation}

In Gaussian filtering and smoothing, we are only interested in the function $\phi$ having certain polynomial forms. For the mean and covariance, we introduce two sets of functions: $\left\lbrace \phi_i = x_i\colon i = 1, \ldots, D \right\rbrace $ and $\left\lbrace \phi_{ij} = x_i\,x_j\colon i,j = 1, \ldots, D \right\rbrace $, where $x_i$ is the $i$-th component of $\cu{x}_k$. Then, we have the mean $\expec\left[\cu{x}_{k}\mid\cu{x}_{k-1} \right] = \begin{bmatrix}
\expec\left[\phi_1\mid \cu{x}_{k-1} \right], &\ldots, &\expec\left[\phi_D\mid \cu{x}_{k-1} \right]
\end{bmatrix}^\trans$ and the covariance $\cov\left[x_i\,x_j\mid\cu{x}_{k-1}\right] = \expec\left[\phi_{ij}\mid\cu{x}_{k-1}\right] - \expec\left[\phi_i \mid\cu{x}_{k-1}\right]\,\expec\left[\phi_j \mid\cu{x}_{k-1}\right]$. Using the approximation \eqref{equ:taylor-exp-final}, we can now form the Taylor moment expansion (TME) estimator for the transition density as shown in Algorithm \ref{def:taylor-moment-estimator}.

\begin{myalgorithm}[Taylor Moment Expansion (TME) of Transition Density]\label{def:taylor-moment-estimator}
	Let $p(\cu{x}_{k}\mid\cu{x}_{k-1})$ be the transition density of SDE \eqref{equ:problem_formulation}, where $\Delta t$ is the time interval. The $M$-th order Taylor expansion based estimates of the mean $\cu{a}_M $, the second moment $\cu{B}_M $, and the covariance $ \bm{\Sigma}_M $ of the transition density are given by
	\begin{equation}
	\begin{split}
	\cu{a}_{M}&\triangleq \cu{a}_M(\cu{x}_{k-1}, \Delta t)\\
	&= \sum^M_{r=0}\frac{1}{r!}\,\mathcal{A}^r  \cu{x}_{k-1}  \, \Delta t^r \\ &\approx \expec \left[ \cu{x}_{k}\mid\cu{x}_{k-1}\right], \\
	\cu{B}_{M}&\triangleq \cu{B}_M(\cu{x}_{k-1}, \Delta t)\\
	&= \sum^M_{r=0}\frac{1}{r!}\,\mathcal{A}^r \left( \cu{x}_{k-1}\,\cu{x}^\trans_{k-1}\right)  \Delta t^r \\ &\approx \expec \left[ \cu{x}_{k}\,\cu{x}^\trans_{k}\mid\cu{x}_{k-1}\right],\\
	\bm{\Sigma}_{M}&\triangleq \bm{\Sigma}_{M}(\cu{x}_{k-1}, \Delta t) \\
	&= \cu{B}_{M} - \cu{a}_{M}\,\cu{a}_{M}^\trans \\
	&\approx \expec\left[(\cu{x}_{k} - \cu{a}_{M})\,(\cu{x}_{k} - \cu{a}_{M})^\trans \mid \cu{x}_{k-1}\right] \\
	&\approx \cov\left[\cu{x}_{k} \mid \cu{x}_{k-1} \right],
	\end{split}
	\label{equ:taylor-exp-mean-cov}
	\end{equation}
	respectively. Here the application of the generator $\mathcal{A}$ on vector or matrix input means that we apply the operator elementwise.
\end{myalgorithm}

\begin{remark}
	\label{remark:truncate}
	Note that if an $M$-th order TME approximation is used, the covariance estimator $\bm{\Sigma}_{M}$ in Algorithm~\ref{def:taylor-moment-estimator} is a polynomial of degree $2M$ in $\Delta t$, which comes from the product $\cu{a}_M\,\cu{a}_M^\trans$. To keep the order of $\Delta t$ consistent in mean and covariance, $\bm{\Sigma}_M$ needs to be truncated to degree $M$.
\end{remark}

In addition, it is also important to recover the remainder $R(\cu{x}_{k-1}, \Delta t)$ of TME, such that \eqref{equ:taylor-exp-1} becomes
\begin{equation}
\begin{split}
\expec\left[ \phi(\cu{x}_{k})\right] &= \sum^M_{r=0}\frac{1}{r!}\expec\left[ \mathcal{A}^r\phi(\cu{x}_{k-1})\right] \,\Delta t^r + R(\cu{x}_{k-1},\Delta t).\nonumber
\end{split}
\end{equation}
By Taylor's theorem, the remainder admits the form
\begin{equation}
\begin{split}
R(\cu{x}_{k-1}, \Delta t)&=\\
&\hspace{-1.5cm}\int^{t_k}_{t_{k-1}}\int^{u_M}_{t_{k-1}}\cdots\int^{u_1}_{t_{k-1}} \expec\left[\mathcal{A}^{M+1}\phi(\cu{x}_s)\right] \diff s\diff u_1\cdots\diff u_{M}, 
\label{equ:expansion-remainder}
\end{split}
\end{equation}
provided that $\phi$ and the SDE coefficients are sufficiently smooth \cite{kessler1997taylorexoansion, florenz1989}. The convergence properties of TME can be then analysed in the same way as the Taylor expansion. 

The difference between the TME and the aforementioned ODE and It\^{o}--Taylor schemes is mainly how the Gaussian approximation to the continuous model is done. The It\^{o}--Taylor approach first discretises the SDE solution which gives the discretised approximation $\hat{\cu{x}}_k$, and then obtains the moment $\expec\left[\phi(\hat{\cu{x}}_{k})\right]$ using the approximation. In ODE approach we need to postulate certain hypothesis, such as Gaussian assumptions or linearisation before solving the ODEs~\eqref{equ:dmdt_mPP}. In contrast, TME gives the moment $\hat{\expec}\left[ \phi(\cu{x}_{k})\right]$ estimate directly, without forming the discretised approximation $\hat{\cu{x}}_k$ or approximation to the ODEs. 

\subsection{TME Gaussian Filtering and Smoothing}
Using Algorithm~\ref{def:taylor-moment-estimator}, we now formulate the proposed TME Gaussian filter and smoother by utilising an $M$-th order TME estimate of the transition density $p(\cu{x}_{k}\mid \cu{x}_{k-1}) \approx \mathcal{N}(\cu{x}_k\mid \cu{a}_M, \bm{\Sigma}_M)$. Notice that although we are using simplified notations $\cu{a}_M$, $\cu{B}_M$, and $\bm{\Sigma}_M$, those terms are functions of $\cu{x}_{k-1}$ and $\Delta t$.

Let us assume the filtering posterior from previous time step $t_{k-1}$ is $p(\cu{x}_{k-1}\mid\cu{y}_{1:k-1})= \mathcal{N}(\cu{x}_{k-1}\mid\cu{m}_{k-1}, \cu{P}_{k-1})$. We first perform prediction with respect to the continuous model~\eqref{equ:problem_formulation-dyn}, and thus the prediction density $p(\cu{x}_{k}\mid \cu{y}_{1:k-1})= \mathcal{N}(\cu{x}_{k}\mid \cu{m}_k^-, \cu{P}_k^-)$ is characterised by
\begin{align}
& \expec\left[\cu{x}_{k}\mid\cu{y}_{1:k-1} \right]\nonumber \\
&= \int \cu{x}_{k} \int p(\cu{x}_{k}\mid \cu{x}_{k-1})\,  p(\cu{x}_{k-1}\mid\cu{y}_{1:k-1})\,\diff \cu{x}_{k-1}\diff\cu{x}_{k}\nonumber \\
&\approx \int \cu{a}_{M}\, \mathcal{N}(\cu{x}_{k-1}\mid\cu{m}_{k-1}, \cu{P}_{k-1}) \diff \cu{x}_{k-1}\nonumber \\
&= \expec\left[\cu{a}_{M} \right] = \cu{m}^-_k, 
\label{equ:taylor-pred-m}
\end{align}
and
\begin{align}
&  \cov\left[\cu{x}_{k}\mid\cu{y}_{1:k-1} \right]\nonumber \\
&=\int \cu{x}_{k}\,\cu{x}^\trans_{k} \,  \int p(\cu{x}_{k}\mid \cu{x}_{k-1})\,  p(\cu{x}_{k-1}\mid\cu{y}_{1:k-1})\,\diff \cu{x}_{k-1}\diff\cu{x}_{k}\nonumber \\
&\quad- \cu{m}^-_k\,(\cu{m}^-_k)^\trans\nonumber \\
&\approx \int \cu{B}_M\, \mathcal{N}(\cu{x}_{k-1}\mid\cu{m}_{k-1}, \cu{P}_{k-1}) \diff \cu{x}_{k-1} - \cu{m}^-_k\,(\cu{m}^-_k)^\trans\nonumber\\
&= \expec\left[ \bm{\Sigma}_M + \cu{a}_M\,\cu{a}_M^\trans\right] - \cu{m}^-_k\,(\cu{m}^-_k)^\trans = \cu{P}^-_k .
\label{equ:taylor-pred}
\end{align}
Notice that we are calculating $\cu{P}^-_k$ using $\cu{P}^-_k = \expec\left[ \bm{\Sigma}_M + \cu{a}_M\,\cu{a}_M^\trans\right] - \cu{m}^-_k\,(\cu{m}^-_k)^\trans$ instead of directly $\cu{P}^-_k = \expec\left[\cu{B}_M \right]- \cu{m}^-_k\,(\cu{m}^-_k)^\trans$. Recall from Remark \ref{remark:truncate} that they are not equal, as we truncate $\bm{\Sigma}_M$ to keep the power of $\Delta t$ consistent. By using $\cu{P}^-_k = \expec\left[\cu{B}_M \right]- \cu{m}^-_k\,(\cu{m}^-_k)^\trans$, it is difficult to perform such truncation. For the positive definiteness analysis of $\cu{P}^-_k$, it is also easier to analyse $\bm{\Sigma}_M$ directly rather than $\expec\left[\cu{B}_M \right]- \cu{m}^-_k\,(\cu{m}^-_k)^\trans$. The prediction covariance $\cu{P}^-_k$ is not always positive definite, because $\bm{\Sigma}_M$ is a Taylor approximation to the true covariance, which is an issue that will be discussed in Section~\ref{sec:numerical_stability}. 

Finally, the Gaussian filtering posterior \eqref{equ:filter-post} is obtained by conditioning the joint distribution
\begin{equation}
p(\cu{x}_{k}, \cu{y}_k\mid\cu{y}_{1:k-1}) = 
\mathcal{N}\left(\begin{bmatrix}
\cu{x}_{k} \\ \cu{y}_k
\end{bmatrix}\,\,\middle|\,\,\begin{bmatrix}
\cu{m}^-_{k} \\ \bm{\mu}_k
\end{bmatrix}, \begin{bmatrix}
\cu{P}^-_{k} & \cu{C}_k\\
\cu{C}^\trans_k & \cu{S}_k
\end{bmatrix}\right), \nonumber
\end{equation}
on $\cu{y}_k$, which is the so-called update step in Gaussian filtering.  

As to the smoothing posterior \eqref{equ:smoother-post}, it is obtained from 
\begin{equation}
\begin{split}
&p(\cu{x}_{k}, \cu{x}_{k+1}\mid\cu{y}_{1:T}) \\
&= p(\cu{x}_{k}\mid\cu{x}_{k+1}, \cu{y}_{1:k})\, p(\cu{x}_{k+1}\mid\cu{y}_{1:T})
\end{split}
\end{equation}
by marginalising $\cu{x}_{k+1}$, where  $p(\cu{x}_{k+1}\mid\cu{y}_{1:T})$ is the smoothing posterior from subsequent time step $t_{k+1}$. In addition,  $p(\cu{x}_{k}\mid\cu{x}_{k+1}, \cu{y}_{1:k})$ is obtained by conditioning the joint distribution 
\begin{equation}
\begin{split}
&p(\cu{x}_{k}, \cu{x}_{k+1}\mid\cu{y}_{1:k}) \\
&=\mathcal{N}\left(\begin{bmatrix}
\cu{x}_{k} \\ \cu{x}_{k+1}
\end{bmatrix}\,\,\middle|\,\, \begin{bmatrix}
\cu{m}_k \\ \cu{m}^-_{k+1}
\end{bmatrix}, 
\begin{bmatrix}
\cu{P}_{k} & \cu{D}_{k+1}\\
\cu{D}^\trans_{k+1} & \cu{P}^-_{k+1}
\end{bmatrix}\right)
\end{split}
\label{equ:smooth-gp}
\end{equation}
on $\cu{x}_{k+1}$. Using $M$-th order TME, the cross-covariance $\cu{D}_{k+1}$ in \eqref{equ:smooth-gp} is calculated as
\begin{align}
\cu{D}_{k+1} &= \cov\left[\cu{x}_{k},\cu{x}^\trans_{k+1}\mid \cu{y}_{1:k} \right] \nonumber\\
&= \iint\cu{x}_{k}\,\cu{x}^\trans_{k+1}\,p(\cu{x}_{k+1}\mid \cu{x}_{k})\\
&\hspace{0.7cm}\times p(\cu{x}_{k}\mid\cu{y}_{1:k})\diff\cu{x}_{k}\,\diff\cu{x}_{k+1}- \cu{m}_k\,(\cu{m}^-_{k+1})^\trans\nonumber\\
&\approx \int \cu{x}_{k}\,\cu{a}_M^\trans \, \mathcal{N}(\cu{x}_k\mid\cu{m}_{k}, \cu{P}_{k}) \diff \cu{x}_{k} - \cu{m}_k\,(\cu{m}^-_{k+1})^\trans\nonumber\\
&= \expec\left[\cu{x}_{k}\,\cu{a}_M^\trans \right]- \cu{m}_k\,(\cu{m}^-_{k+1})^\trans.\nonumber
\end{align}
In practice, it is more efficient to compute $\cu{D}_{k+1}$ in the prediction step of filtering, so that the sigma-point evaluations are shared if using numerical integration. 

The calculation of $\bm{\mu}_k$, $\cu{C}_k$, and $\cu{S}_k$, and the rest of the smoothing procedures are the same as in the general Gaussian filter and smoother \cite{sarkka_2013, sarkkaarnobook2019}. The complete procedures of TME Gaussian filter and smoother are formulated in Algorithms~\ref{alg:tme-filter} and \ref{alg:tme-smoother}, respectively.

\begin{myalgorithm}[TME Gaussian Filter] Starting from initial filtering condition $\cu{x}_0 \sim \mathcal{N}(\cu{x}_0\mid \cu{m}_0, \cu{P}_0)$, the equations of TME Gaussian filter for $k=1, 2, \ldots, T$ are as follows:\label{alg:tme-filter}
	\begin{itemize}[leftmargin=*]
		\item \text{Prediction}:
		\begin{align}
		\cu{m}_k^- &= \int  \cu{a}_{M} \, \mathcal{N}(\cu{x}_{k-1}\mid\cu{m}_{k-1}, \cu{P}_{k-1})\diff \cu{x}_{k-1},\nonumber\\
		\cu{P}^-_{k} &= \int  \left( \bm{\Sigma}_M + \cu{a}_M\,\cu{a}_M^\trans\right) \, \mathcal{N}(\cu{x}_{k-1}\mid\cu{m}_{k-1}, \cu{P}_{k-1})\diff \cu{x}_{k-1} \nonumber\\
		&\quad- \cu{m}_k^-\,(\cu{m}_k^-)^\trans.\label{equ:cd-taylor-pred}
		\end{align}
		\item \text{Update}:
		\begin{equation}
		\begin{split}
		\bm{\mu}_k &= \int \cu{h}(\cu{x}_{k})\, \mathcal{N}(\cu{x}_k\mid \cu{m}_k^-, \cu{P}_k^-)\diff\cu{x}_k, \\
		\cu{S}_k &= \int (\cu{h}(\cu{x}_{k}) - \bm{\mu}_k)\,(\cu{h}(\cu{x}_{k}) - \bm{\mu}_k)^\trans\\
		&\hspace{0.6cm} \times \mathcal{N}(\cu{x}_k\mid \cu{m}_k^-, \cu{P}_k^-)\diff\cu{x}_k, \\
		\cu{C}_k &= \int (\cu{x}_{k}-\cu{m}^-_k)\,(\cu{h}(\cu{x}_{k}) - \bm{\mu}_k)^\trans \\
		&\hspace{0.6cm}\times \mathcal{N}(\cu{x}_k\mid \cu{m}_k^-, \cu{P}_k^-)\diff\cu{x}_k, \\
		\cu{K}_k &= \cu{C}_k \, \cu{S}_k^{-1}, \\
		\cu{m}_k &= \cu{m}_k^- + \cu{K}_k \, (\cu{y}_k - \bm{\mu}_k), \\
		\cu{P}_k &= \cu{P}_k^- - \cu{K}_k \, \cu{S}_k \, \cu{K}_k^\trans.
		\end{split}
		\end{equation}
	\end{itemize}
\end{myalgorithm}
\begin{myalgorithm}[TME Gaussian Smoother] Starting from end smoothing condition $\cu{x}_T \sim \mathcal{N}(\cu{x}_T\mid \cu{m}^s_T, \cu{P}^s_T)$, the equations of TME Gaussian smoother for $k=T-1, T-2, \ldots, 1$ are as follows:\label{alg:tme-smoother}
	\begin{align}
	\cu{D}_{k+1} &= \int \cu{x}_{k}\,\cu{a}_M^\trans \, \mathcal{N}(\cu{x}_k\mid\cu{m}_{k}, \cu{P}_{k}) \diff \cu{x}_{k}- \cu{m}_k\,(\cu{m}^-_{k+1})^\trans, \nonumber\\
	\cu{G}_k &= \cu{D}_{k+1}\,\left(\cu{P}^-_{k+1}\right)^{-1}, \label{equ:cd-cross-cov}\\
	\cu{m}^s_k &= \cu{m}_k + \cu{G}_k\, \left(\cu{m}^s_{k+1}-\cu{m}^-_{k+1} \right), \nonumber\\
	\cu{P}^s_k &= \cu{P}_k +  \cu{G}_k\, \left(\cu{P}^s_{k+1}-\cu{P}^-_{k+1} \right)\,\cu{G}_k^\trans.\nonumber
	\end{align}
\end{myalgorithm}

The calculation of Gaussian integrals in Algorithms~\ref{alg:tme-filter} and \ref{alg:tme-smoother} is intractable for many non-linear integrands (i.e., $\cu{a}_M$, $\cu{B}_M$, and $\bm{\Sigma}_M$). Herein, we consider numerically approximating them by using quadrature and sigma-point methods,  such as Gauss--Hermite \cite{Kazufumi2000}, unscented transform \cite{julierUKF2004}, and spherical cubature method \cite{simonCKF2009}. It is also worth mentioning that as the sigma-point approximation is an operation of linearly weighted summation, the positive definiteness of $\cu{P}_k^-$ is inherited from $\bm{\Sigma}_M$ provided that the quadrature weights are positive. This is true for Gauss--Hermite quadrature, spherical cubature, and unscented transformation with suitable parameter selection. In this article we assume that such positive-weight integration rule is used. Thus in the following analysis, we are only concerned with the positive definiteness of $\bm{\Sigma}_M$.

\section{Theoretical Analysis on Taylor Moment Expansion Gaussian Filter and Smoother}
\label{sec:theoretical-analysis}
In this section, we analyse the positive definiteness of TME estimates and the stability of the TME Gaussian filter and smoother. Notice that in Lemma~\ref{lemma:general-covariance} and Section~\ref{sec:stability}, we consider an SDE of the form
\begin{equation}
\diff\cu{x}_t = \cu{f}(\cu{x}_t)\diff t + \cu{L}\diff\cu{W}_t,
\label{equ:sde-for-analysis}
\end{equation}
to simplify the analysis, where the drift function $\cu{f}(\cu{x}_t)$ is time homogeneous, and the dispersion function $\cu{L}$ is constant. As a consequence, $\bm{\Gamma}$ is positive semi-definite (p.s.d.) and the time partial differentiation operation $\partial (\cdot)/\partial t$ in the generator \eqref{equ:generator} disappears. In other theorems and propositions, the more general SDE~\eqref{equ:problem_formulation-dyn} is used.

\subsection{Positive Definiteness of Taylor Moments Expansion}
\label{sec:numerical_stability}
In the Gaussian filtering and smoothing context, it is essential for the covariance estimate to stay positive definite (p.d.). Unfortunately, this is not always true when using TME, as we truncate the full Taylor expansion \cite{iacus-sde-simu}. For example, by Algorithm~\ref{def:taylor-moment-estimator}, the second order TME covariance estimate of a one-dimensional SDE is
\begin{equation}
\Sigma_{2} = \Gamma\,\Delta t + f'\,\Gamma\,\Delta t^2, 
\label{equ:pd-example}
\end{equation}
where $f'$ is the derivative of the drift function. It is apparent that $\Sigma_{2}$ can be negative because $f'$ is not always positive. The estimate $\Sigma_{2}$ is positive definite if and only if the inequality $f'>-\frac{1}{\Delta t}$ holds. It implies that $f'$ has to be non-negative if one needs $\Sigma_{2}$ to be positive for all $\Delta t>0$.

In fact, we show that the positive definiteness of the TME covariance estimate is jointly determined by the model itself (SDE coefficients), the time interval $\Delta t$, and the expansion order $M$. 

\begin{theorem}\label{theorem:psd}
	The $M$-th order TME covariance estimate $\bm{\Sigma}_M$ is positive definite for $\Delta t$ on an interval $U\subseteq \R^+$, if 
	\begin{equation}
	\begin{split}
	P_M(\Delta t) &=\sum^M_{r=1}w_r \, \Delta t^r>0, 
	\end{split}
	\label{equ:lemma1-poly1}
	\end{equation}
	for all $\Delta t\in U$, where $w_r = \frac{1}{r!}\mineig\left( \bm{\Phi}_{\cu{x}_t, r}\right)$, and $\mineig(\cdot)$ denotes the minimum eigenvalue of a square matrix. The coefficients are
	\begin{equation}
	\bm{\Phi}_{\cu{x}_t, r} = \mathcal{A}^r \left( \cu{x}_t\,\cu{x}_t^\trans\right)  - \sum^r_{s=0}\binom{r}{s}\,\mathcal{A}^s\cu{x}_t\,(\mathcal{A}^{r-s}\cu{x}_t)^\trans, 
	\label{equ:psd-first}
	\end{equation}
	where $\binom{r}{s}$ denotes the binomial coefficient. 
\end{theorem}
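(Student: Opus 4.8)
The plan is to rewrite the truncated estimate $\bm{\Sigma}_M$ explicitly as a matrix-valued polynomial in $\Delta t$ whose degree-$r$ coefficient is exactly $\frac{1}{r!}\bm{\Phi}_{\cu{x}_t,r}$, and then to bound the associated quadratic form from below by $P_M(\Delta t)$ through a Rayleigh-quotient argument. Concretely, I would substitute the definitions $\cu{a}_M=\sum_{s=0}^M\frac{1}{s!}\mathcal{A}^s\cu{x}_t\,\Delta t^s$ and $\cu{B}_M=\sum_{r=0}^M\frac{1}{r!}\mathcal{A}^r(\cu{x}_t\cu{x}_t^\trans)\,\Delta t^r$ into $\bm{\Sigma}_M=\cu{B}_M-\cu{a}_M\cu{a}_M^\trans$, writing $\cu{x}_t$ for the fixed expansion point $\cu{x}_{k-1}$ to match the statement, and treating the result as a polynomial identity in $\Delta t$ for each fixed value of the state.

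First I would expand the outer product as a Cauchy product,
\[
\cu{a}_M\cu{a}_M^\trans = \sum_{s=0}^M\sum_{p=0}^M \frac{1}{s!\,p!}\,\mathcal{A}^s\cu{x}_t\,(\mathcal{A}^p\cu{x}_t)^\trans\,\Delta t^{s+p},
\]
and collect terms by the total power $\Delta t^{s+p}$. By Remark~\ref{remark:truncate}, $\bm{\Sigma}_M$ is truncated to degree $M$, so only powers $r=s+p\le M$ survive. The key observation is that for $r\le M$ the caps $s,p\le M$ are automatically satisfied whenever $s+p=r$ with $s,p\ge 0$; hence the degree-$r$ coefficient of $\cu{a}_M\cu{a}_M^\trans$ is the \emph{full} convolution $\sum_{s=0}^r\frac{1}{s!\,(r-s)!}\mathcal{A}^s\cu{x}_t\,(\mathcal{A}^{r-s}\cu{x}_t)^\trans=\frac{1}{r!}\sum_{s=0}^r\binom{r}{s}\mathcal{A}^s\cu{x}_t\,(\mathcal{A}^{r-s}\cu{x}_t)^\trans$, the binomial coefficient emerging after pulling out $1/r!$. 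Matching this against the degree-$r$ coefficient $\frac{1}{r!}\mathcal{A}^r(\cu{x}_t\cu{x}_t^\trans)$ of $\cu{B}_M$ yields precisely $\bm{\Sigma}_M=\sum_{r=0}^M\frac{1}{r!}\bm{\Phi}_{\cu{x}_t,r}\,\Delta t^r$ with $\bm{\Phi}_{\cu{x}_t,r}$ as in \eqref{equ:psd-first}. A quick check gives $\bm{\Phi}_{\cu{x}_t,0}=\cu{x}_t\cu{x}_t^\trans-\cu{x}_t\cu{x}_t^\trans=\cu{0}$, so the sum effectively starts at $r=1$.

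With this representation I would verify that each $\bm{\Phi}_{\cu{x}_t,r}$ is symmetric: the term $\mathcal{A}^r(\cu{x}_t\cu{x}_t^\trans)$ is symmetric because $x_ix_j=x_jx_i$, while the convolution term is invariant under $s\mapsto r-s$ combined with transposition since $\binom{r}{s}=\binom{r}{r-s}$. Symmetry makes $\mineig(\bm{\Phi}_{\cu{x}_t,r})$ and the Rayleigh bound well defined. Then for any nonzero $\cu{z}$ and any $\Delta t>0$, because each weight $\Delta t^r/r!$ is strictly positive I may bound each summand below by its smallest eigenvalue,
\begin{equation*}
\begin{split}
\cu{z}^\trans\bm{\Sigma}_M\,\cu{z} &= \sum_{r=1}^M\frac{\Delta t^r}{r!}\,\cu{z}^\trans\bm{\Phi}_{\cu{x}_t,r}\,\cu{z} \\
&\ge \sum_{r=1}^M\frac{\Delta t^r}{r!}\,\mineig(\bm{\Phi}_{\cu{x}_t,r})\,\norm{\cu{z}}^2 = P_M(\Delta t)\,\norm{\cu{z}}^2.
\end{split}
\end{equation*}
Consequently, if $P_M(\Delta t)>0$ for all $\Delta t\in U$, then $\cu{z}^\trans\bm{\Sigma}_M\,\cu{z}>0$ for every nonzero $\cu{z}$, i.e.\ $\bm{\Sigma}_M$ is positive definite on $U$.

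The main obstacle is the coefficient-matching bookkeeping in the second step: one must argue carefully that truncating $\cu{a}_M\cu{a}_M^\trans$ to degree $M$ reproduces, at each order $r\le M$, exactly the full Cauchy convolution, so that the clean binomial identity holds and no boundary terms from the $s,p\le M$ caps intrude. The eigenvalue bound that follows is routine once symmetry is established; the only subtlety worth flagging is that the \emph{same} $\cu{z}$ need not minimise every Rayleigh quotient $\cu{z}^\trans\bm{\Phi}_{\cu{x}_t,r}\,\cu{z}/\norm{\cu{z}}^2$ simultaneously, but this is harmless because each term is bounded below independently and the positive weights $\Delta t^r/r!$ preserve the inequality under summation.
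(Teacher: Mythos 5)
Your proposal is correct and follows essentially the same route as the paper's proof: both expand $\bm{\Sigma}_M$ via the Cauchy product and truncation into $\sum_{r=1}^M \frac{1}{r!}\bm{\Phi}_{\cu{x}_t,r}\,\Delta t^r$ and then lower-bound the minimum eigenvalue term by term. The only cosmetic difference is that the paper invokes Weyl's inequality by name where you prove the same bound directly with a Rayleigh-quotient argument.
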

\begin{proof}
	From Algorithm~\ref{def:taylor-moment-estimator}, we know that the $(u, v)$-th entry of $\bm{\Sigma}_M$ is
	\begin{equation}
	\begin{split}
	\left[ \Sigma_M\right]_{uv} &= \sum^M_{r=0}\frac{1}{r!}\mathcal{A}^r(x_ux_v)\Delta t^r \\
	&-\left(\sum^M_{r=0}\frac{1}{r!}\mathcal{A}^rx_u\Delta t^r\right)\left(\sum^M_{r=0}\frac{1}{r!}\mathcal{A}^rx_v\Delta t^r\right),
	\label{equ:temp-poly}
	\end{split}
	\end{equation}
	where $x_u$ and $x_v$ are the $(u, v)$-th component of $\cu{x}_t$. Using the Cauchy product of finite series, we get 
	\begin{equation}
	\begin{split}
	\left[ \Sigma_M\right]_{uv} &= \sum^M_{r=0}\left[ \frac{1}{r!}\mathcal{A}^r(x_ux_v) -\left(\sum^r_{s=0}\frac{\mathcal{A}^sx_u\,\mathcal{A}^{r-s}x_v}{s!(r-s)!}\right)\right] \Delta t^r \\
	&= \sum^M_{r=0}\frac{1}{r!}\left[\mathcal{A}^r(x_ux_v) - \sum^r_{s=0}\binom{r}{s}\,\mathcal{A}^sx_u\,\mathcal{A}^{r-s}x_v \right],
	\end{split}
	\label{equ:temp-poly2}
	\end{equation}
	where we truncate the polynomial~\eqref{equ:temp-poly} of $\Delta t$ up to degree $M$ (see Remark \ref{remark:truncate}). Equation~\eqref{equ:temp-poly2} can be rearranged into matrix form
	\begin{equation}
	\bm{\Sigma}_M = \sum^M_{r=1}\frac{1}{r!}\bm{\Phi}_{\cu{x}_t, r} \Delta t^r.
	\label{equ:lemma1-sum}
	\end{equation}
	From Algorithm~\ref{def:taylor-moment-estimator}, we know that $\bm{\Phi}_{\cu{x}_t, r}$ is symmetric for all $r=1, \ldots, M$, thus its eigenvalues are all real. Using Weyl's inequality \cite{horn-johnson-1991}, 
	\begin{equation}
	\mineig\left(\bm{\Sigma}_M \right) \geq \sum^M_{r=1}\frac{1}{r!}\mineig\left(\bm{\Phi}_{\cu{x}_t, r} \right) \Delta t^r. 
	\label{equ:lemma1-poly2}
	\end{equation}
	Thus if the polynomial on the right hand of \eqref{equ:lemma1-poly2} is strictly positive for all $\Delta t \in U$, then $\bm{\Sigma}_M$ is positive definite. 
\end{proof}

Equation~\eqref{equ:lemma1-sum} reveals that the covariance estimate $\bm{\Sigma}_M$ is a polynomial function of $\Delta t$ with coefficients formed by $\left\lbrace \bm{\Phi}_{\cu{x}_t, r}\colon r=1,\ldots, M\right\rbrace $. However, to explicitly find the minimum eigenvalue of $\bm{\Sigma}_M$ might be difficult. One way would be to force all of the coefficients $\bm{\Phi}_{\cu{x}_t, r}$ to be p.d. so that $\bm{\Sigma}_M$ is p.d., but it would restrict the model class significantly. The idea behind Theorem~\ref{theorem:psd} is to construct a lower bound for the minimum eigenvalue of $\bm{\Sigma}_M$ in terms of the eigenvalues of the coefficients matrices $\bm{\Phi}_{\cu{x}_t, r}$, without restricting them to be p.d. 

Directly showing the conditions of a polynomial function, such as \eqref{equ:lemma1-poly1}, to be positive might be challenging. A more useful way in the analysis is to find if it has no real roots on an interval. In this case, tools like Budan's theorem or Sturm's theorem may be useful to count the exact number of roots of a polynomial \cite{algebra-geometry-base}. 

In the following Proposition~\ref{propo:1-2}, we give applications of using Theorem~\ref{theorem:psd} for the positive definiteness of the first two order of TME expansion. 
\begin{proposition}
	\label{propo:1-2}
	Let $\bm{\Sigma}_1$ and $\bm{\Sigma}_2$ be the TME estimates of the covariance with expansion order $1$ and $2$, respectively. Then
	\begin{enumerate}
		\item $\bm{\Sigma}_2$ is p.d. for $\Delta t>0$ if $(\bm{\Gamma}\,\nabla\cu{f})^\trans  + \bm{\Gamma}\,\nabla\cu{f}$ and $\bm{\Gamma}$ are p.s.d., and one of $\bm{\Gamma}$ and $(\bm{\Gamma}\,\nabla\cu{f})^\trans  + \bm{\Gamma}\,\nabla\cu{f}$ is p.d.
		\item $\bm{\Sigma}_3$ is p.d. for $\Delta t>0$ if $\bm{\Phi}_{\cu{x}_t, 3}$ is p.s.d. and $\mineig(\bm{\Phi}_{\cu{x}_t, 2})>\frac{-2\sqrt{6}}{3}\sqrt{\mineig(\bm{\Phi}_{\cu{x}_t, 1})\,\mineig(\bm{\Phi}_{\cu{x}_t, 3})}$.
	\end{enumerate}
\end{proposition}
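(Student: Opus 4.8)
The plan is to reduce both claims to Theorem~\ref{theorem:psd}, so that the only model-specific work is (i) evaluating the low-order coefficient matrices $\bm{\Phi}_{\cu{x}_t,r}$ from \eqref{equ:psd-first}, and (ii) deciding when the associated scalar polynomial $P_M(\Delta t)=\sum_{r=1}^M w_r\,\Delta t^r$, with $w_r=\tfrac{1}{r!}\mineig(\bm{\Phi}_{\cu{x}_t,r})$, is strictly positive on $\R^+$. Since Theorem~\ref{theorem:psd} already guarantees that $P_M(\Delta t)>0$ on an interval forces $\bm{\Sigma}_M$ positive definite there, each part becomes a question about a polynomial of degree at most three.

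First I would compute $\bm{\Phi}_{\cu{x}_t,1}$ and $\bm{\Phi}_{\cu{x}_t,2}$ directly from \eqref{equ:psd-first}. Using $\mathcal{A}\cu{x}_t=\cu{f}$ and $\mathcal{A}(\cu{x}_t\cu{x}_t^\trans)=\cu{x}_t\cu{f}^\trans+\cu{f}\cu{x}_t^\trans+\bm{\Gamma}$, the $r=1$ term collapses to $\bm{\Phi}_{\cu{x}_t,1}=\bm{\Gamma}$. For $r=2$ I would apply $\mathcal{A}$ once more to form $\mathcal{A}^2\cu{x}_t$ and $\mathcal{A}^2(\cu{x}_t\cu{x}_t^\trans)$; after subtracting the binomial terms $\cu{x}_t(\mathcal{A}^2\cu{x}_t)^\trans+2\,\mathcal{A}\cu{x}_t(\mathcal{A}\cu{x}_t)^\trans+(\mathcal{A}^2\cu{x}_t)\cu{x}_t^\trans$, the products $\cu{f}\cu{f}^\trans$, $\cu{x}_t((\nabla\cu{f})\cu{f})^\trans$ and their transposes cancel pairwise, leaving $\bm{\Phi}_{\cu{x}_t,2}=(\bm{\Gamma}\nabla\cu{f})^\trans+\bm{\Gamma}\nabla\cu{f}$. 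With these in hand item~1 is immediate: since $\bm{\Phi}_{\cu{x}_t,1}=\bm{\Gamma}$, we have $P_2(\Delta t)=\Delta t\,\big(\mineig(\bm{\Gamma})+\tfrac{1}{2}\mineig(\bm{\Phi}_{\cu{x}_t,2})\,\Delta t\big)$, and the hypotheses that both $\bm{\Gamma}$ and $(\bm{\Gamma}\nabla\cu{f})^\trans+\bm{\Gamma}\nabla\cu{f}$ are positive semi-definite with at least one positive definite make both bracketed coefficients non-negative with one strictly positive, hence $P_2(\Delta t)>0$ for every $\Delta t>0$.

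For item~2 I would again factor out $\Delta t$ and study $P_3(\Delta t)=\Delta t\,Q(\Delta t)$ with $Q(\Delta t)=w_3\,\Delta t^2+w_2\,\Delta t+w_1$, where $w_1=\mineig(\bm{\Phi}_{\cu{x}_t,1})\ge 0$ (because $\bm{\Gamma}$ is p.s.d.) and $w_3=\tfrac{1}{6}\mineig(\bm{\Phi}_{\cu{x}_t,3})\ge 0$ (by the assumed p.s.d.\ of $\bm{\Phi}_{\cu{x}_t,3}$). Positivity of $P_3$ on $\R^+$ is equivalent to $Q>0$ there. When $w_2\ge 0$ this is automatic since all coefficients are non-negative; when $w_2<0$ the upward parabola $Q$ has its vertex at a positive abscissa and both roots, if real, are positive, so I would impose a strictly negative discriminant $w_2^2<4\,w_1 w_3$. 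Substituting $w_1=\lambda_1$, $w_2=\tfrac{1}{2}\lambda_2$, $w_3=\tfrac{1}{6}\lambda_3$ with $\lambda_r=\mineig(\bm{\Phi}_{\cu{x}_t,r})$ turns this into $\lambda_2^2<\tfrac{8}{3}\lambda_1\lambda_3$, i.e.\ $\lambda_2>-\tfrac{2\sqrt6}{3}\sqrt{\lambda_1\lambda_3}$; and because this one-sided bound is automatically satisfied whenever $\lambda_2\ge0$, it simultaneously covers the $w_2\ge0$ case, matching the stated condition.

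The main obstacle is the generator bookkeeping for $\bm{\Phi}_{\cu{x}_t,2}$: one must expand $\mathcal{A}^2(\cu{x}_t\cu{x}_t^\trans)$ and verify that every term not proportional to $\bm{\Gamma}$ (including the contributions of $\partial\cu{f}/\partial t$ and of the Hessian of $\cu{f}$) cancels against the binomial correction in \eqref{equ:psd-first}; in particular one has to confirm that for the state/time-independent diffusion assumed here no residual derivative-of-$\bm{\Gamma}$ term survives, so that $\bm{\Phi}_{\cu{x}_t,2}$ really equals the matrix written in item~1. The remaining difficulty in item~2 is the careful root/discriminant analysis: ensuring strictness (a double root at a positive $\Delta t$ would yield only semi-definiteness), checking the degenerate cases $\lambda_1=0$ or $\lambda_3=0$ where the bound collapses to $\lambda_2>0$ and $Q$ becomes linear, and pinning down the exact constant $\tfrac{2\sqrt6}{3}$.
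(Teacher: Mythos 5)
Your proposal is correct and follows essentially the same route as the paper: both rest on the representation $\bm{\Sigma}_M=\sum_{r=1}^M\frac{1}{r!}\bm{\Phi}_{\cu{x}_t,r}\Delta t^r$ with $\bm{\Phi}_{\cu{x}_t,1}=\bm{\Gamma}$ and $\bm{\Phi}_{\cu{x}_t,2}=(\bm{\Gamma}\nabla\cu{f})^\trans+\bm{\Gamma}\nabla\cu{f}$, handling item~1 by nonnegativity of the two coefficients (the paper argues via $\bm{\eta}^\trans\bm{\Sigma}_2\bm{\eta}$ rather than via $P_2$, which is equivalent) and item~2 by the no-positive-real-roots/discriminant condition on the cubic $P_3$, yielding the same constant $\tfrac{2\sqrt{6}}{3}$. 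Your extra care about the degenerate cases $\lambda_1=0$ or $\lambda_3=0$ and about strictness of the discriminant inequality is a welcome refinement of the paper's terser ``if and only if'' claim, but it does not change the argument.
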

\begin{proof}
	By Algorithm~\ref{def:taylor-moment-estimator}, 
	\begin{equation}
	\begin{split}
	\bm{\Sigma}_{2}&= \bm{\Phi}_{\cu{x}_t, 1}\Delta t + \frac{1}{2}\bm{\Phi}_{\cu{x}_t, 2}\Delta t^2\\
	&=\bm{\Gamma}\Delta t + \frac{1}{2} \left((\bm{\Gamma}\,\nabla\cu{f})^\trans  + \bm{\Gamma}\,\nabla\cu{f}  \right) \Delta t^2.
	\end{split}
	\end{equation}
	Thus $\bm{\eta}^\trans \,\bm{\Sigma}_2\,\bm{\eta} >0$ for any real non-zero vector $\bm{\eta}$ and $\Delta t >0$, if $\mineig(\bm{\Gamma})>0$ and $\mineig((\bm{\Gamma}\,\nabla\cu{f})^\trans  + \bm{\Gamma}\,\nabla\cu{f})\geq 0$ or $\mineig(\bm{\Gamma})=0$ and $\mineig((\bm{\Gamma}\,\nabla\cu{f})^\trans  + \bm{\Gamma}\,\nabla\cu{f})> 0$.
	
	For $\bm{\Sigma}_3$, by Theorem \ref{theorem:psd}, we have the polynomial 
	\begin{equation}
	P_3(\Delta t) = w_1 \Delta t + w_2\Delta t^2 + w_3\Delta t^3. 
	\end{equation}
	The polynomial $P_3(\Delta t)$ is positive and has no real roots for $\Delta t >0$, if and only if $w_2 > -2\sqrt{w_1w_3}$ and $w_3\geq 0$, which is equivalent to $\mineig(\bm{\Phi}_{\cu{x}_t, 2})>\frac{-2\sqrt{6}}{3}\sqrt{\mineig(\bm{\Phi}_{\cu{x}_t, 1})\,\mineig(\bm{\Phi}_{\cu{x}_t, 3})}$ and $\mineig(\bm{\Phi}_{\cu{x}_t, 3})\geq0$. It follows that $\bm{\Sigma}_3$ is p.d.
\end{proof}
\begin{remark}
	By considering all $\Delta t>0$, we have:
	\begin{enumerate}
		\item  A necessary condition for the $M$-th order TME covariance estimate to be p.d. is that $\bm{\Phi}_{\cu{x}_t, M}$ is p.s.d..
		\item The $1, 2, \ldots, M$ order TME covariance estimates are all p.d. if and only if $\left\lbrace \bm{\Phi}_{\cu{x}_t, r}\colon r=1,\ldots, M\right\rbrace $ are all p.s.d. and at least one of them is p.d. This can be proved by using recursion $\bm{\Sigma}_{M} = \bm{\Sigma}_{M-1} + \frac{\Delta t^M}{M!}\bm{\Phi}_{\cu{x}_t, M}$.
		\item In the limit $\Delta t\rightarrow 0$, the TME covariance estimate will always be p.d., provided $\bm{\Gamma}$ is p.d. 
	\end{enumerate} 
\end{remark}

\begin{example}
	\label{remark:benes}
	The TME variance estimate of SDE
	\begin{equation}
	\diff x_t = \tanh x_t \diff t + \diff W_t, 
	\end{equation}
	where $W_t$ is a standard Wiener process, is always p.d. This follows from Theorem~\ref{theorem:psd} and observing that $\Phi_{x, 1}=1>0$, $\Phi_{x, 2} = 1-\tanh^2 x_t \ge 0$, and $\left\lbrace \Phi_{x, r}=0\colon r\geq 3\right\rbrace$.
\end{example}

The coefficients $\bm{\Phi}_{\cu{x}_t, r}$, the expansion order $M$, and the time interval $\Delta t$ jointly determine the positive definiteness of TME covariance estimate. The properties of $\bm{\Phi}_{\cu{x}_t, r}$ are more of interest, as we usually have $M$ and $\Delta t$ fixed. Next, we show that $\bm{\Phi}_{\cu{x}_t, r}$ is only concerned with the SDE coefficients. 

\begin{lemma}
	\label{lemma:general-covariance}
	Consider the SDE~\eqref{equ:sde-for-analysis}. Let $\Phi_{\cu{x},r}^{u, v} \triangleq \left[\bm{\Phi}_{\cu{x}, r} \right]_{uv}$ be the $u$-th column and $v$-th row entry of $\bm{\Phi}_{\cu{x}, r}$. We denote $\alpha^u_{r} \triangleq \alpha_r(x_u) = \mathcal{A}^r(x_u)$, and partial derivative $\partial_i\alpha^u_{r} \triangleq \partial \alpha^u_{r} / \partial x_i$. Then a general expression of $\Phi_{\cu{x},r\geq1}^{u, v}$ is
	\begin{equation}
	\begin{split}
	\Phi_{\cu{x},r}^{u, v} &= \sum_{i, j}^D\sum^{r-1}_{s=0}\binom{r-1}{s}\left( \partial_i\alpha^u_s\,\partial_j\alpha^v_{r-s-1}\right)\Gamma_{ij} +\mathcal{A}\Phi_{\cu{x},r-1}^{u, v}\\ &=\sum^{r-1}_{s=0}\mathcal{A}^{s}\sum^{r-s-1}_{l=0}\binom{r-s-1}{l}\trace\left((\nabla\alpha^u_s)^\trans\,\nabla\alpha^v_{r-s-1-l}\,\bm{\Gamma}\right)
	\end{split}
	\label{equ:lemma-general-phi}
	\end{equation}
	starting from $\Phi^{u,v}_{\cu{x}, 0}=0$. 
\end{lemma}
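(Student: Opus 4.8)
The plan is to read the claim as a statement about how the generator $\mathcal{A}$ interacts with products, and then to solve a first-order recursion in the order $r$. The one tool that does all the work is the Leibniz (carré-du-champ) identity for $\mathcal{A}$: for the SDE~\eqref{equ:sde-for-analysis} the time derivative drops out of~\eqref{equ:generator}, and a direct expansion of $\partial_i(gh)$ and $\partial_i\partial_j(gh)$, followed by the symmetry $\Gamma_{ij}=\Gamma_{ji}$, gives
\[
\mathcal{A}(gh) = g\,\mathcal{A}h + h\,\mathcal{A}g + \mathcal{E}(g,h),
\]
where $\mathcal{E}(g,h)\triangleq\sum_{i,j}^D\Gamma_{ij}\,\partial_i g\,\partial_j h=\trace\left((\nabla g)^\trans\,\nabla h\,\bm{\Gamma}\right)$ is the associated bilinear form. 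First I would establish this identity; the only non-obvious point is that the two mixed second-derivative terms coincide after relabelling $i\leftrightarrow j$, which is exactly where symmetry of $\bm{\Gamma}$ enters.

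Next I would set $C_r\triangleq\Phi_{\cu{x},r}^{u,v}$, which by~\eqref{equ:psd-first} equals $\mathcal{A}^r(x_u x_v)-\sum_{s=0}^r\binom{r}{s}\alpha^u_s\,\alpha^v_{r-s}$ with $\alpha^u_s=\mathcal{A}^s x_u$, and apply $\mathcal{A}$ to $C_{r-1}$. Using the Leibniz identity on each product term gives $\mathcal{A}(\alpha^u_s\,\alpha^v_{r-1-s})=\alpha^u_s\,\alpha^v_{r-s}+\alpha^u_{s+1}\,\alpha^v_{r-1-s}+\mathcal{E}(\alpha^u_s,\alpha^v_{r-1-s})$. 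Summing against $\binom{r-1}{s}$, the two ``pure product'' families are shifted copies of one another; re-indexing the second by $s\mapsto s-1$ and merging, Pascal's identity $\binom{r-1}{s}+\binom{r-1}{s-1}=\binom{r}{s}$ collapses them into $\sum_{s=0}^r\binom{r}{s}\alpha^u_s\,\alpha^v_{r-s}$. Since $\mathcal{A}\,\mathcal{A}^{r-1}(x_u x_v)=\mathcal{A}^r(x_u x_v)$, the whole expression rearranges to
\[
C_r = \mathcal{A}C_{r-1} + \sum_{s=0}^{r-1}\binom{r-1}{s}\,\mathcal{E}(\alpha^u_s,\alpha^v_{r-1-s}),
\]
which, once $\mathcal{E}$ is written out as $\sum_{i,j}\Gamma_{ij}\,\partial_i\alpha^u_s\,\partial_j\alpha^v_{r-1-s}$, is exactly the first equality in~\eqref{equ:lemma-general-phi}.

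To obtain the second (explicit) equality I would unroll this recursion. The base case is $C_0=x_u x_v-x_u x_v=0$, matching $\Phi^{u,v}_{\cu{x},0}=0$, and a straightforward induction on $r$ gives $C_r=\sum_{p=0}^{r-1}\mathcal{A}^p\,G_{r-1-p}$ with $G_m\triangleq\sum_{s=0}^m\binom{m}{s}\,\mathcal{E}(\alpha^u_s,\alpha^v_{m-s})$; the induction step is immediate, since applying $\mathcal{A}$ to the representation of $C_{r-1}$ raises every power by one while the new term $G_{r-1}$ supplies the $p=0$ contribution. Relabelling the outer power index and the inner splitting index and re-expanding $\mathcal{E}$ in trace form then produces the stated closed form in~\eqref{equ:lemma-general-phi}.

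The substance here is combinatorial bookkeeping rather than analysis. The delicate step is the shift-and-merge that invokes Pascal's identity in the recursion, and, in the unrolling, keeping the power of $\mathcal{A}$ applied from the outside strictly distinct from the index that splits the two gradients inside $\mathcal{E}$. I expect the main obstacle to be aligning those indices correctly in the final trace expression, so I would cross-check the result against the first equality for small $r$ (for instance $C_1=\mathcal{E}(x_u,x_v)=\Gamma_{uv}$, i.e.\ $\bm{\Phi}_{\cu{x},1}=\bm{\Gamma}$, and the order-two term appearing in Proposition~\ref{propo:1-2}). All differentiability needed to apply $\mathcal{A}$ repeatedly is already guaranteed by the smoothness assumptions imposed for~\eqref{equ:deriv-of-moment}.
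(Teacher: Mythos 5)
Your proof is correct and follows essentially the same route as the paper's: the Leibniz identity you isolate as $\mathcal{E}(g,h)$ is exactly the product-rule expansion the paper carries out termwise in its induction on $\mathcal{A}^r(x_u x_v)$, Pascal's identity plays the same merging role, and the closed form is obtained by the same unrolling of the recursion $\Phi^{u,v}_{\cu{x},r} = \mathcal{A}\Phi^{u,v}_{\cu{x},r-1} + E_{r-1}$. Your insistence on keeping the outer power of $\mathcal{A}$ distinct from the inner splitting index is well placed: the correctly unrolled form carries $\nabla\alpha^u_l$ inside the trace, whereas the printed closed form in the lemma reuses $s$ there and thus conflates the two indices.
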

\begin{proof}
	See Appendix~\ref{sec:append-2}.
\end{proof}

Lemma~\ref{lemma:general-covariance} gives an explicit form of $\bm{\Phi}_{\cu{x}_t, r}$, which is shown to be the function of $\cu{f}$, $\bm{\Gamma}$, and their partial derivatives. It implies that once $M$ and $\Delta t$ are given, the positive definiteness of $\bm{\Sigma}_M$ fully depends on the SDE coefficients. The functions $\cu{f}$ and $\bm{\Gamma}$ have to satisfy certain properties for $\bm{\Sigma}_M$ to be positive definite.

\begin{example}
	\label{remark:general-1-4}
	Let us consider a one-dimensional It\^{o} process 
	\begin{equation}
	\diff x_t = f(x_t)\diff t + L\diff W_t,
	\end{equation} then by Lemma~\ref{lemma:general-covariance},
	\begin{equation}
	\begin{split}
	\Phi_{x, 0} &= 0,\\
	\Phi_{x, 1} &= \Gamma, \\
	\Phi_{x, 2} &= 2f'\Gamma, \\
	\Phi_{x, 3} &= 2(2(f')^2 + 2ff'' + f'''\Gamma)\Gamma, \\
	\Phi_{x, 4} &= ((9(f'')^2 + 6ff''''+16f'''f')\Gamma + 8(f')^3  \\
	&\quad+6f^2f'''+26ff''f'+\frac{24}{16}f'''''\Gamma^2)\Gamma,
	\end{split}
	\end{equation}
	where $\Gamma = L^2\,Q$, and $Q$ is the diffusion constant of the Wiener process $W_t$.
\end{example}

\subsection{Stability of TME Gaussian Filter and Smoother}
\label{sec:stability}
It is important and useful that the filter and smoother are in some sense \emph{stable}. Some classical stability results for linear Kalman filters can be found in~\cite{jazwinski2007stochastic,AndersonMoore1981} while more recent results on the stability of different types of non-linear Kalman filters have been analysed in~\cite{Reif1999,XiongZhangChan2006,Karvonen2018}.
In this section, we follow~\cite{Karvonen2018} and prove that the TME Gaussian filter and smoother are stable in the mean-square sense if a number of assumptions on the system and the sigma-point approximation, verifiable before the filter is run, are satisfied.
This means that we show that
\begin{equation*}
\sup_{k \geq 1} \expec\big[ \norm{\cu{x}_{k} - \cu{m}_k}^2 \big] < \infty,
\end{equation*}
and 
\begin{equation*}
\sup_{k \geq 1} \expec\big[ \norm{\cu{x}_{k} - \cu{m}^s_k}^2 \big] < \infty,
\end{equation*}
where the expectation is taken over all state and measurement trajectories.

For simplicity, we assume that $\cu{V}_k = \cu{V}$ for all $k \geq 1$, and that the measurement model is linear: $\cu{h}(\cu{x}_k)=\cu{H}\, \cu{x}_k$ for some matrix $\cu{H}$. Then the generic continuous-discrete SDE~\eqref{equ:sde-for-analysis} has the discretised form 
\begin{equation}
\begin{split}
\cu{x}_{k} &= \cu{a}(\cu{x}_{k-1}, \Delta t) + \bm{\tau}(\cu{x}_{k-1}, \Delta t) , \\
\cu{y}_k &= \cu{H}\,\cu{x}_k + \cu{v}_k,
\end{split}
\label{equ:stab-discrete}
\end{equation}
where $\bm{\tau}(\cu{x}_{k-1})\triangleq\bm{\tau}(\cu{x}_{k-1}, \Delta t)$ is a zero-mean random variable whose covariance $\cov[\bm{\tau}(\cu{x}_{k-1})]= \bm{\Sigma}(\cu{x}_{k-1}, \Delta t)$. Notice that we denote by $\cu{a}(\cu{x}_{k-1})\triangleq\cu{a}(\cu{x}_{k-1}, \Delta t)$ and $\bm{\Sigma}(\cu{x}_{k-1})\triangleq\bm{\Sigma}(\cu{x}_{k-1}, \Delta t)$ the \textit{exact} mean and covariance functions of $\cu{x}_t$, respectively. It follows that $\cu{a} = \cu{a}_M + \cu{R}_M$, where $\cu{R}_M(\cu{x}_{k-1})\triangleq \cu{R}_M(\cu{x}_{k-1}, \Delta t)$ is the Taylor remainder \eqref{equ:expansion-remainder}. The assumptions needed for the stability analysis of the system \eqref{equ:stab-discrete} are collected below in Assumption~\ref{assum:1}.
If $\cu{A}$ is a matrix, $\norm{\cu{A}}$ stands for the spectral norm in the following.

\begin{assumption}\label{assum:1}\leavevmode The following properties hold:
	\begin{enumerate}
		\item There are non-negative constants $C_M$, $\lambda_\tau$, and $\lambda_P$ such that $\sup_{ k \geq 1 } \norm{\cu{R}_M(\cu{x}_{k-1})}\leq C_M$ almost surely, $\sup_{k \geq 1} \expec[\tr(\bm{\Sigma}(\cu{x}_{k-1}))]\leq\lambda_\tau$, and $\sup_{k \geq 1} \expec [ \trace(\cu{P}_k) ]  \leq \lambda_P$.
		\item There is $C \geq 0$ such that
		\begin{equation*}
		\norm{ \cu{a}_M(\cu{x}) - \mathcal{S}_{\cu{m}, \cu{P}} ( \cu{a}_M) } \leq \norm{\nabla\cu{a}_M(\cu{x})}^2 \norm{ \cu{x} - \cu{m}}^2 + C \trace(\cu{P})
		\end{equation*}
		for any vectors $\cu{x}$ and $\cu{m}$ and any positive semi-definite matrix $\cu{P}$, where $\mathcal{S}_{\cu{m},\cu{P}}(\cu{g})$ stands for the sigma-point approximation of the Gaussian integral
		\begin{equation*}
		\int \cu{g}(\cu{x}) \,\mathcal{N}(\cu{x} \mid \cu{m}, \cu{P}) \diff \cu{x}.
		\end{equation*}
		\item There is $\lambda \geq 0$ such that $\sup_{k \geq 1} \norm{\cu{I} - \cu{K}_k \cu{H}} \leq \lambda$ almost surely and 
		\begin{equation*}
		\lambda_f^2 \triangleq \sup_{k \geq 1} \, \lambda^2 \sup_{\cu{x}} \, \norm{\nabla \cu{a}_M(\cu{x})}^2 < 1/4.
		\end{equation*}
	\end{enumerate}
	
\end{assumption}

\begin{theorem} \label{ref:thm-stability-filter} Suppose that Assumption \ref{assum:1} is satisfied. 
	Then the TME Gaussian filter for system \eqref{equ:stab-discrete} has
	\begin{equation*}
	\expec\left[ \norm{\cu{x}_{k} - \cu{m}_k}^2 \right] \leq (4\lambda_f^2)^k \trace(\cu{P}_0) + \frac{C_e}{1-4\lambda_f^2}
	\end{equation*}
	for all $k \geq 1$, where $C_e$ is defined in~\eqref{equ:stability-bound-const}.
\end{theorem}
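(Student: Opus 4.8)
The plan is to derive a one-step recursion for the mean-square error and then solve the resulting affine inequality. First I would exploit the linear measurement model: since $\cu{h}(\cu{x}_k) = \cu{H}\cu{x}_k$, the sigma-point update is exact in the mean, so $\bm{\mu}_k = \cu{H}\cu{m}_k^-$ and $\cu{m}_k = \cu{m}_k^- + \cu{K}_k(\cu{y}_k - \cu{H}\cu{m}_k^-)$. Substituting $\cu{y}_k = \cu{H}\cu{x}_k + \cu{v}_k$ from \eqref{equ:stab-discrete} yields the error recursion
\begin{equation*}
\cu{x}_k - \cu{m}_k = (\cu{I} - \cu{K}_k\cu{H})(\cu{x}_k - \cu{m}_k^-) - \cu{K}_k\cu{v}_k .
\end{equation*}
Writing $\cu{x}_k = \cu{a}(\cu{x}_{k-1}) + \bm{\tau}(\cu{x}_{k-1})$ with $\cu{a} = \cu{a}_M + \cu{R}_M$ and $\cu{m}_k^- = \mathcal{S}_{\cu{m}_{k-1}, \cu{P}_{k-1}}(\cu{a}_M)$, I then split the prediction error
\begin{equation*}
\cu{x}_k - \cu{m}_k^- = \cu{R}_M(\cu{x}_{k-1}) + \big(\cu{a}_M(\cu{x}_{k-1}) - \mathcal{S}_{\cu{m}_{k-1}, \cu{P}_{k-1}}(\cu{a}_M)\big) + \bm{\tau}(\cu{x}_{k-1})
\end{equation*}
into the Taylor remainder (bounded in norm by $C_M$ via Assumption~\ref{assum:1}(1)), the sigma-point integration error (controlled by Assumption~\ref{assum:1}(2)), and the conditionally zero-mean fluctuation $\bm{\tau}$ with $\cov[\bm{\tau}] = \bm{\Sigma}$ and $\expec[\tr\bm{\Sigma}] \le \lambda_\tau$.

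Next I would take squared norms and expectations. Because $\cu{v}_k$ is independent of the past $\sigma$-algebra $\mathcal{F}_{k-1}$ and of $\cu{x}_k$ while $\cu{K}_k$ is $\mathcal{F}_{k-1}$-measurable, and because $\bm{\tau}(\cu{x}_{k-1})$ is zero-mean conditionally on $\cu{x}_{k-1}$, the cross terms involving $\cu{v}_k$ and $\bm{\tau}$ vanish. Using $\norm{\cu{I}-\cu{K}_k\cu{H}} \le \lambda$ from Assumption~\ref{assum:1}(3) this leaves $\expec[\norm{\cu{x}_k - \cu{m}_k}^2] \le \lambda^2\expec[\norm{\cu{x}_k-\cu{m}_k^-}^2] + \expec[\norm{\cu{K}_k\cu{v}_k}^2]$. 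Separating the $\bm{\tau}$ term exactly and bounding the remaining two terms with $\norm{\cu{u}+\cu{v}}^2 \le 2\norm{\cu{u}}^2 + 2\norm{\cu{v}}^2$ --- applied once to split the remainder from the integration error and once inside the bound of Assumption~\ref{assum:1}(2) --- produces the coefficient $4\,\lambda^2\sup_{\cu{x}}\norm{\nabla\cu{a}_M(\cu{x})}^2 = 4\lambda_f^2$ on $\expec[\norm{\cu{x}_{k-1}-\cu{m}_{k-1}}^2]$. All remaining contributions ($C_M$, $\lambda_\tau$, the $\tr\cu{P}_{k-1}$ terms bounded through $\lambda_P$, and the measurement-noise term $\expec[\norm{\cu{K}_k\cu{v}_k}^2]$, finite because $\cu{V}\succ\cu{0}$ gives a uniform bound on $\cu{S}_k^{-1}$ and hence on $\cu{K}_k$) are uniformly bounded in $k$ and collect into $C_e$, yielding
\begin{equation*}
\expec[\norm{\cu{x}_k - \cu{m}_k}^2] \le 4\lambda_f^2\,\expec[\norm{\cu{x}_{k-1}-\cu{m}_{k-1}}^2] + C_e .
\end{equation*}
Since $4\lambda_f^2 < 1$ by Assumption~\ref{assum:1}(3), I would unroll this affine recursion from $\expec[\norm{\cu{x}_0-\cu{m}_0}^2] = \tr(\cu{P}_0)$ and bound the geometric sum $\sum_{j\ge 0}(4\lambda_f^2)^j$ by $1/(1-4\lambda_f^2)$ to obtain the claimed estimate.

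The main obstacle is twofold. First, carefully justifying that the cross terms vanish: this hinges on the $\mathcal{F}_{k-1}$-measurability of $\cu{K}_k$, the conditional zero-mean property of $\bm{\tau}$ and $\cu{v}_k$, and the fact that the expectation is taken over all state and measurement trajectories. Second, tracking the elementary inequalities so that the contraction constant is exactly $4\lambda_f^2$ and not larger --- the strict bound $4\lambda_f^2 < 1$ is precisely what renders the recursion contractive and the supremum over $k$ finite. Verifying that every residual term is genuinely uniformly bounded in $k$, so that $C_e$ is finite, is where Assumption~\ref{assum:1}(1) does the essential work.
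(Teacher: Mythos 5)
Your proposal is correct and matches the paper's proof in all essentials: the same error decomposition into the sigma-point integration error, the Taylor remainder $\cu{R}_M$, the process fluctuation $\bm{\tau}$, and the measurement noise $\cu{K}_k\cu{v}_k$, the same use of Assumption~\ref{assum:1} to bound each piece, and the same affine recursion $\expec[\norm{\cu{x}_k-\cu{m}_k}^2]\leq 4\lambda_f^2\,\expec[\norm{\cu{x}_{k-1}-\cu{m}_{k-1}}^2]+C_e$ closed by a geometric sum (the paper invokes the discrete Gr\"{o}nwall inequality at this step). The only difference is cosmetic: the paper obtains the factor $4$ by applying $(a_1+\cdots+a_4)^2\leq 4(a_1^2+\cdots+a_4^2)$ to all four terms at once, whereas you let the cross terms involving $\cu{v}_k$ and $\bm{\tau}$ vanish by conditioning and apply the two-term inequality only to the remaining pieces, which if anything yields a slightly sharper contraction constant that still implies the stated bound.
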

\begin{proof}
It is easy to see that~\cite[Proof of Theorem~IV.3]{Karvonen2018}
	\begin{equation*}
	\kappa \triangleq \sup_{k \geq 1} \, \norm{\cu{K}_k} \leq \lambda_P \norm{\cu{H}} \norm{\cu{V}^{-1}}.
	\end{equation*}	
	Denote $\cu{A}_k = \cu{I}-\cu{K}_k \cu{H}$.
	Using the discretised system \eqref{equ:stab-discrete}, the filtering error can be written as
	\begin{equation*}
	\begin{split}
	\cu{e}_k \triangleq{}& \cu{x}_{k} - \cu{m}_k \\
	={}& \cu{a}(\cu{x}_{k-1}) + \bm{\tau}(\cu{x}_{k-1}) - \cu{m}_k^- - \cu{K}_k\,( \cu{y}_k - \cu{H} \,\cu{m}_k^-) \\
	={}& \cu{A}_k\, \big[ \cu{a}(\cu{x}_{k-1}) - \mathcal{S}_{\cu{m}_{k-1}, \cu{P}_{k-1}}(\cu{a}_M) \big] \\
	&+ (\cu{I} - \cu{K}_k \,\cu{H}) \,\bm{\tau}(\cu{x}_{k-1}) - \cu{K}_k \,\cu{v}_k \\
	={}& \cu{A}_k \, \big[ \cu{a}_M(\cu{x}_{k-1}) - \mathcal{S}_{\cu{m}_{k-1}, \cu{P}_{k-1}}(\cu{a}_M) \big] \\
	&+ \cu{A}_k \, \cu{R}_M(\cu{x}_{k-1}) + \cu{A}_k \, \bm{\tau}(\cu{x}_{k-1}) - \cu{K}_k \, \cu{v}_k.
	\end{split}
	\end{equation*}
	The inequality $(a_1 + \cdots + a_n)^2 \leq n (a_1^2 + \cdots + a_n^2)$ gives
	\begin{equation} \label{equ:ek-inequality}
	\begin{split}
	\expec\left[ \norm{\cu{e}_k}^2 \right] \leq{}& 4 \expec\left[ \norm[\big]{ \cu{A}_k \,\left[ \cu{a}_M(\cu{x}_{k-1}) - \mathcal{S}_{\cu{m}_{k-1}, \cu{P}_{k-1}}(\cu{a}_M) \right] }^2 \right] \\
	&+ 4 \expec\big[ \norm{ \cu{A}_k \,\cu{R}_M(\cu{x}_{k-1}) }^2 \big] \\
	&+ 4\expec\big[ \norm{\cu{A}_k \, \bm{\tau}(\cu{x}_{k-1})}^2 \big] + 4\expec\big[ \norm{ \cu{K}_k \, \cu{v}_k }^2 \big].
	\end{split}
	\end{equation}
	Assumption~\ref{assum:1} yields the following bounds:
	\begin{align*}
	\expec\Big[ \norm[\big]{\cu{A}_k \big[ \cu{a}_M(\cu{x}_{k-1}) - \mathcal{S}_{\cu{m}_{k-1}, \cu{P}_{k-1}}&(\cu{a}_M) \big]}^2\Big]  \\
	&\leq \lambda_f^2 \expec\big[\norm{ \cu{e}_{k-1}}^2\big] + C \lambda^2 \lambda_P, \\
	\expec\big[\norm{ \cu{A}_k \, \cu{R}_M(\cu{x}_{k-1}) }^2\big] &\leq C_M^2 \lambda^2, \\
	\expec\big[ \norm{\cu{A}_k \, \bm{\tau}(\cu{x}_{k-1})}^2 \big] &\leq \lambda^2 \lambda_{\tau}, \\
	\expec\big[ \norm{ \cu{K}_k \, \cu{v}_k}^2 \big] &\leq \trace(\cu{V}) \, \kappa^2.
	\end{align*}
	Upon insertion of these estimates into~\eqref{equ:ek-inequality} we get the recursive mean-square error inequality
	\begin{equation*}
	\expec\big[ \norm{\cu{e}_k}^2 \big] \leq 4\lambda_f^2\, \expec\big[ \norm{ \cu{e}_{k-1}}^2 \big] + C_e,
	\end{equation*}
	where
	\begin{equation} \label{equ:stability-bound-const}
	C_e = 4 \big( \lambda^2 [C \lambda_P + C_M^2 + \lambda_\tau] + \trace(\cu{V}) \kappa^2 \big).
	\end{equation}
	Because we have assumed that $4\lambda_f^2 < 1$, the claim then follows from the discrete Gr\"{o}nwall's inequality (e.g.,~\cite[Theorem~IV.2]{Karvonen2018}).
\end{proof}

The Assumption~\ref{assum:1} postulates conditions on the sigma-point approximations and systems. A trivial example to satisfy the assumptions is that the drift function $\cu{f}$ is smooth enough and all of its partial derivatives up to certain orders are uniformly bounded. More practical examples that satisfy Assumption~\ref{assum:1} can be found in~\cite{Karvonen2018}. It is typically necessary that the measurement model matrix $\cu{H}$ is a scaled identity matrix and the discretised dynamics $\cu{a}$ in~\eqref{equ:stab-discrete} defines an exponentially stable system.

\begin{proposition}\label{prop:smoother}
Suppose that Assumption~\ref{assum:1} is satisfied. Then
\begin{equation*}
  \begin{split}
  \expec\big[ \norm{\cu{x}_k - \cu{m}^s_k}^2 \big] \leq{}& 2\trace( \cu{P}_0 ) + \frac{2 C_e}{1-4\lambda_f^2} \\
  &+ 2 \, \max_{ 1 \leq i \leq k} \, \expec \big[ \norm{\cu{G}_i}^2 \norm{\cu{m}^s_{i+1} - \cu{m}^-_{i+1}}^2 \big].
  \end{split}
\end{equation*}
for any $k \geq 1$, where $C_e$ is defined in~\eqref{equ:stability-bound-const}.
\end{proposition}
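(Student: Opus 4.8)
The plan is to reduce the smoother error bound to the filter error bound already established in Theorem~\ref{ref:thm-stability-filter} via a single-step decomposition, exploiting the fact that the smoother mean update in Algorithm~\ref{alg:tme-smoother} relates $\cu{m}^s_k$ to the filter mean $\cu{m}_k$ and a correction term driven by $\cu{m}^s_{k+1} - \cu{m}^-_{k+1}$. No backward recursion over the smoothing index is needed, because that update already expresses the smoothing error in terms of quantities we can control.

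First I would use the mean recursion $\cu{m}^s_k = \cu{m}_k + \cu{G}_k\,(\cu{m}^s_{k+1} - \cu{m}^-_{k+1})$ to write the smoothing error as
\[
\cu{x}_k - \cu{m}^s_k = (\cu{x}_k - \cu{m}_k) - \cu{G}_k\,(\cu{m}^s_{k+1} - \cu{m}^-_{k+1}).
\]
Applying the elementary inequality $(a_1 + a_2)^2 \leq 2(a_1^2 + a_2^2)$ (the $n=2$ instance of the bound used in the proof of Theorem~\ref{ref:thm-stability-filter}) and taking expectations then splits the mean-square smoothing error into a filter-error part $2\,\expec[\norm{\cu{x}_k - \cu{m}_k}^2]$ and a gain-driven part $2\,\expec[\norm{\cu{G}_k\,(\cu{m}^s_{k+1} - \cu{m}^-_{k+1})}^2]$.

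For the filter-error part I would invoke Theorem~\ref{ref:thm-stability-filter} directly and use that $4\lambda_f^2 < 1$ by Assumption~\ref{assum:1}, so that $(4\lambda_f^2)^k \leq 1$ for every $k \geq 1$; this yields the uniform bound $\expec[\norm{\cu{x}_k - \cu{m}_k}^2] \leq \trace(\cu{P}_0) + C_e/(1-4\lambda_f^2)$, which after multiplying by $2$ contributes the first two terms of the claim. For the gain-driven part, submultiplicativity of the spectral norm gives $\norm{\cu{G}_k\,(\cu{m}^s_{k+1} - \cu{m}^-_{k+1})}^2 \leq \norm{\cu{G}_k}^2\,\norm{\cu{m}^s_{k+1} - \cu{m}^-_{k+1}}^2$, and since the index $k$ lies in $\{1,\ldots,k\}$, the resulting expectation is at most $\max_{1 \leq i \leq k}\expec[\norm{\cu{G}_i}^2\,\norm{\cu{m}^s_{i+1} - \cu{m}^-_{i+1}}^2]$; multiplying by $2$ produces the final term and closes the inequality.

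The only substantive ingredient is the filter stability result, so the main (and fairly mild) obstacle is simply verifying that its bound can be rendered uniform in $k$ through the $(4\lambda_f^2)^k \leq 1$ step. The one point worth emphasising is conceptual rather than technical: unlike the filter term, the correction $\cu{m}^s_{k+1} - \cu{m}^-_{k+1}$ cannot be bounded by the problem constants alone without further assumptions on the smoothing gains $\cu{G}_i$, which is precisely why it is retained inside the maximum rather than absorbed into a closed-form constant.
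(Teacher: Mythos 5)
Your proof is correct and follows essentially the same route as the paper: decompose the smoothing error via the backward mean recursion into the filter error plus the gain-driven correction, apply the $(a_1+a_2)^2 \leq 2(a_1^2+a_2^2)$ inequality, and bound the filter term uniformly using Theorem~\ref{ref:thm-stability-filter} together with $(4\lambda_f^2)^k \leq 1$. The only difference is that you spell out the submultiplicativity and the passage to the maximum over $i$, which the paper leaves implicit.
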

\begin{proof} 
  The inequality $(a_1 + \cdots + a_n)^2 \leq n(a_1^2 + \cdots + a_n^2)$ and the smoother recursion yield
  \begin{equation*}
    \expec\big[ \norm{\cu{x}_k - \cu{m}^s_k}^2 \big] \leq 2 \, \expec\big[ \norm{\cu{e}_k}^2 \big]  + 2 \, \expec \big[ \norm{\cu{G}_k}^2 \norm{\cu{m}^s_{k+1} - \cu{m}^-_{k+1}}^2 \big].
  \end{equation*}
  The claim now follows from Theorem~\ref{ref:thm-stability-filter}.
\end{proof}

Proposition~\ref{prop:smoother} implies boundedness of the smoothing error, for example, when $\norm{\cu{G}_k}$ is bounded almost surely, and $\expec\left[ \norm{\cu{m}^s_{k+1} - \cu{m}^-_{k+1}}^2 \right]$ is bounded. 

\section{Numerical Experiments}
\label{sec:numerical-experiments}
To examine the effectiveness of the TME estimator in Algorithm~\ref{def:taylor-moment-estimator}, we first conduct experiments on the moment estimation of SDEs. After that, we finally examine and compare the accuracy and numerical stability of the proposed TME Gaussian filter and smoother with the state-of-the-art methods. 

\subsection{Moment Estimation of SDEs}
\label{sec:experiment-moments}
In this part, we consider two non-linear SDEs:
\begin{equation}
\diff x_t = \tanh x_t \diff t + \diff W_t,  \label{equ:tanh}
\end{equation} 
and
\begin{equation}
\diff x_t = -a^2\sin x_t\,\cos^3x_t \diff t + a\cos^2x_t \diff W_t, 
\label{equ:sincos3}
\end{equation}
where $W_t$ is a standard Wiener process. The aim is to compare the estimates of the moments of the transition densities. The true mean and covariance are estimated using Monte Carlo (MC) sampling with $10^6$ independent trajectories. We simulated the samples from the models using Euler--Maruyama with sufficiently small time interval ($10^{-5}$~s). The initial conditions for model \eqref{equ:tanh} and \eqref{equ:sincos3} were $x_0=0.5$ and $x_0=1$, respectively. The estimates were examined in the time interval $T = 0$~s to $T = 5$~s.

We chose the following methods as described in Sections~\ref{sec:moment-odes} and~\ref{sec:moment-ito-taylor} to compare with the TME method:
\begin{itemize}
	\item the ODE approach by solving~\eqref{equ:dmdt_mPP} using Gaussian assumption, 4th order Runge--Kutta solver, and 3rd order Gauss--Hermite integration (Gauss-RK4);
	\item the ODE approach by solving~\eqref{equ:dmdt_mPP} using linearisation, and 4th order Runge--Kutta solver (Linear-RK4);
	\item the Ito--Taylor strong order 1.5 based approach from~\cite{ckdIndian2010, peter-koleden-book} (It\^{o}-1.5).
\end{itemize}

\begin{figure}[h!]
	\centering
	\includegraphics[width=.7\linewidth]{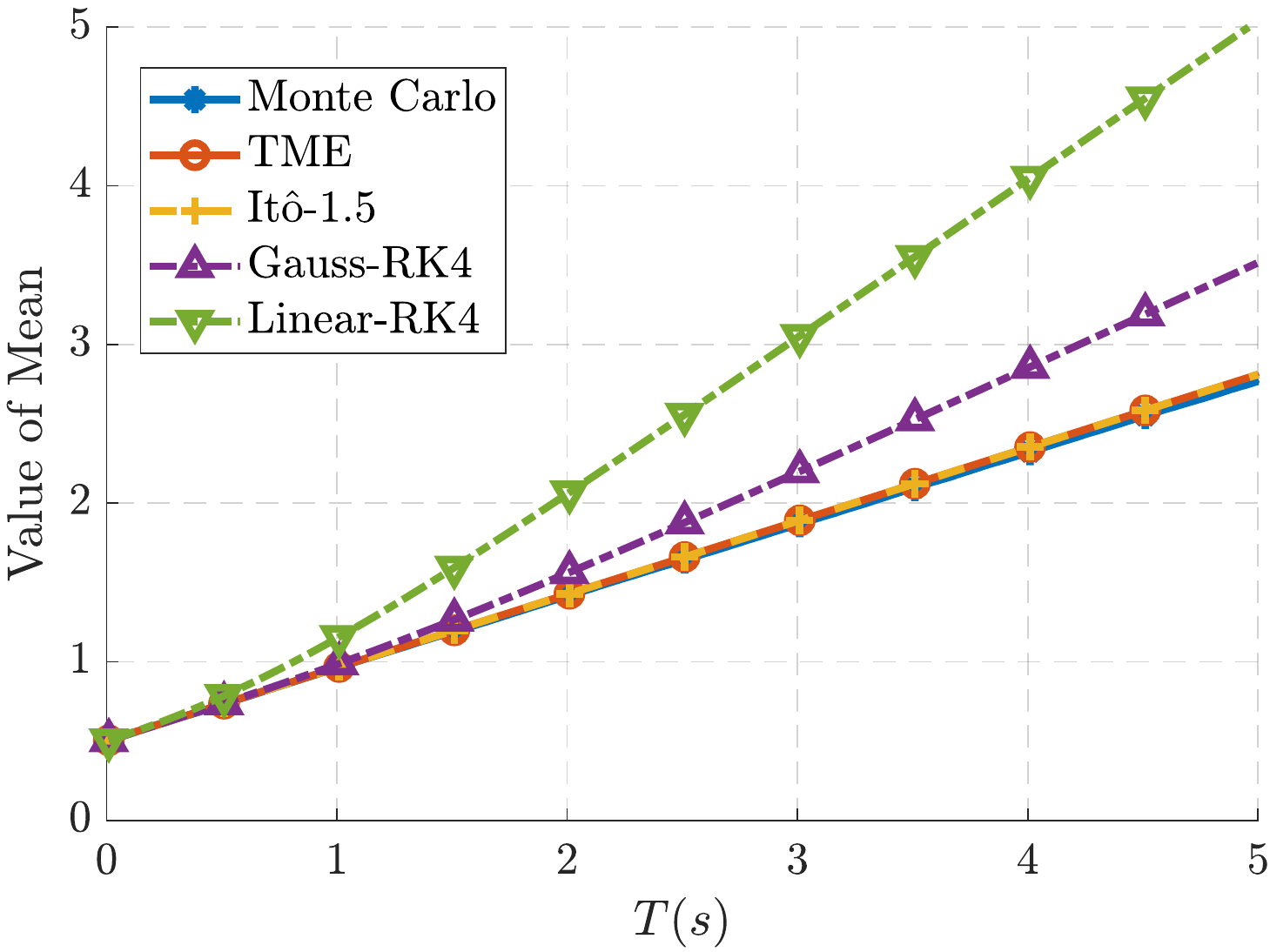}
	\includegraphics[width=.7\linewidth]{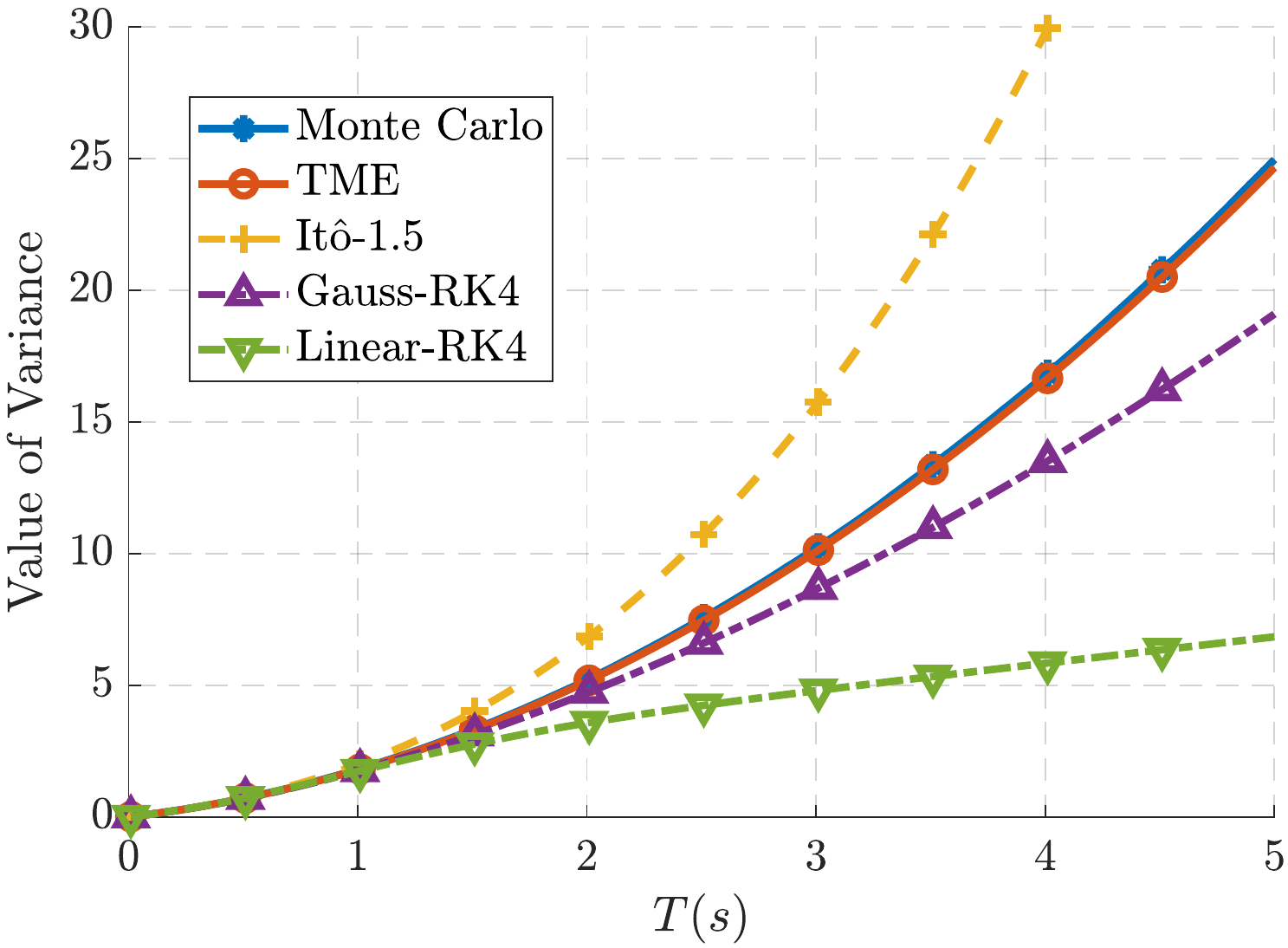}
	\caption{The mean and variance estimates of model \eqref{equ:tanh}. }
	\label{fig:moment-benes}
\end{figure}

For the first model \eqref{equ:tanh}, in Figure~\ref{fig:moment-benes}, we show the mean and variance estimates as functions of time. We observe that the It\^{o}-1.5 and (the second order) TME methods coincide for the estimation of mean function, and are closest to the Monte Carlo result. This is because their formulations for this model are identical, and exact to the true mean of \eqref{equ:tanh}. The Gauss-RK4 and Linear-RK4 can only estimate the mean accurately within short time intervals, while Gauss-RK4 is slightly better than Linear-RK4. 

When estimating the variance of \eqref{equ:tanh}, only the TME method succeeds to follow the Monte Carlo result closely. The variance estimate yield by TME is $\Delta t + (1-\tanh^2 x_0)\Delta t^2$, which is exact to this model (see, Example~\ref{remark:benes}). The Gauss-RK4, Linear-RK4, and It\^{o}-1.5 all deviate from Monte Carlo for long time intervals. 

For the second model \eqref{equ:sincos3}, we obtain the Monte Carlo samples by using its explicit solution $x_t = \atan(aW_t + \tan(x_0))$ directly rather than Euler--Maruyama~\cite{peter-koleden-book}. The parameter $a$ controls the non-linearity of the model, where we choose $a=1.5$ in this experiment. The It\^{o}-1.5 method is not applicable for this model, as the dispersion term depends on $x_t$, thus we adopt scalar Milstein's method instead. 
\begin{figure}[h!]
	\centering
	\includegraphics[width=.49\linewidth]{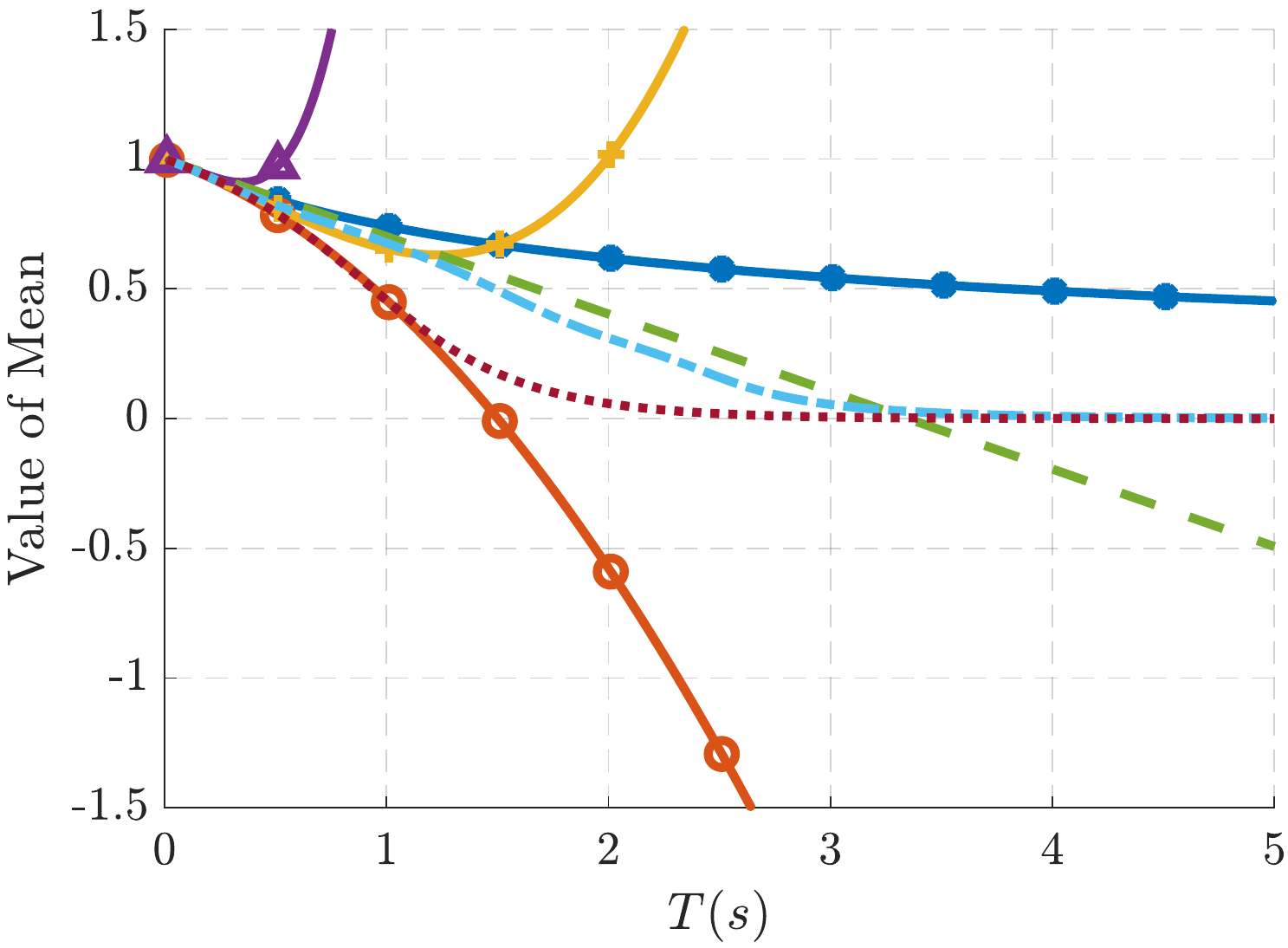}
	\includegraphics[width=.49\linewidth]{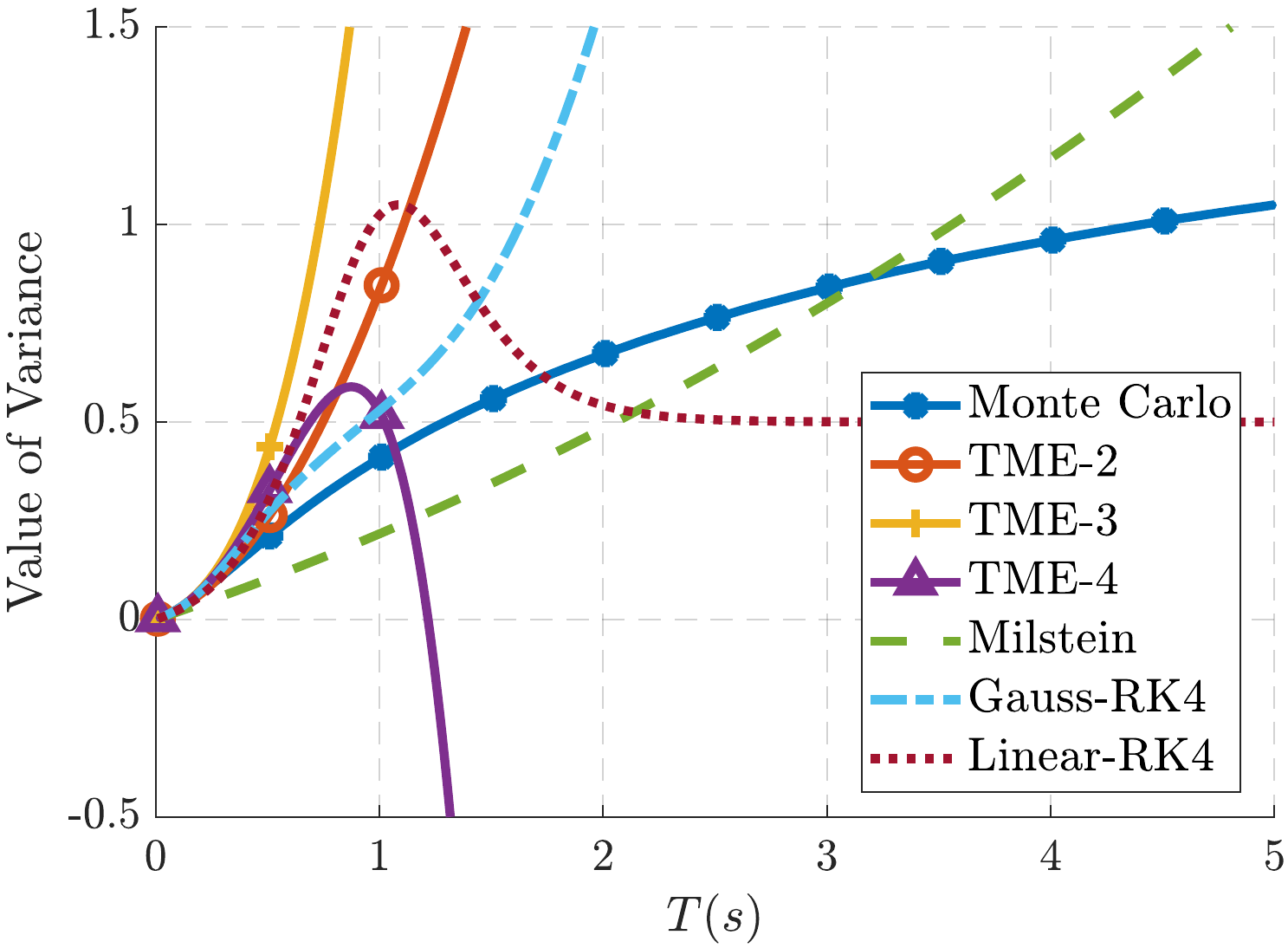}
	\caption{The mean and covariance estimates of model \eqref{equ:sincos3}.}
	\label{fig:moment-sinx-value}
\end{figure}
\begin{figure}[h!]
	\centering
	\includegraphics[width=.49\linewidth]{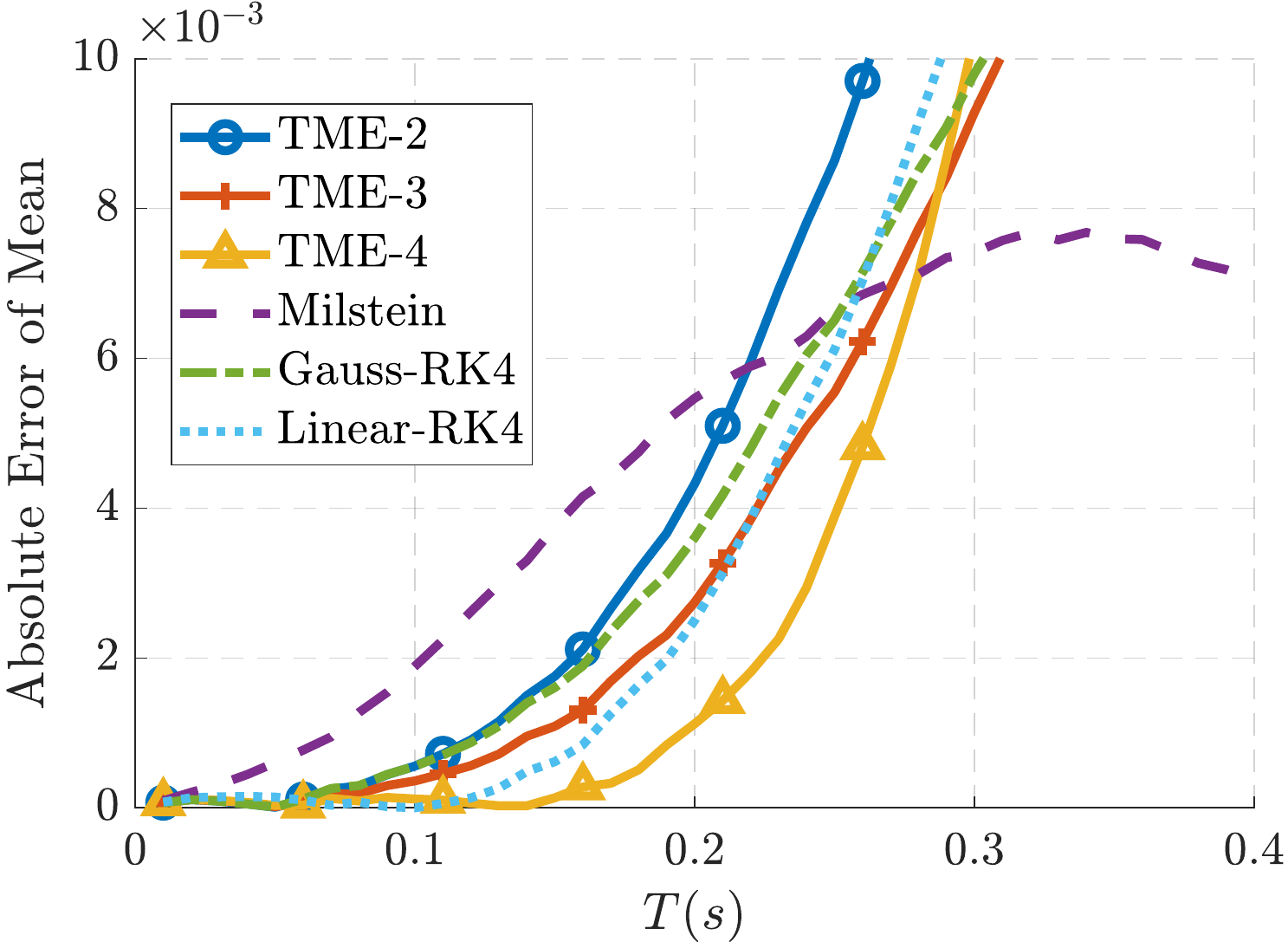}
	\includegraphics[width=.49\linewidth]{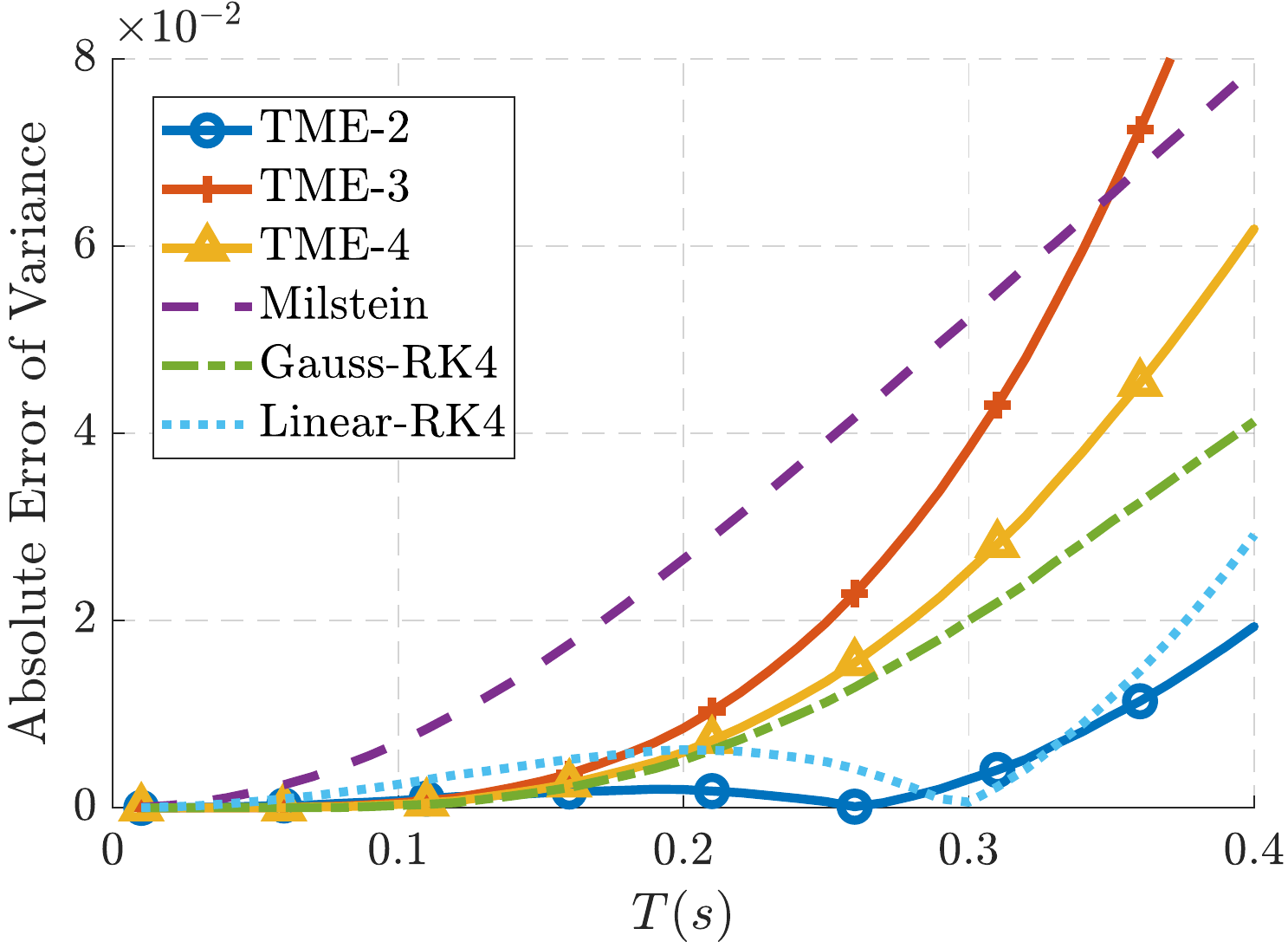}
	\caption{The absolute error of the mean and variance estimates of model \eqref{equ:sincos3}. }
	\label{fig:moment-sinx-error}
\end{figure}

Figure~\ref{fig:moment-sinx-value} shows the mean and variance estimates estimates of model~\eqref{equ:sincos3}. We find that although the Gauss-RK4 and Linear-RK4 follow the Monte Carlo mean estimate well, they still have large deviation. For the variance estimate, none of the methods works well with long time intervals. The TME estimates with different orders (TME-2, TME-3, and TME-4) diverge from around $T=1$~s, which is expected, as they are polynomial type of estimates. For long time intervals, the variance estimates by all these methods fail to different extents. 

Notice that the variance estimates of TME in Figure~\ref{fig:moment-sinx-value} are also the values of polynomial~\eqref{equ:lemma1-poly1} in Theorem~\ref{theorem:psd}. It can be observed that the TME-2 (second order) and TME-3 estimates of variance are always positive definite for $\Delta t>0$, while TME-4 will turn to negative from around $\Delta t>1$.

Because of the non-linearity of this model, it is more demonstrative to show the absolute errors of the estimates in a short time interval ($T=0$~s to $T=0.4$~s), as given in Figure~\ref{fig:moment-sinx-error}. The absolute error of each estimate is calculated with respect to the Monte Carlo simulation. We clearly see that by increasing the TME order, the estimates of mean and variance get better. The TME-4 outperforms other methods for the mean estimate, while TME-2 is surprisingly the best for estimation of the variance. 

\subsection{3D Coordinate Turn Tracking}
\label{sec:3d-tracking}
In this part, we conduct Gaussian filtering and smoothing on a 3D coordinated turn model. Performing filtering and smoothing on this model is considered challenging due to its non-linearities and high dimensionality \cite{barshalomBook2002, ckdIndian2010}. The model is given by
\begin{subequations}
	\begin{align}
	\diff \cu{x}_t &= \cu{f}(\cu{x}_t)\diff t + \cu{L}\diff \cu{W}_t, \label{equ:3d-coor-model-dynamic}\\
	\cu{y}_k &= \begin{bmatrix}
	\sqrt{p_x^2 + p_y^2 + p_z^2} \\
	\tan^{-1}(p_y / p_x) \\
	\tan^{-1} (p_z / \sqrt{p_x^2 + p_y^2})
	\end{bmatrix}+ \bm{\epsilon}_k, 
	\label{equ:3d-coor-model-measur}
	\end{align}
\end{subequations}
where the state $\cu{x}_t = \begin{bmatrix}
p_x & v_x & p_y & v_y & p_z & v_z & \theta
\end{bmatrix}^\trans$ and 
\begin{equation}
\begin{split}
\cu{f}(\cu{x}_t) &= \begin{bmatrix}
v_x & -\theta v_y & v_y & \theta v_x & v_z & 0 & 0
\end{bmatrix}^\trans, \\
\cu{L} &= \diag \begin{bmatrix}
0& \sigma_1& 0& \sigma_1& 0& \sigma_1& \sigma_2
\end{bmatrix}.
\end{split}
\end{equation}
In this model, we denote by $p_x$, $p_y$, and $p_z$ the position of the target in Cartesian coordinates, and $v_x$, $v_y$, and $v_z$ are the corresponding velocities. State $\theta$ governs the turning rate of the target which controls the non-linearity of this model. In addition, $\cu{W}_t$ is a standard Wiener process. The discrete measurement model \eqref{equ:3d-coor-model-measur} gives observations from a position-fixed radar and returns the range, angle of azimuth, and elevation \cite{SARKKA20121221, barshalomBook2002}. The measurement noise $\bm{\epsilon}_k \sim \mathcal{N}(\bm{\epsilon}_k\mid\cu{0}, \cu{V})$ and $\cu{V} = \diag\begin{bmatrix}\sigma_r^2 & \sigma_\theta^2 & \sigma_\phi^2\end{bmatrix}$. 

We simulate the trajectories using a similar parameter setting as described in \cite{ckdIndian2010, SARKKA20121221}, except that we choose the initial turning rate $\theta_0 = 30\degree/\text{s}$, which is significantly more challenging than the original settings ($3$ to $6\degree/\text{s}$). The difference of the two settings is illustrated in Figure~\ref{fig:turn-rate-diff}.
\begin{figure}[h!]
	\centering
	\includegraphics[width=.49\linewidth]{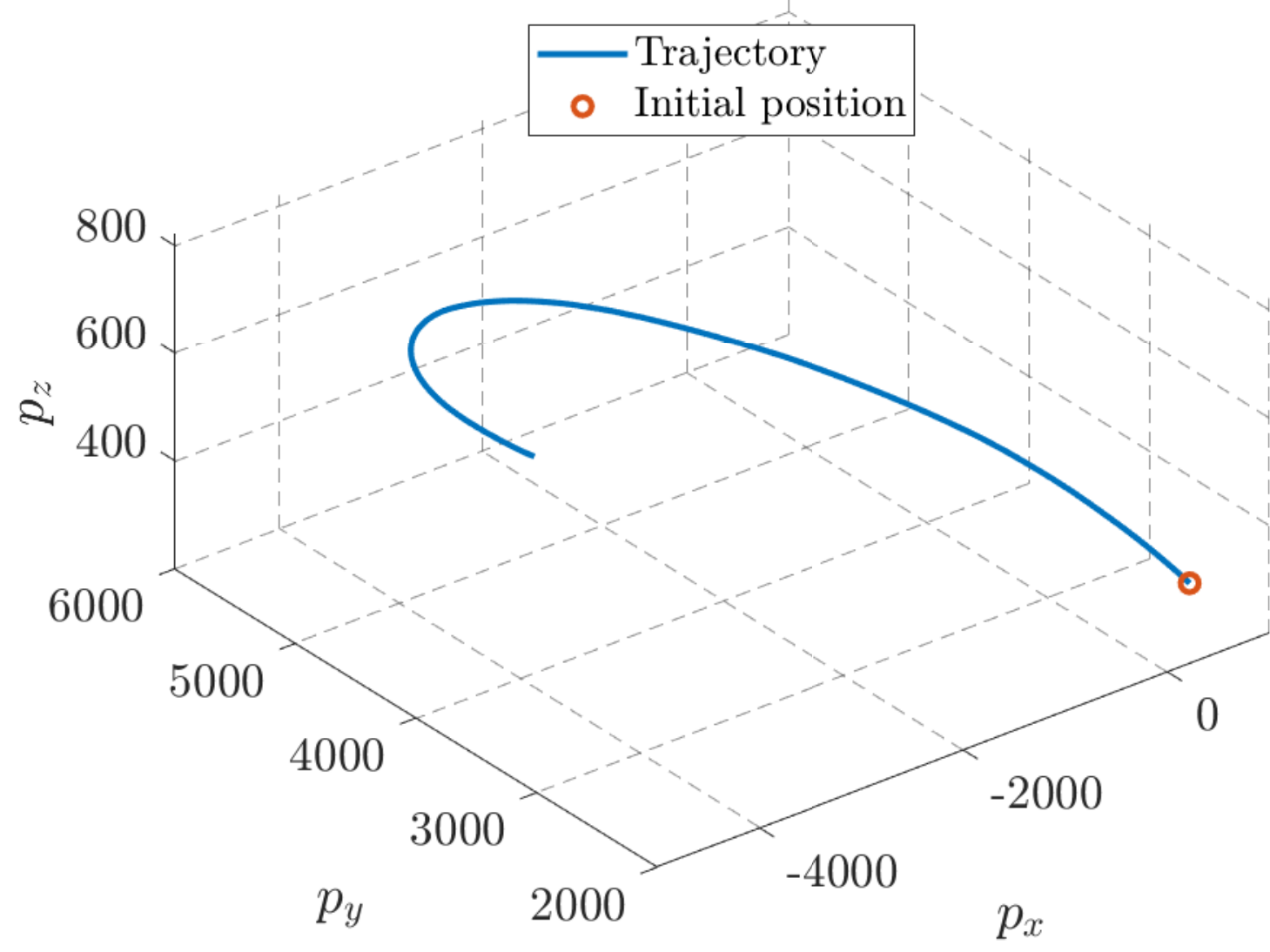}
	\includegraphics[width=.49\linewidth]{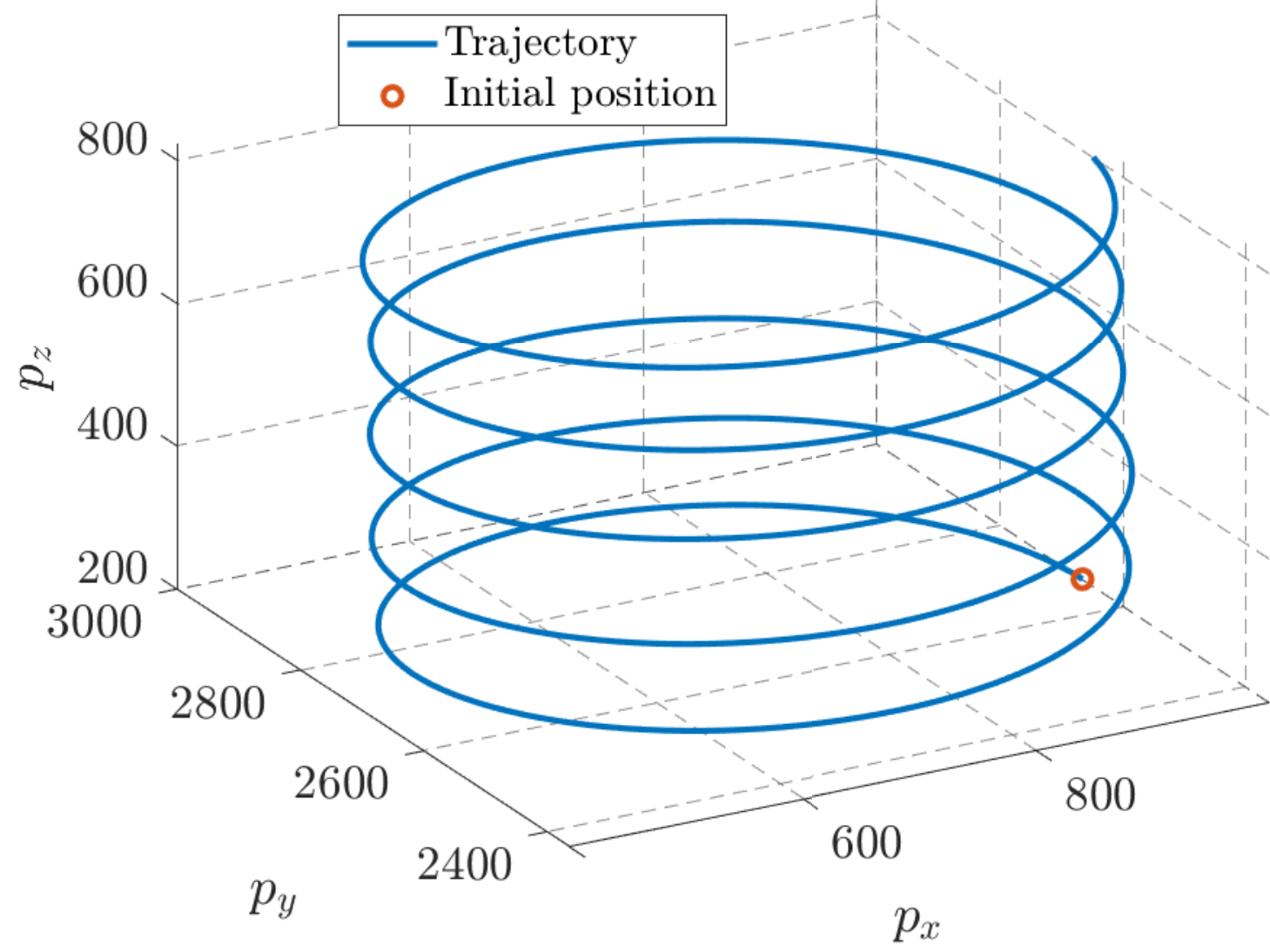}
	\caption{Examples of trajectory simulation up to 60~s. The left figure shows the setting of initial turning rate $\theta_0 =3\degree/\text{s}$ \cite{ckdIndian2010, SARKKA20121221}, while on the right figure, we use $\theta_0=30\degree/\text{s}$.}
	\label{fig:turn-rate-diff}
\end{figure}

The other parameters are the same as in \cite{ckdIndian2010, SARKKA20121221}: We choose process covariance parameters $\sigma_1=\sqrt{0.2}$ and $\sigma_2=7\times 10^{-3}$, and measurement covariance parameters $\sigma_r=50\,\text{m}$, $\sigma_\phi=0.1\degree$, and $\sigma_\theta=0.1\degree$. The initial condition is drawn from a normal distribution with mean $\cu{m}_0 = [\begin{smallmatrix}1000\,\text{m}& 0 \,\text{m/s}& 2650\, \text{m}& 150\,\text{m/s}& 200 \, \text{m}& 10 \,\text{m/s}& 30\degree/\text{s}\end{smallmatrix}]^\trans$ and covariance $\cu{P}_0=\diag[\begin{smallmatrix}100^2& 100^2& 100^2& 100^2& 100^2& 100^2& 10^2\end{smallmatrix}]$. We simulate the ground-truth trajectories using Euler--Maruyama with small enough time step $\Delta t \times 10^{-5}$~s, where $\Delta t$ is the time interval between two measurements. The total time length of the trajectory is fixed to $T=210$~s.

To test the effectiveness of filters and smoothers, we select the time interval $\Delta t$ range from $0.5$~s to $9$~s, because the filters and smoothers may fail with large $\Delta t$. We also use additional integration steps in prediction and smoothing, and choose the number of integration steps $M$ from $\log_2(M)=0$ to $\log_2(M)=5$. 

\begin{figure*}[t!]
	\centering
	\includegraphics[width=0.19\linewidth]{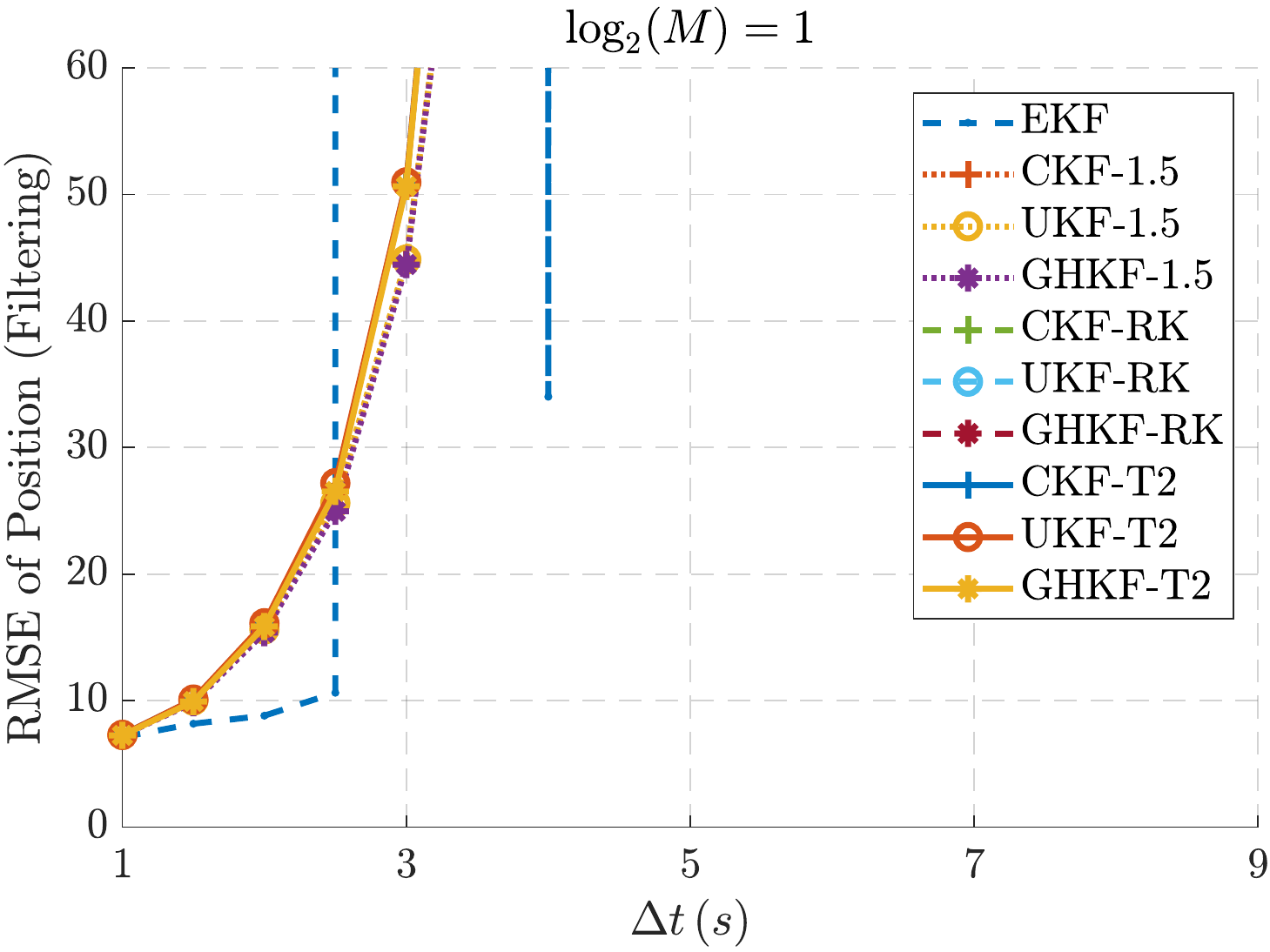}
	\includegraphics[width=0.19\linewidth]{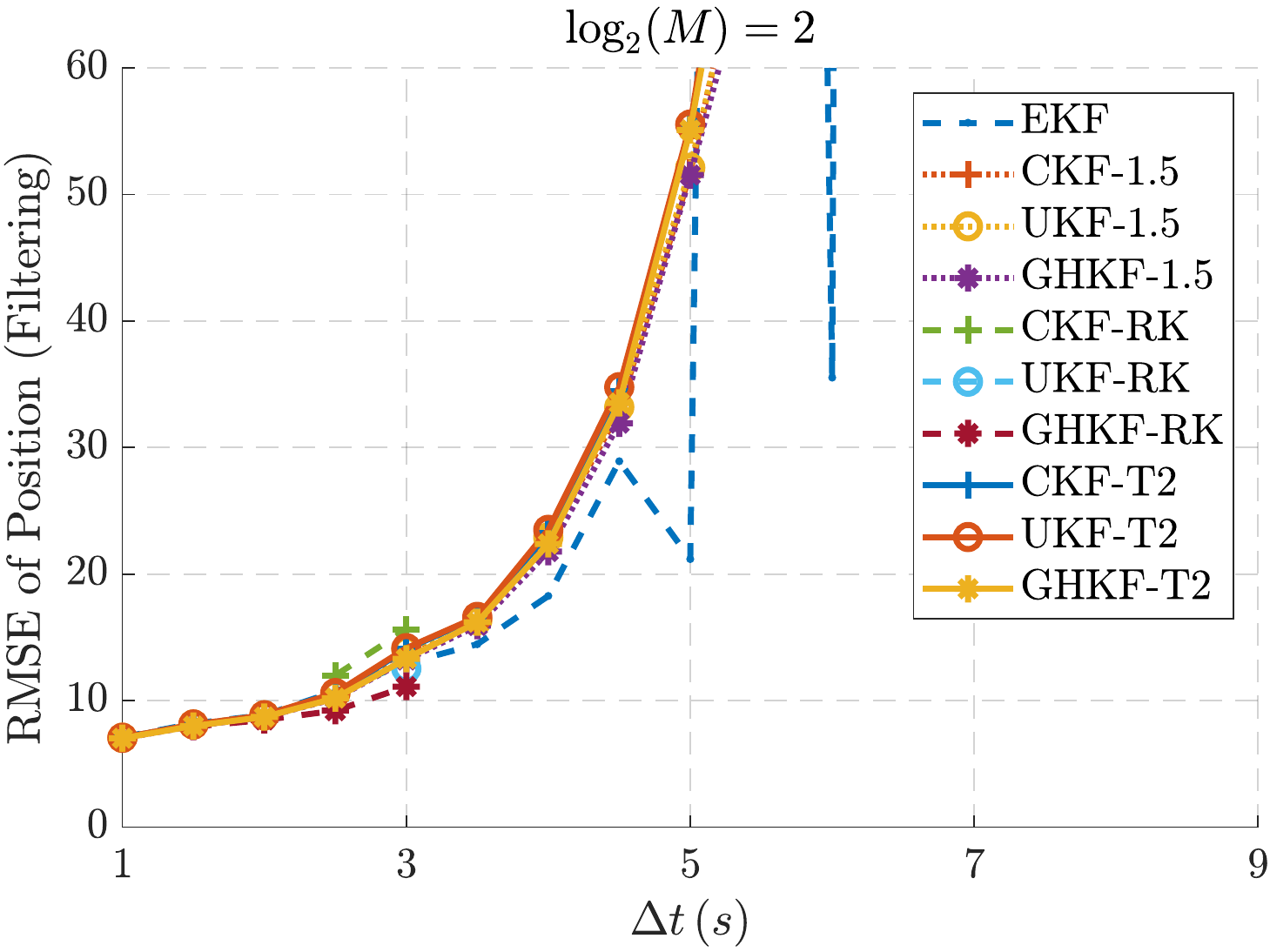}
	\includegraphics[width=0.19\linewidth]{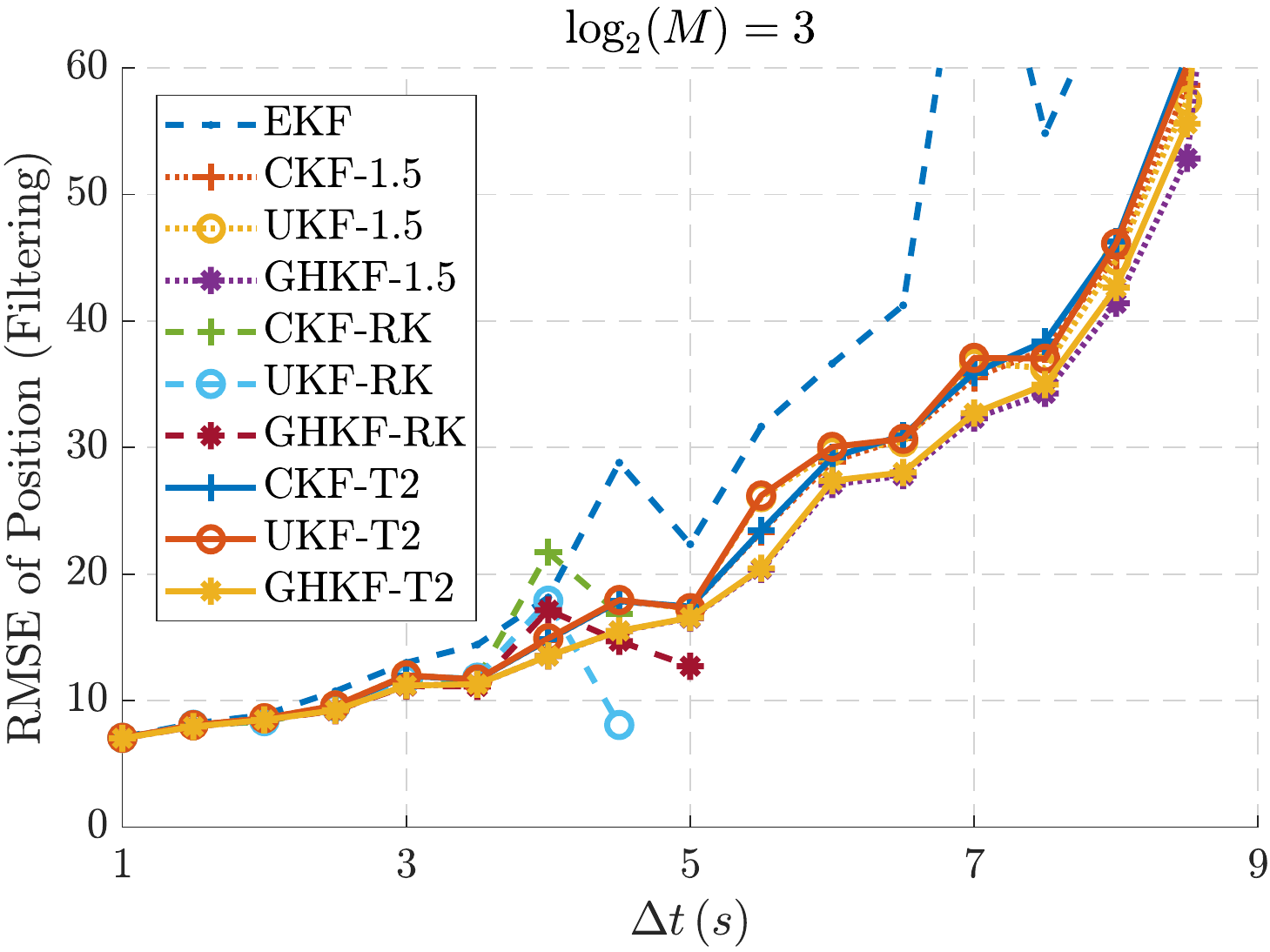}
	\includegraphics[width=0.19\linewidth]{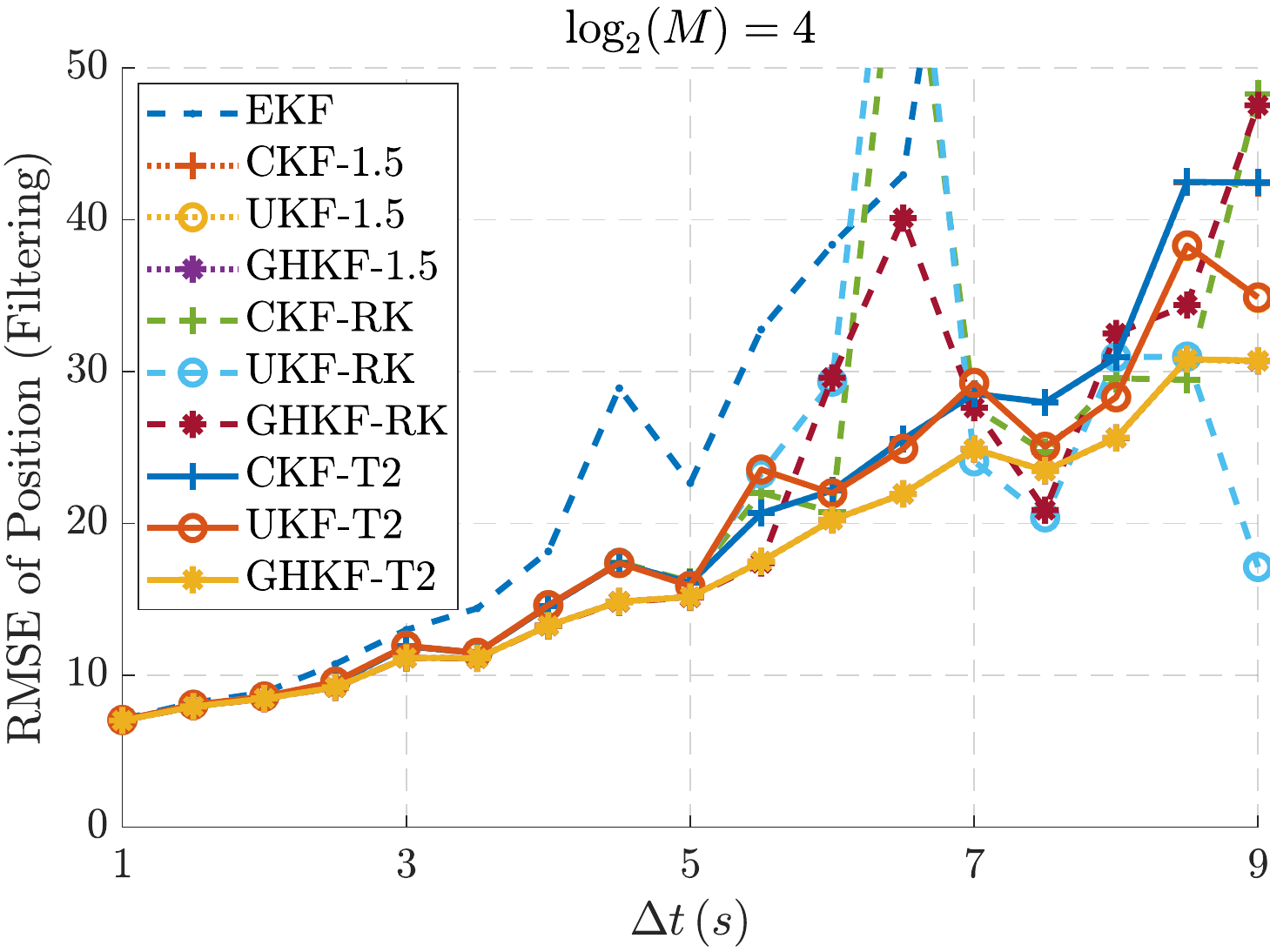}
	\includegraphics[width=0.19\linewidth]{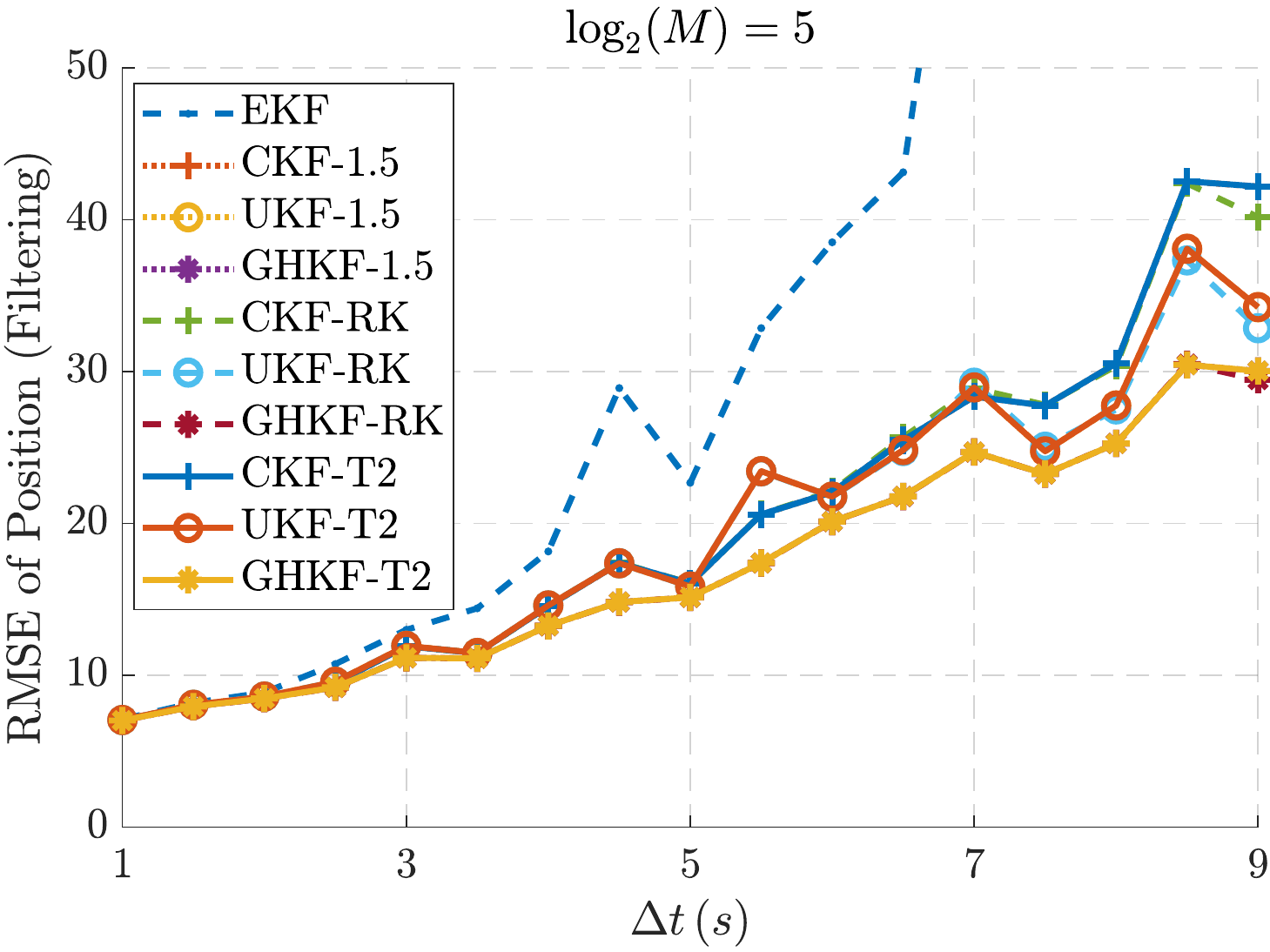} \\
	\includegraphics[width=0.19\linewidth]{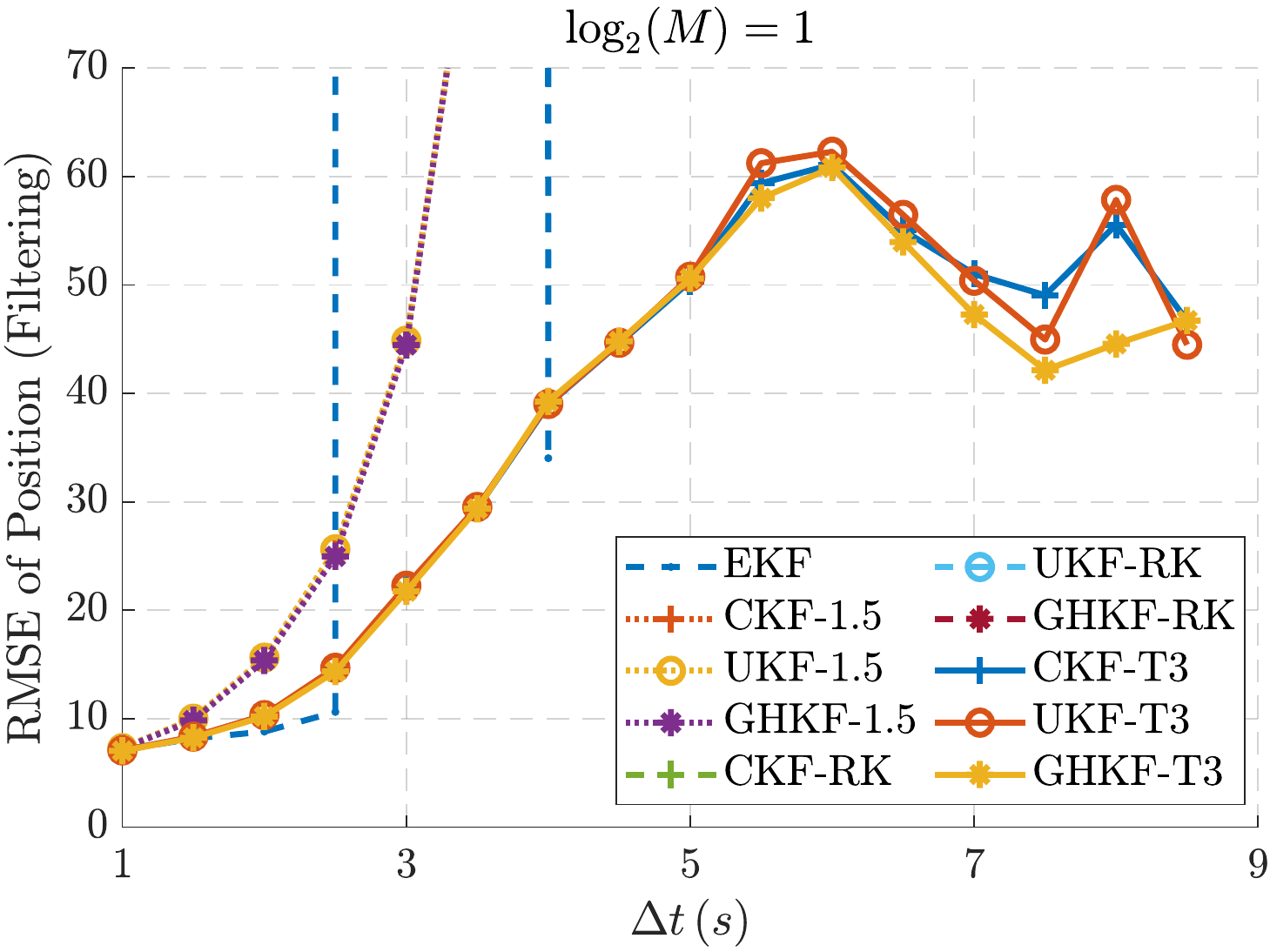}
	\includegraphics[width=0.19\linewidth]{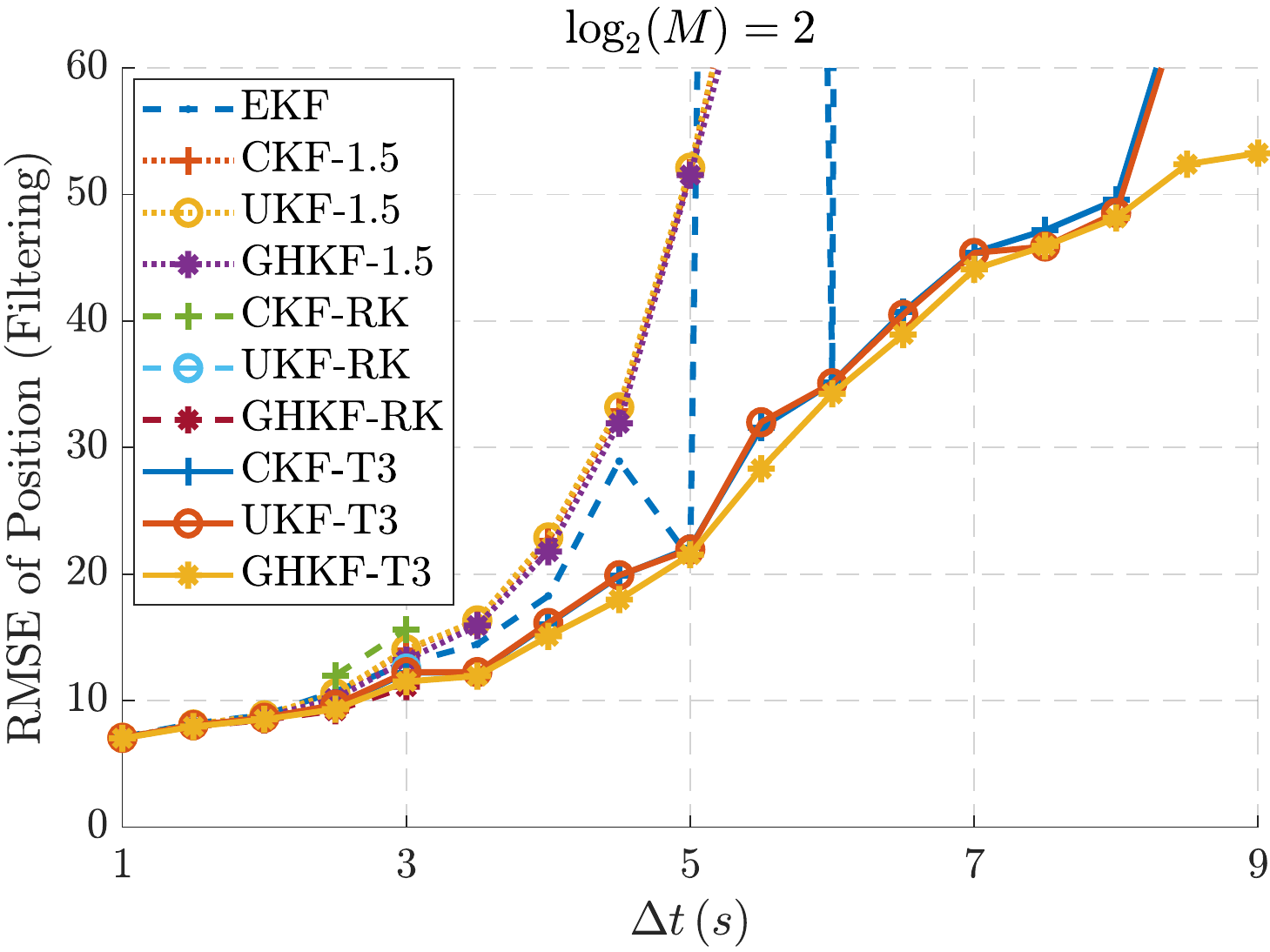}
	\includegraphics[width=0.19\linewidth]{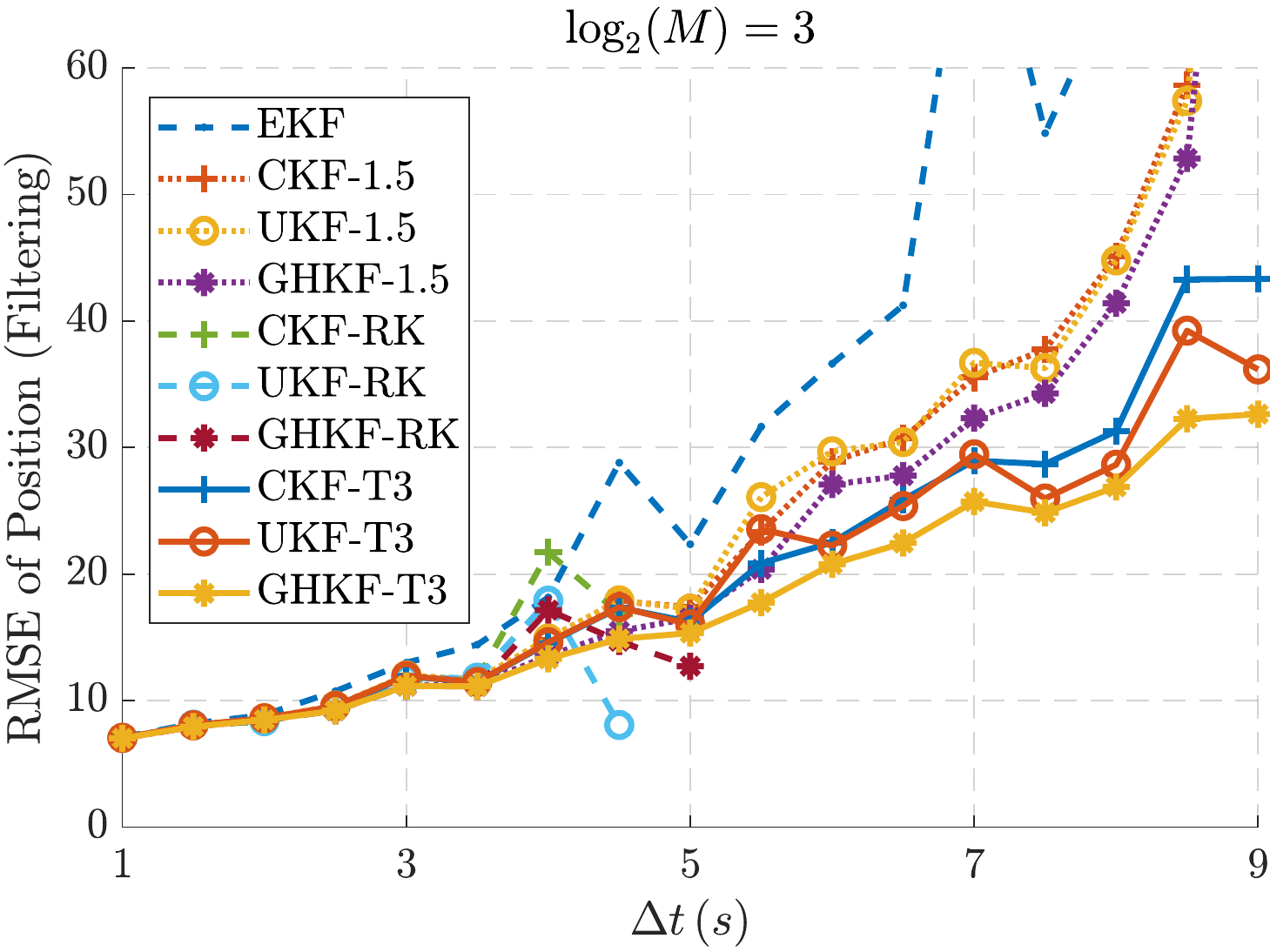}
	\includegraphics[width=0.19\linewidth]{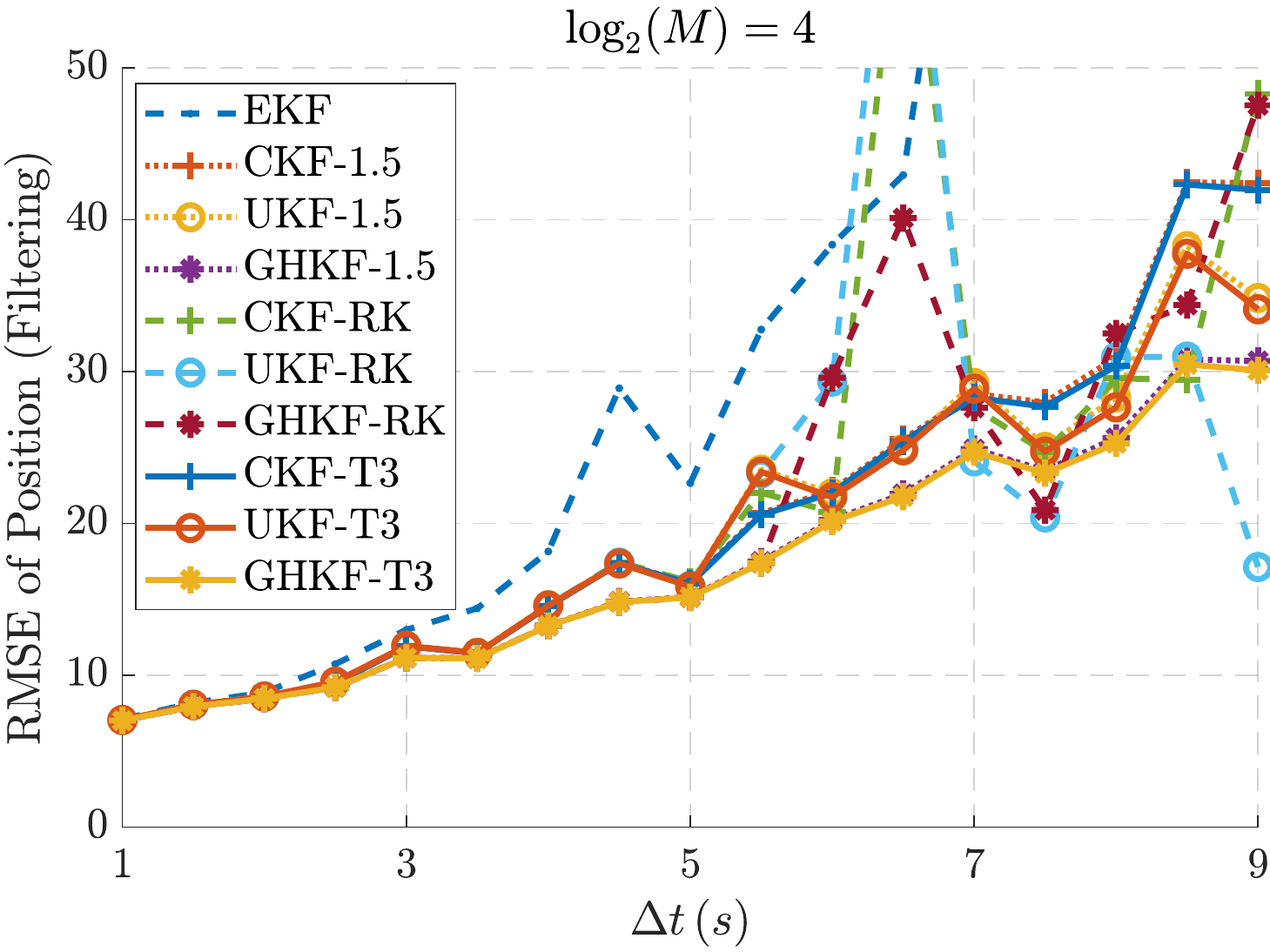}
	\includegraphics[width=0.19\linewidth]{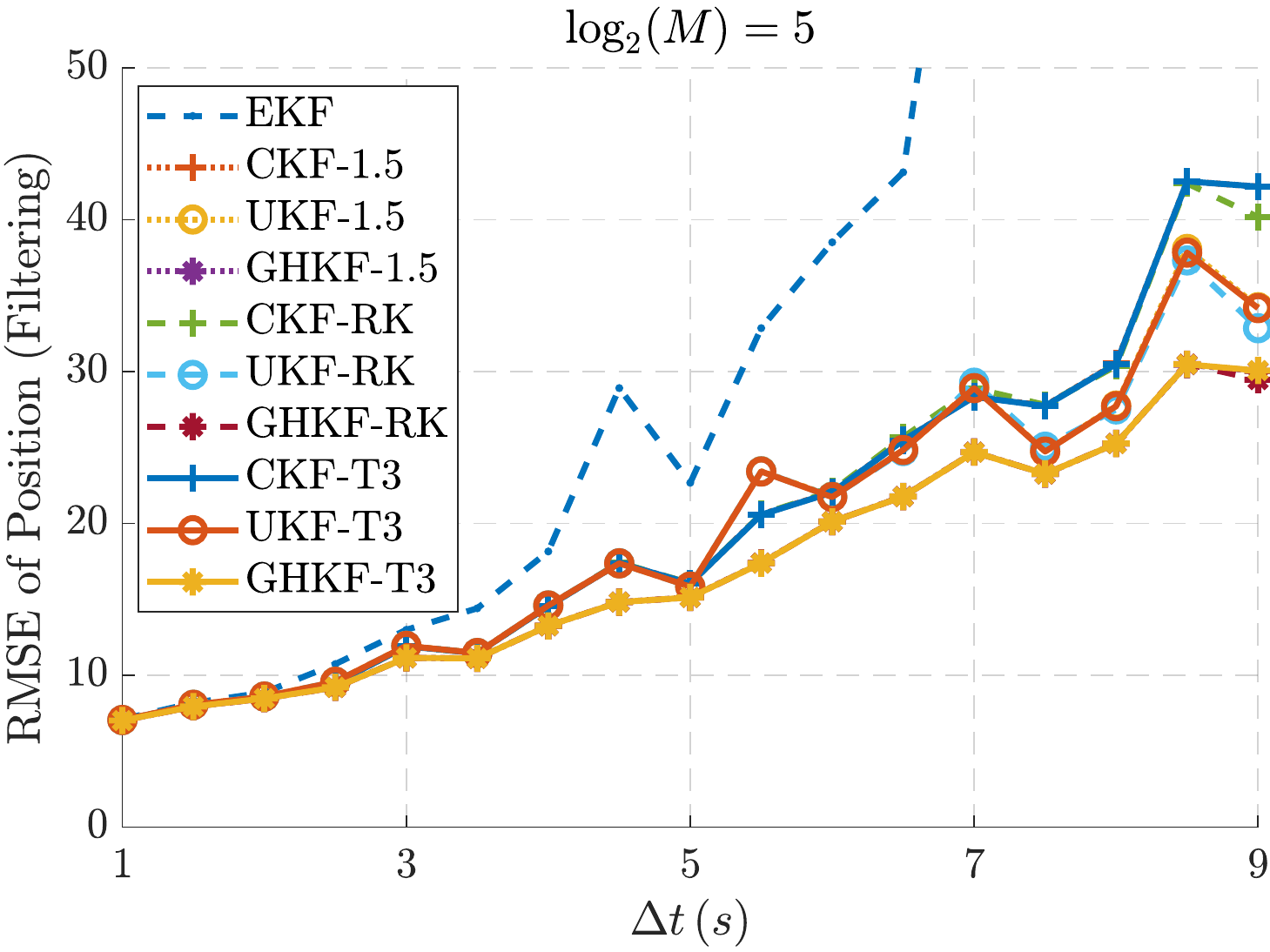} \\
	\includegraphics[width=0.19\linewidth]{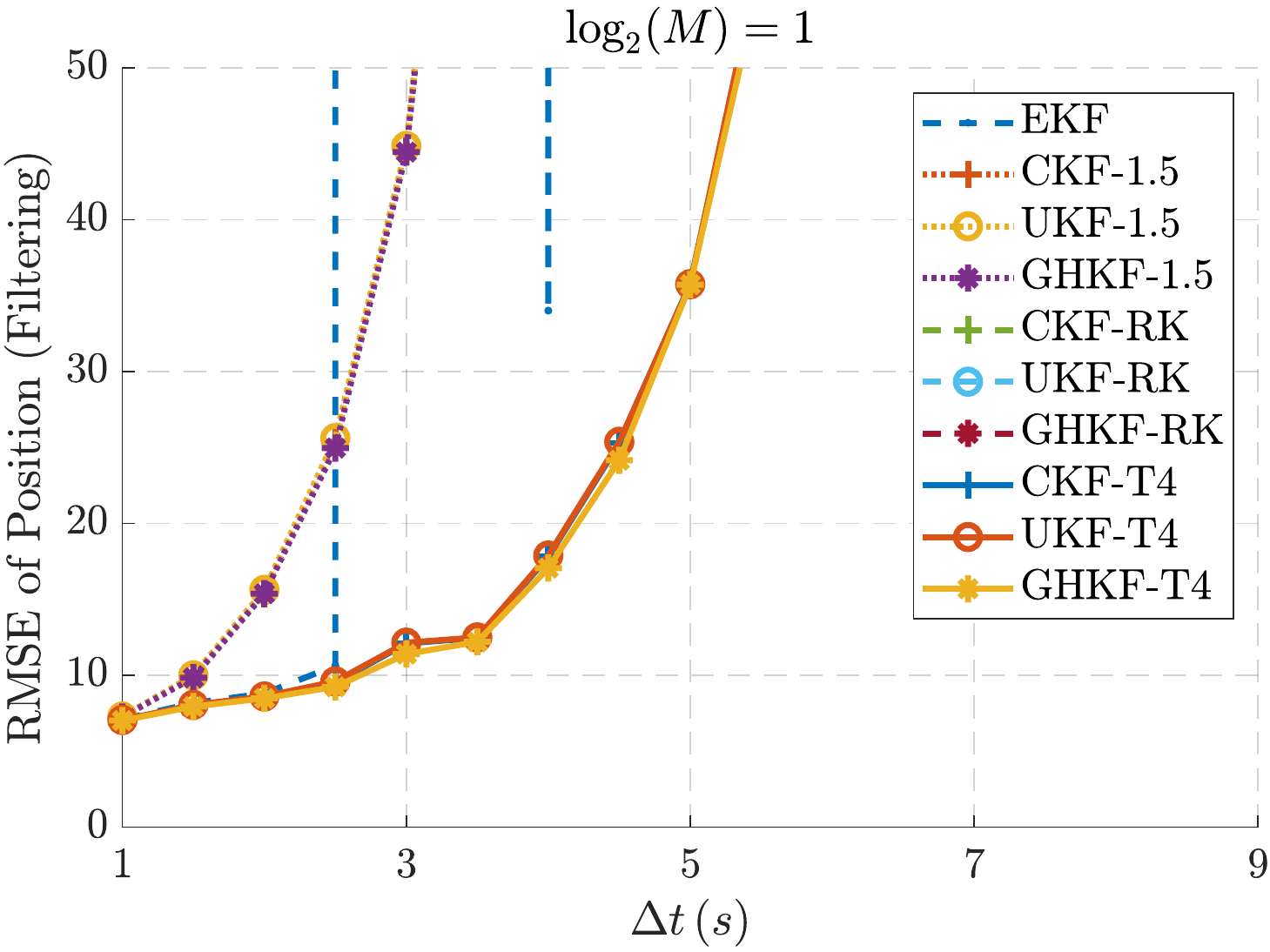}
	\includegraphics[width=0.19\linewidth]{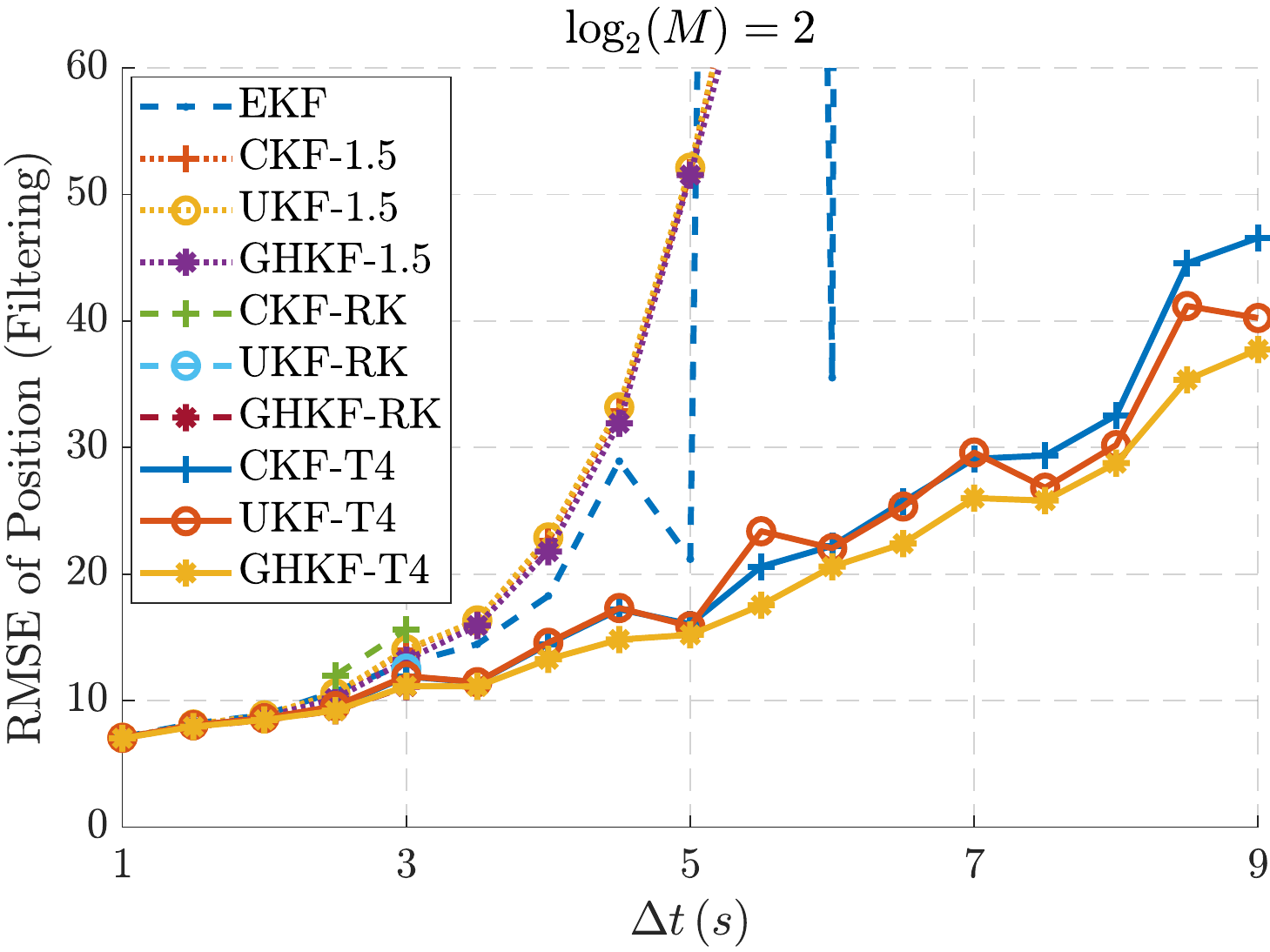}
	\includegraphics[width=0.19\linewidth]{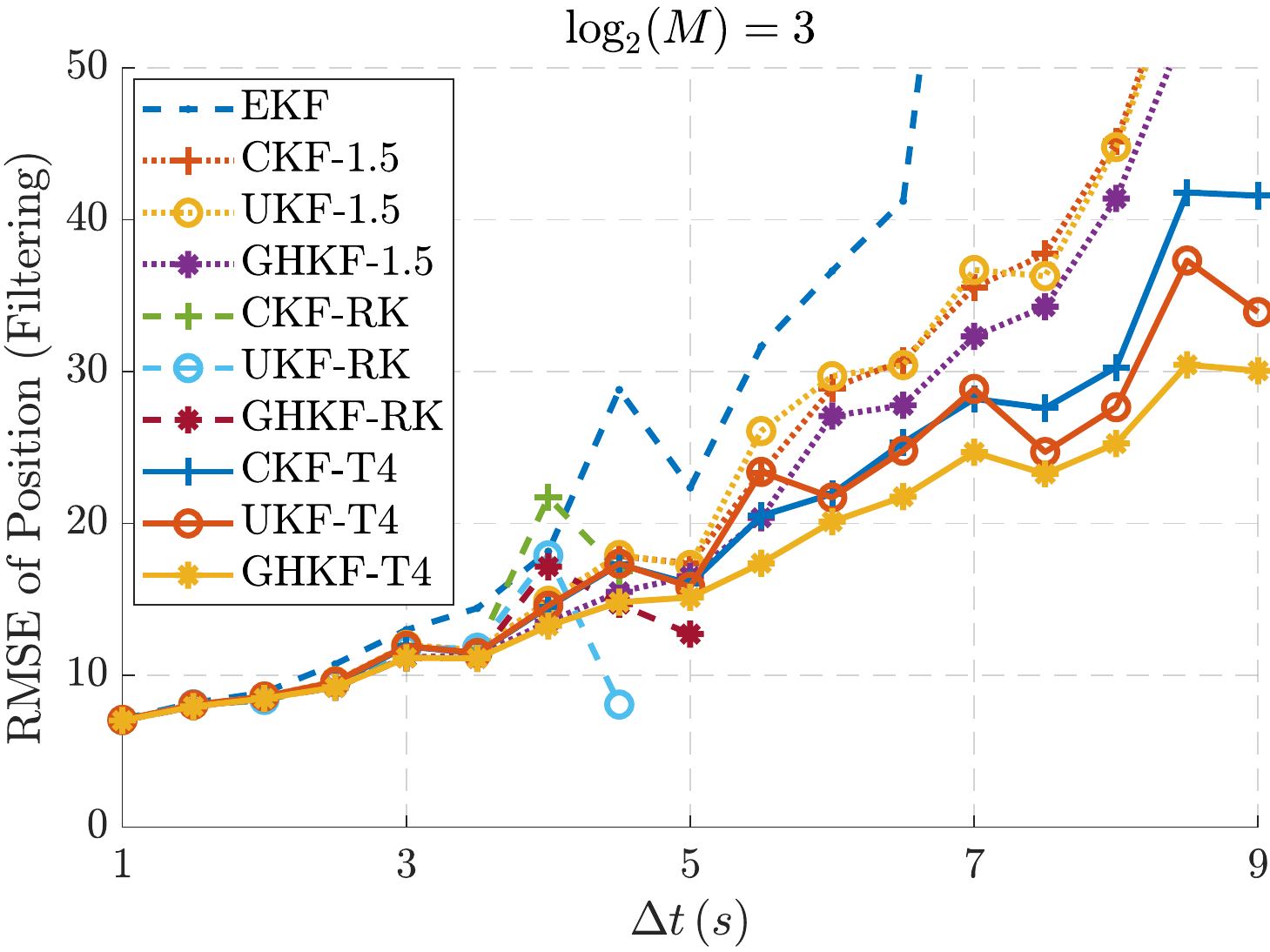}
	\includegraphics[width=0.19\linewidth]{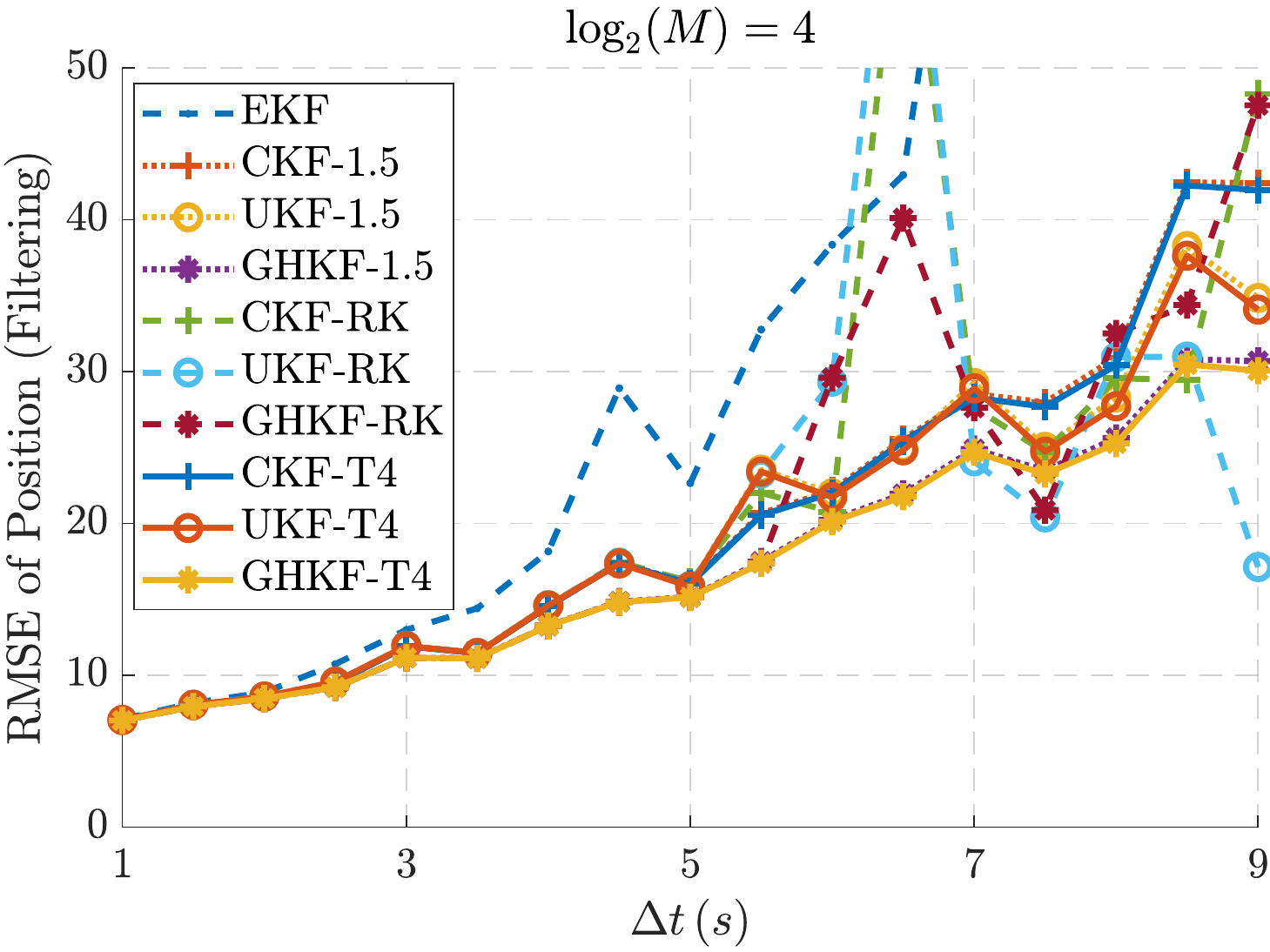}
	\includegraphics[width=0.19\linewidth]{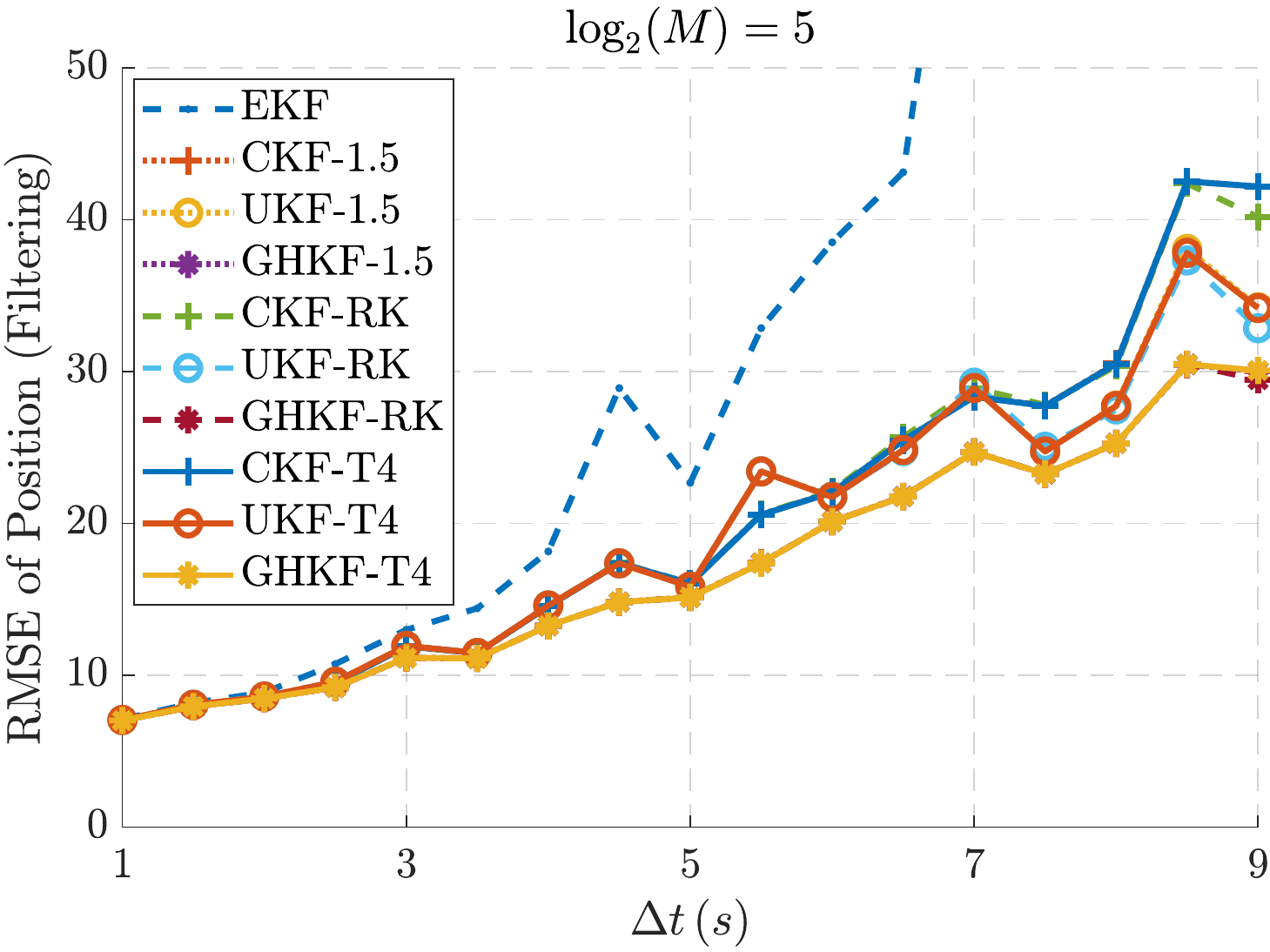}
	\caption{The RMSE of filtering results over 100 independent Monte Carlo runnings. The $M$ and $\Delta t$ are the number of integration steps and the time interval, respectively. From top to bottom rows, we control the expansion order of TME filters to be 2, 3, and 4. From left to right columns, we control the number of integration steps $M$. }
	\label{fig:rmse-filter-all}
\end{figure*}

For the comparison of the filtering and smoothing accuracy, we run 100 independent Monte Carlo trials of the selected methods and calculate the root mean square error (RMSE) of the position states, which is defined as
\begin{equation}
\mathrm{RMSE} = \sqrt{\sum^{100}_i\sum^{T/\Delta t}_j\sum_{k=1, 3, 5}\frac{( \cu{x}^{i,j}_{k} - \hat{\cu{x}}^{i,j}_{k})^2}{3\times 100 \times (T/ \Delta t)}},
\label{equ:rmse-def}
\end{equation}
where $\cu{x}^{i,j}_1$, $\cu{x}^{i,j}_3$, and $\cu{x}^{i,j}_5$ are the ground truth positions at time $j\Delta t$ from $i$-th Monte Carlo trial, and the $\hat{\cu{x}}^{i,j}_1$, $\hat{\cu{x}}^{i,j}_3$, and $\hat{\cu{x}}^{i,j}_5$ are their corresponding (filtering or smoothing) estimates. For the comparison of numerical stability, we record the number of divergences, which is defined by counting the manifestation of non-positive definite covariances and unbounded (NaN) estimates. The state-of-the-art Gaussian filters and smoothers for comparison are listed in Table~\ref{tbl:list-of-methods}. 

\begin{table}[h!]
	\centering\scriptsize
	\begin{tabular}{lccc}
		\toprule
		\textbf{Name} & \textbf{Description} & \textbf{Method} & \textbf{Integration} \\ \midrule
		EKF/S-RK & \multirow{4}{*}{\begin{tabular}[c]{@{}c@{}}ODE type of\\ Gaussian \\ filter and \\ smoother \\ with 4th order \\ Runge--Kutta \\ solver \cite{sarkkaarnobook2019, SARKKA2013500}\end{tabular}} & \begin{tabular}[c]{@{}c@{}}Solving \eqref{equ:dmdt_mPP} with\\ linearisation\\(Linear-ODE)\end{tabular} & Not needed \\ \cline{3-4} 
		CKF/S-RK &  & \multirow{3}{*}{\begin{tabular}[c]{@{}c@{}}\\Solving \eqref{equ:dmdt_mPP} with\\Gaussian\\ assumption\\(Gauss-ODE)\end{tabular}} & \begin{tabular}[c]{@{}c@{}}Spherical\\ cubature\end{tabular} \\ \cline{4-4} 
		UKF/S-RK &  &  & \begin{tabular}[c]{@{}c@{}}Unscented\\ transform\end{tabular} \\ \cline{4-4} 
		GHKF/S-RK &  &  & \begin{tabular}[c]{@{}c@{}}3rd order\\ Gauss--Hermite\end{tabular} \\ \midrule
		CKF/S-1.5 & \multirow{3}{*}{\begin{tabular}[c]{@{}c@{}}It\^{o}-1.5 \\ Gaussian \\ filter and \\ smoother \\\cite{ckdIndian2010, SARKKA20121221}\end{tabular}} & \multirow{3}{*}{\begin{tabular}[c]{@{}c@{}}It\^{o}--Taylor\\ discretisation \\ with strong\\ order 1.5~\cite{ckdIndian2010}   \end{tabular}} & \begin{tabular}[c]{@{}c@{}}Spherical\\ cubature\end{tabular} \\ \cline{4-4} 
		UKF/S-1.5 &  &  & \begin{tabular}[c]{@{}c@{}}Unscented\\ transform\end{tabular} \\ \cline{4-4} 
		GHKF/S-1.5 &  &  & \begin{tabular}[c]{@{}c@{}}3rd order\\ Gauss--Hermite\end{tabular} \\ \midrule
		CKF/S-T* & \multirow{3}{*}{\begin{tabular}[c]{@{}c@{}}TME \\ Gaussian\\ filter and \\ smoother \\ (Alg. \ref{alg:tme-filter} and \ref{alg:tme-smoother})\end{tabular}} & \multirow{3}{*}{\begin{tabular}[c]{@{}c@{}}\\$*$-th order \\ TME \\ (Alg. \ref{def:taylor-moment-estimator})\end{tabular}} & \begin{tabular}[c]{@{}c@{}}Spherical\\ cubature\end{tabular} \\ \cline{4-4} 
		UKF/S-T* &  &  & \begin{tabular}[c]{@{}c@{}}Unscented\\ transform\end{tabular} \\ \cline{4-4} 
		GHKF/S-T* &  &  & \begin{tabular}[c]{@{}c@{}}3rd order\\ Gauss--Hermite\end{tabular}\\ \bottomrule
	\end{tabular}
	\caption{The list of the state-of-the-art methods compared in 3D coordinate turn tracking. For the ODE type of Gaussian smoothers, we use the classical type I smoothers as described in \cite{SARKKA2013500}. }
	\label{tbl:list-of-methods}
\end{table}

\begin{figure*}[t!]
	\centering
	\includegraphics[width=0.19\linewidth]{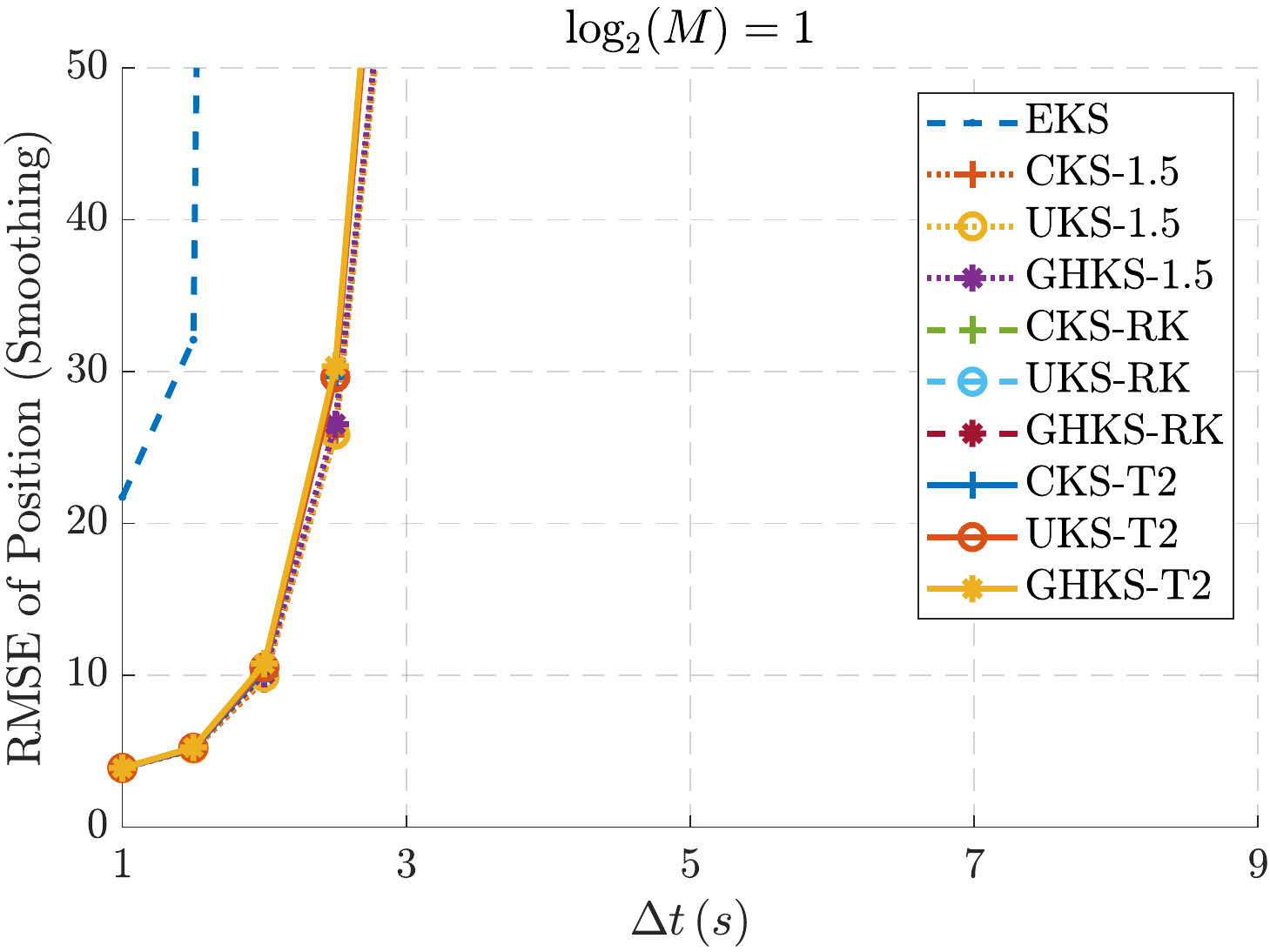}
	\includegraphics[width=0.19\linewidth]{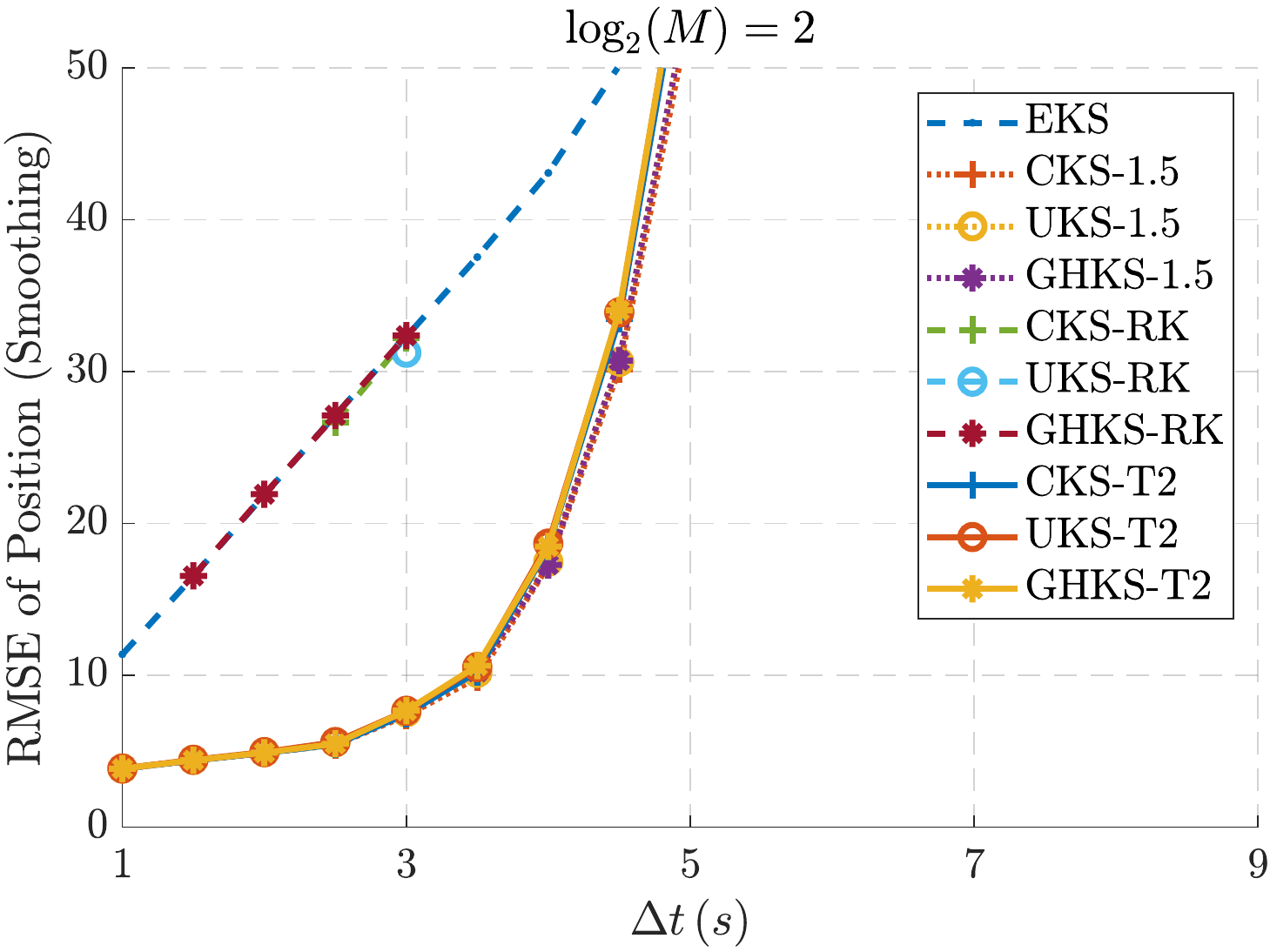}
	\includegraphics[width=0.19\linewidth]{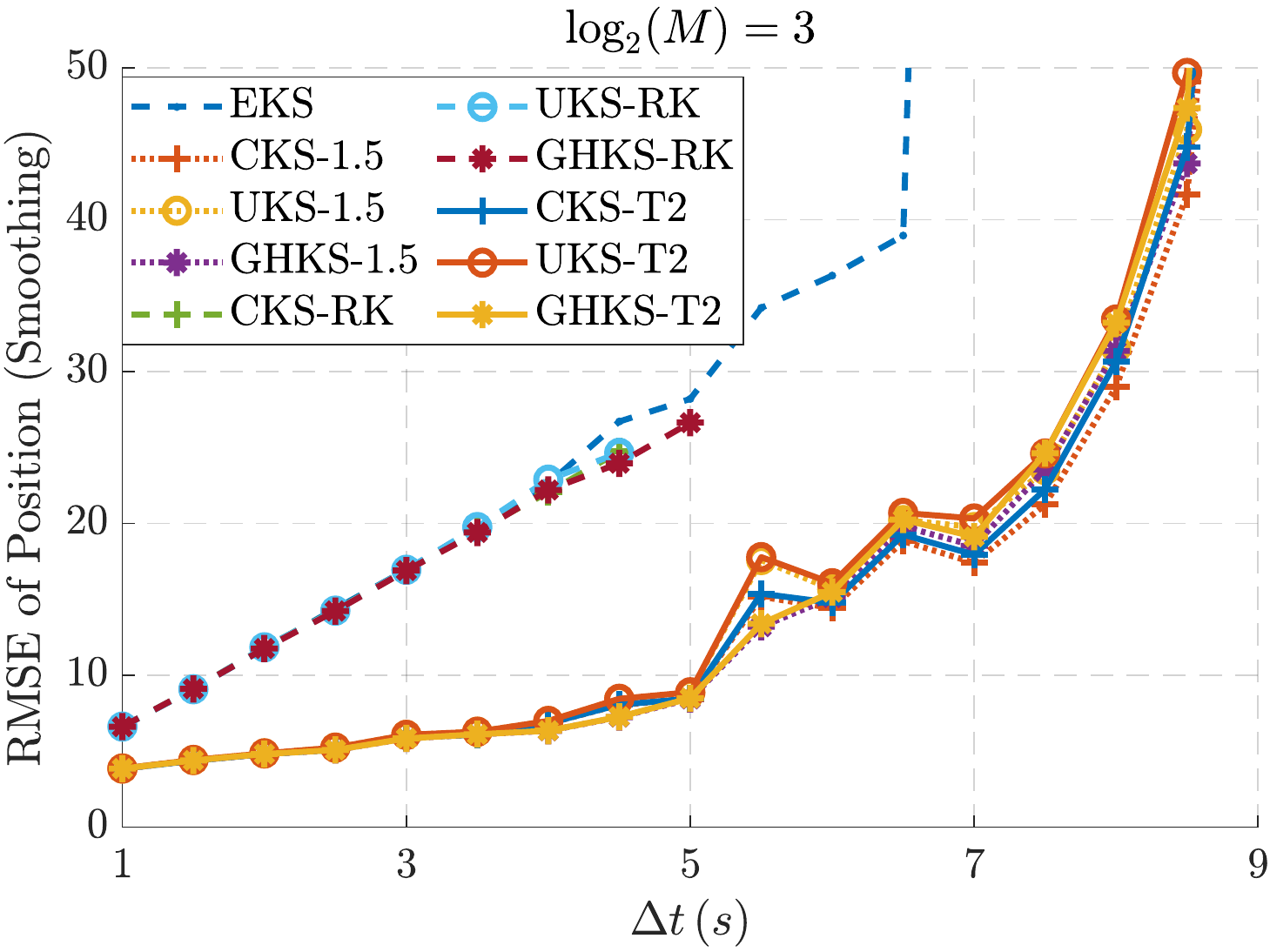}
	\includegraphics[width=0.19\linewidth]{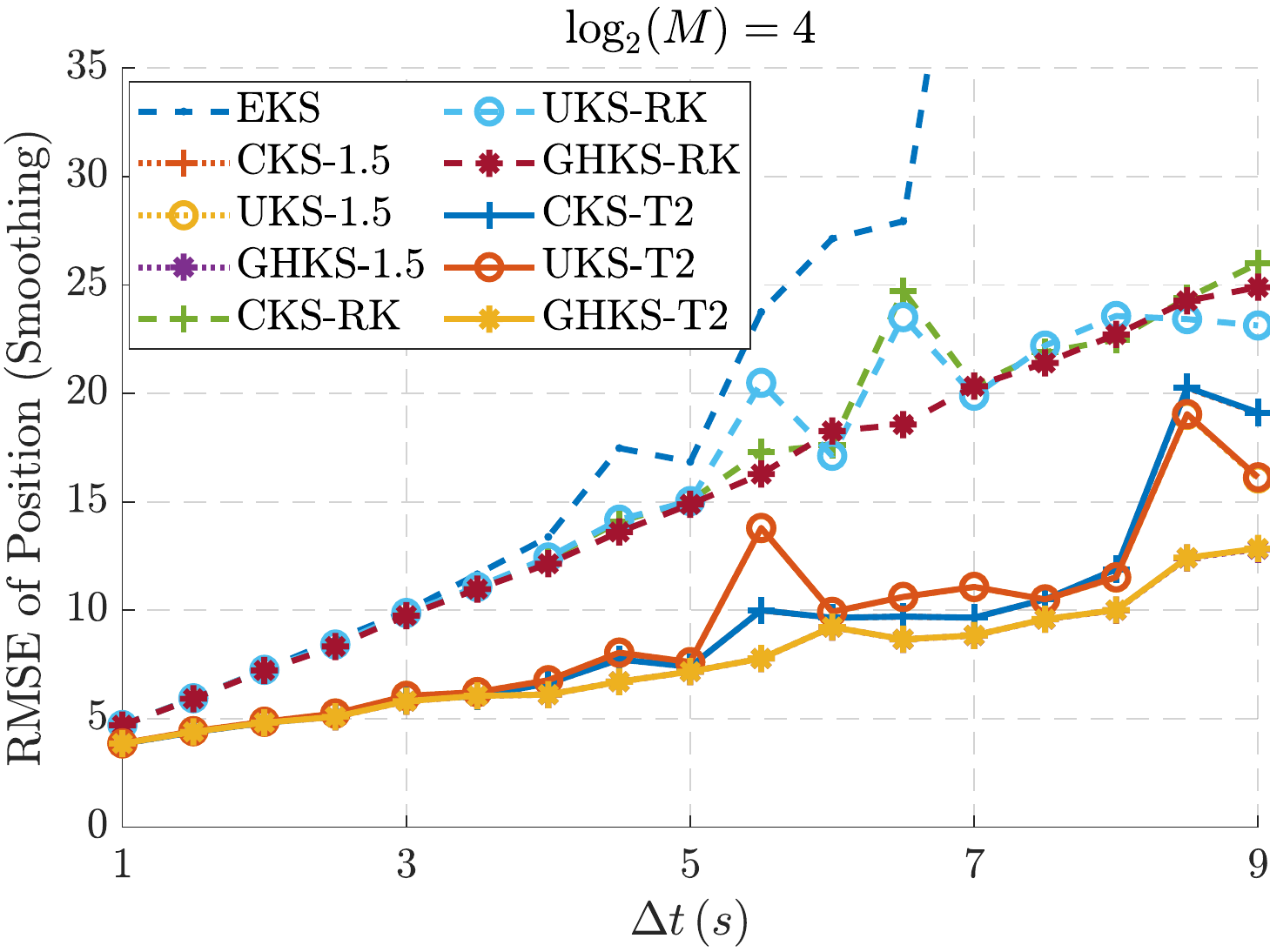}
	\includegraphics[width=0.19\linewidth]{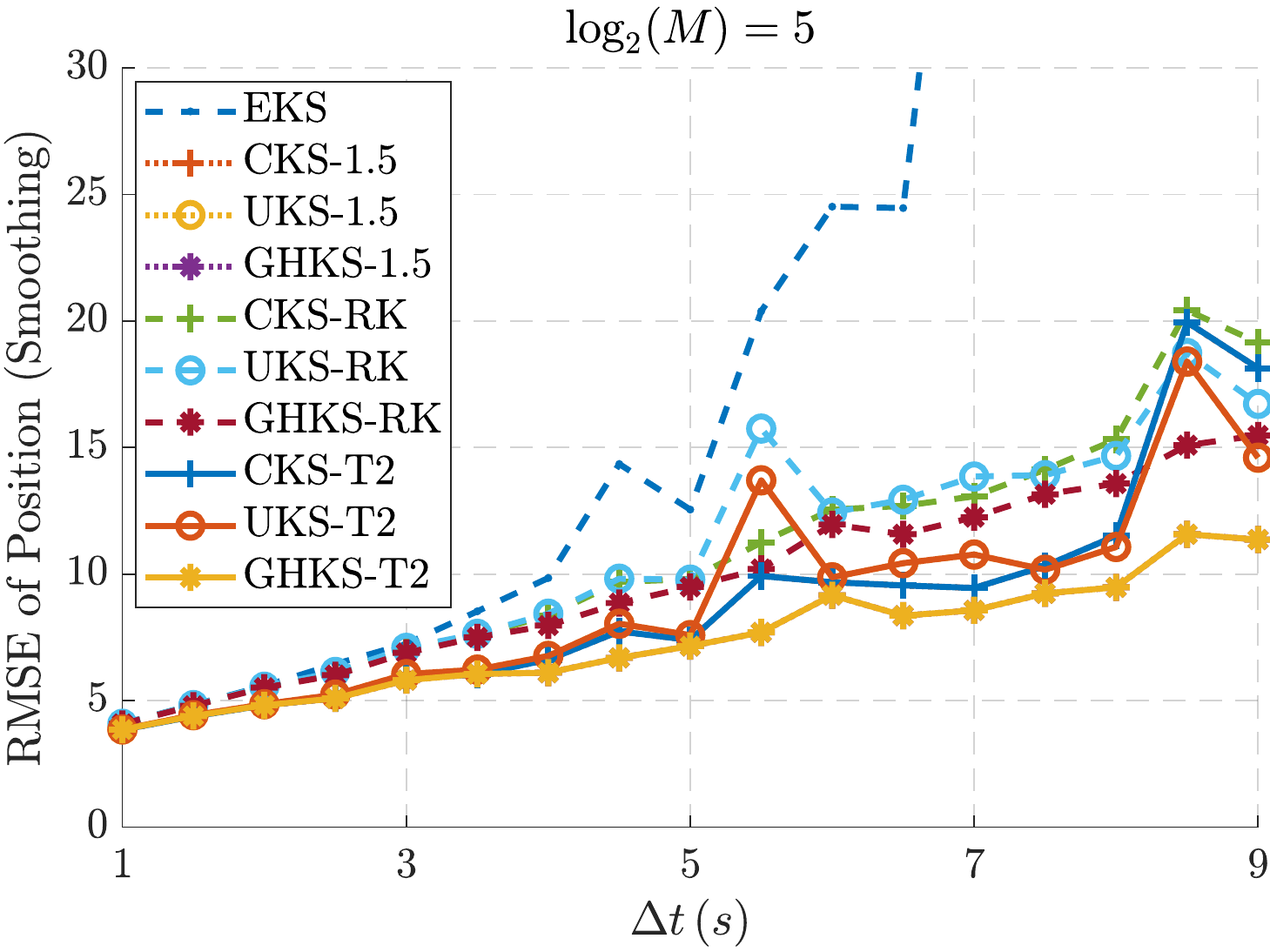} \\
	\includegraphics[width=0.19\linewidth]{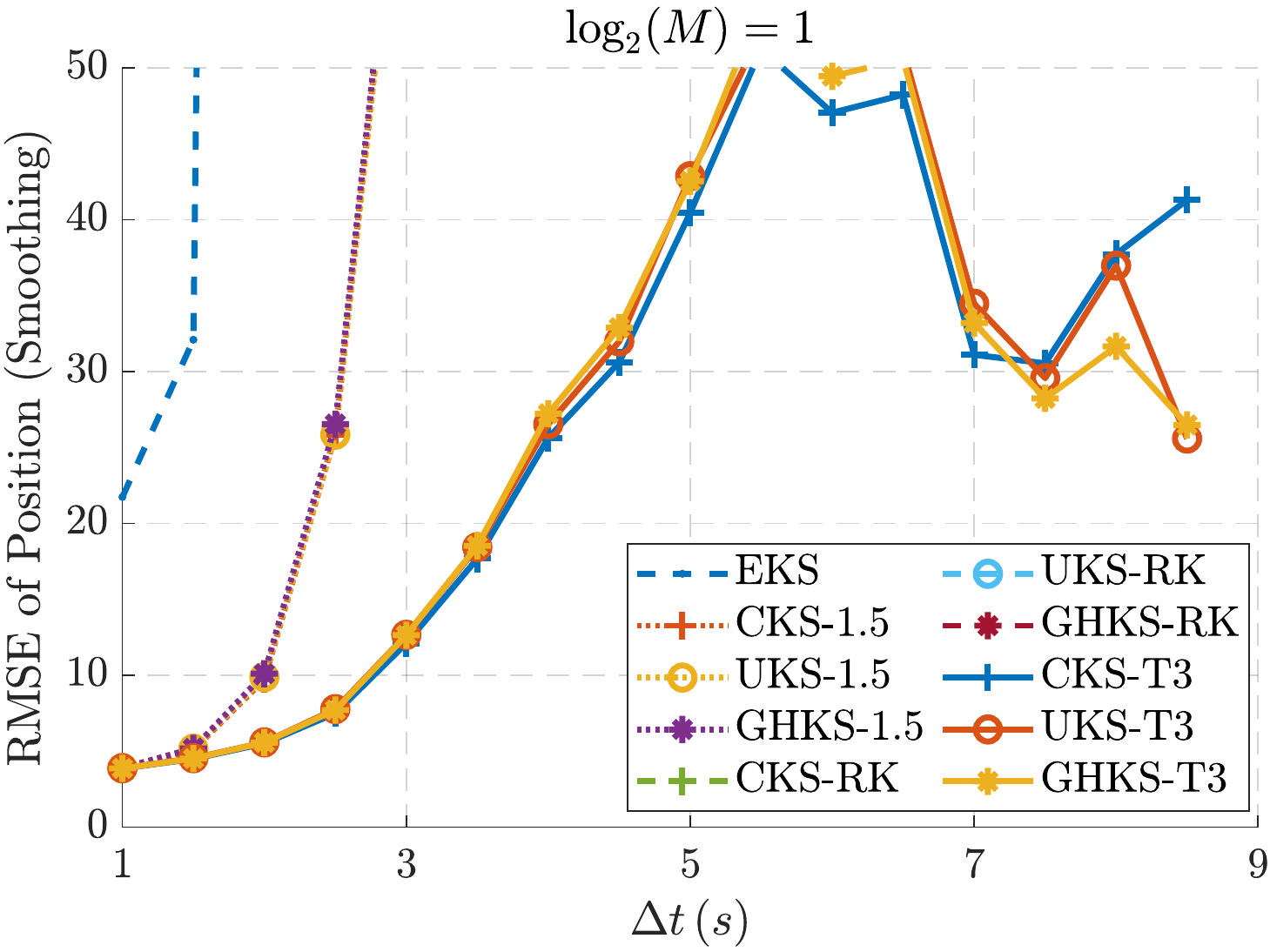}
	\includegraphics[width=0.19\linewidth]{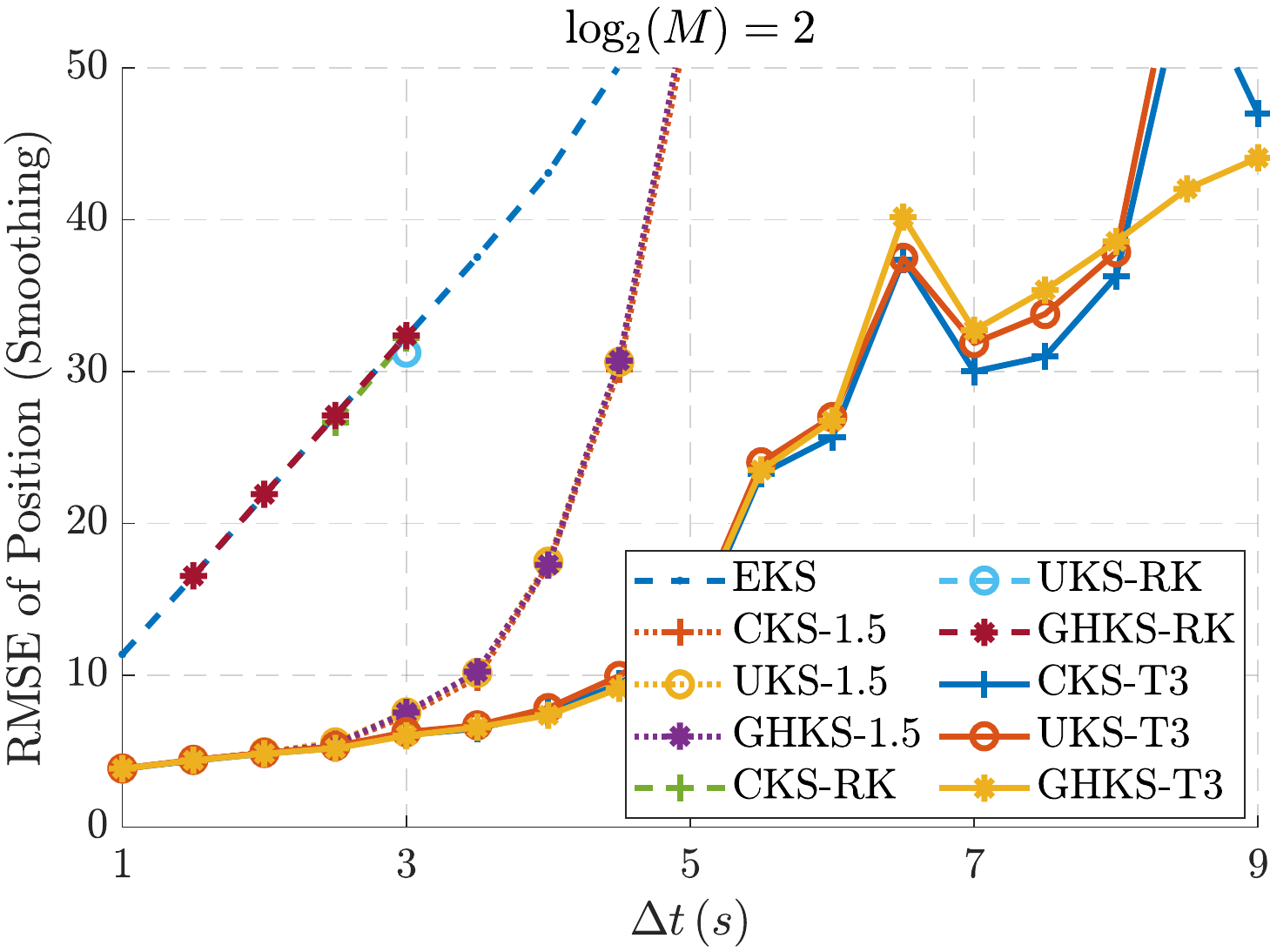}
	\includegraphics[width=0.19\linewidth]{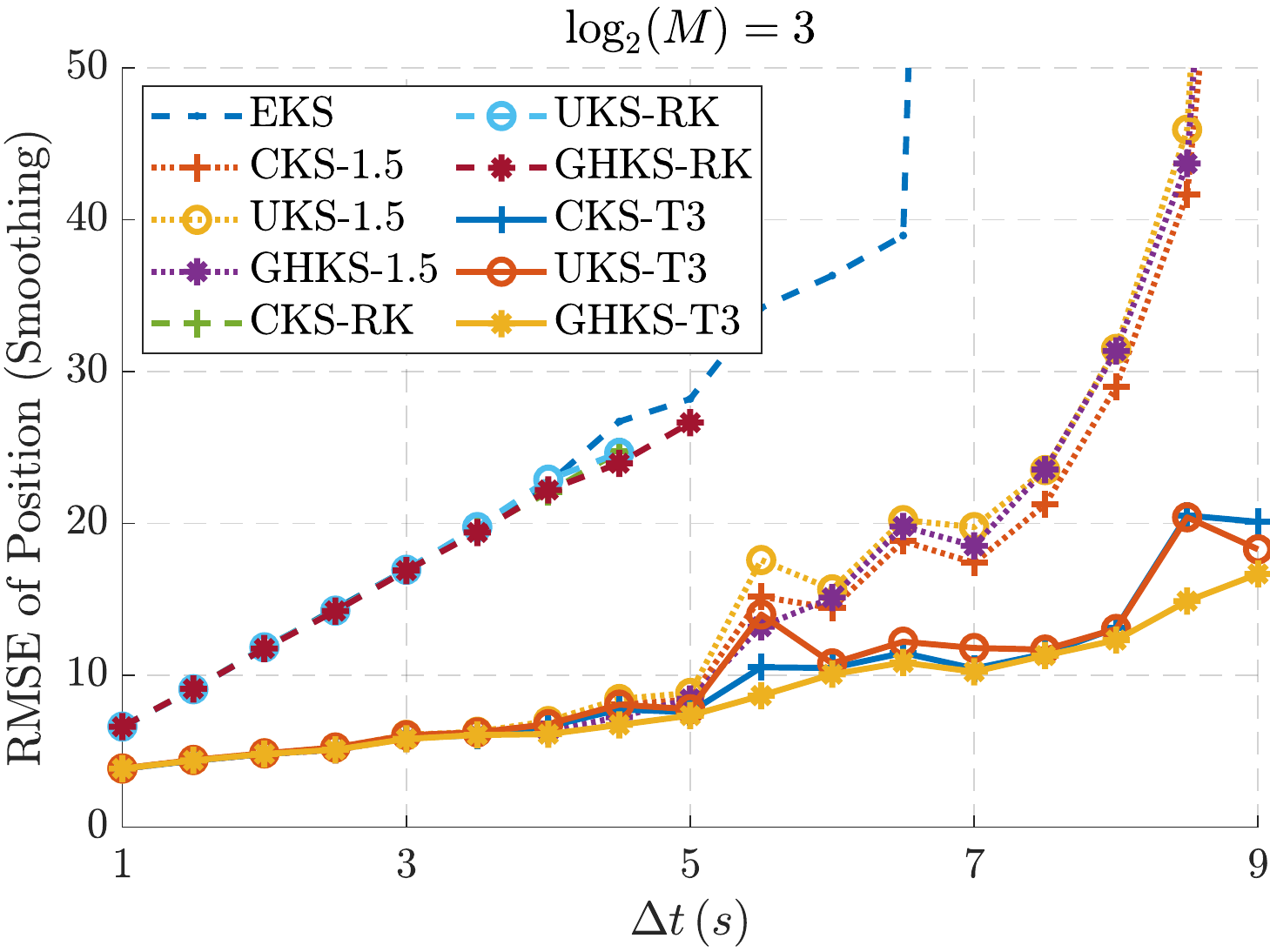}
	\includegraphics[width=0.19\linewidth]{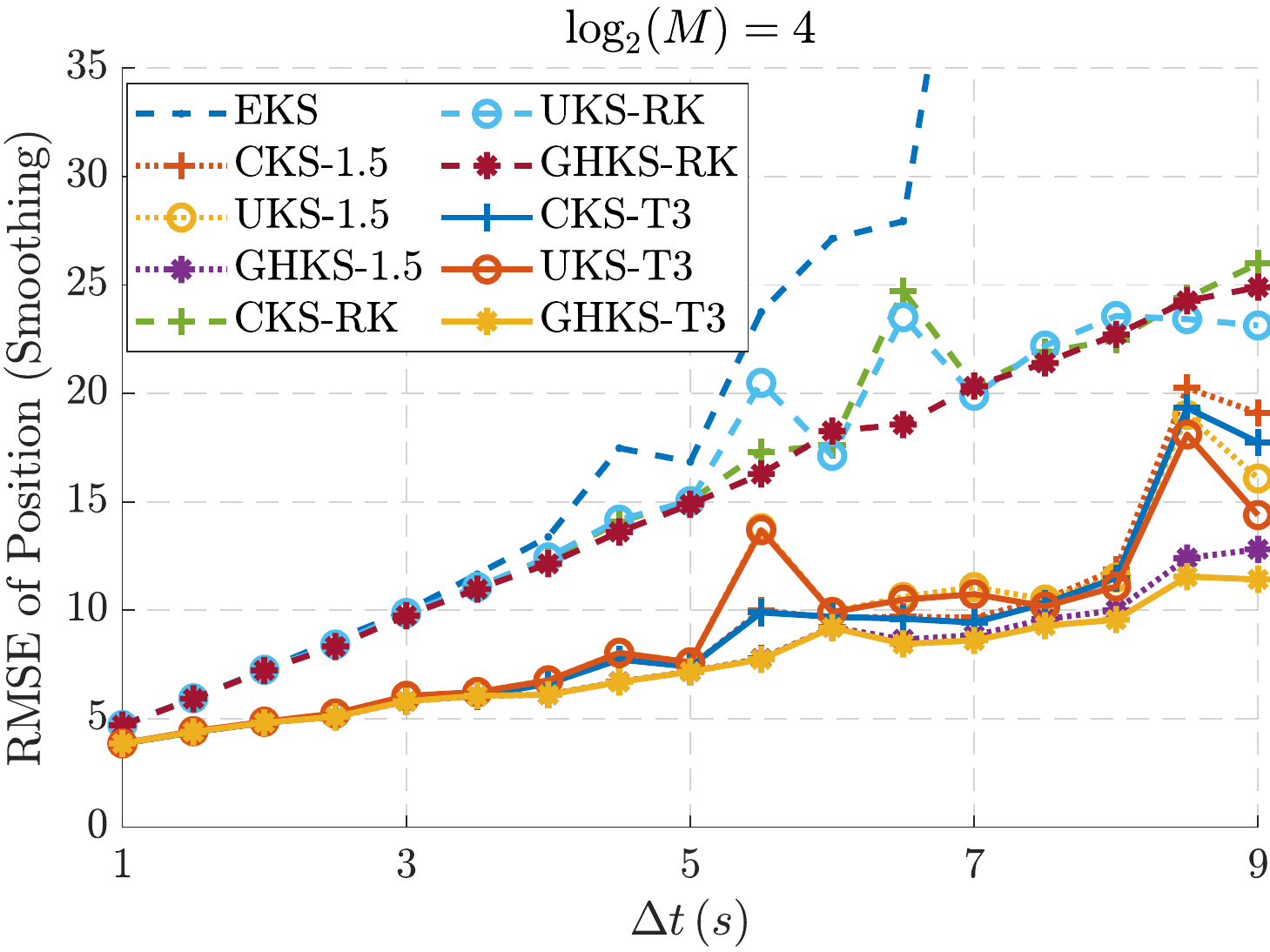}
	\includegraphics[width=0.19\linewidth]{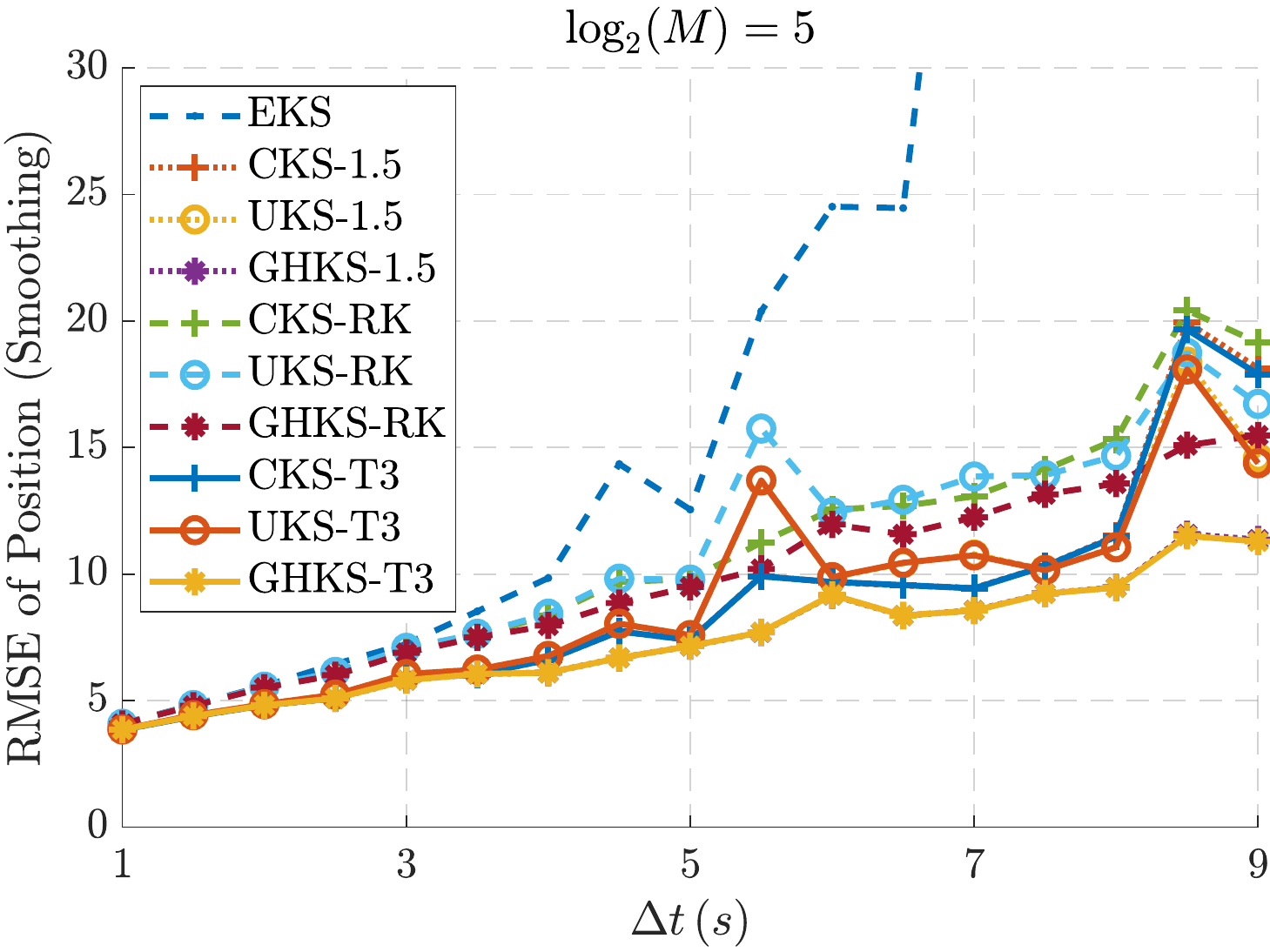} \\
	\includegraphics[width=0.19\linewidth]{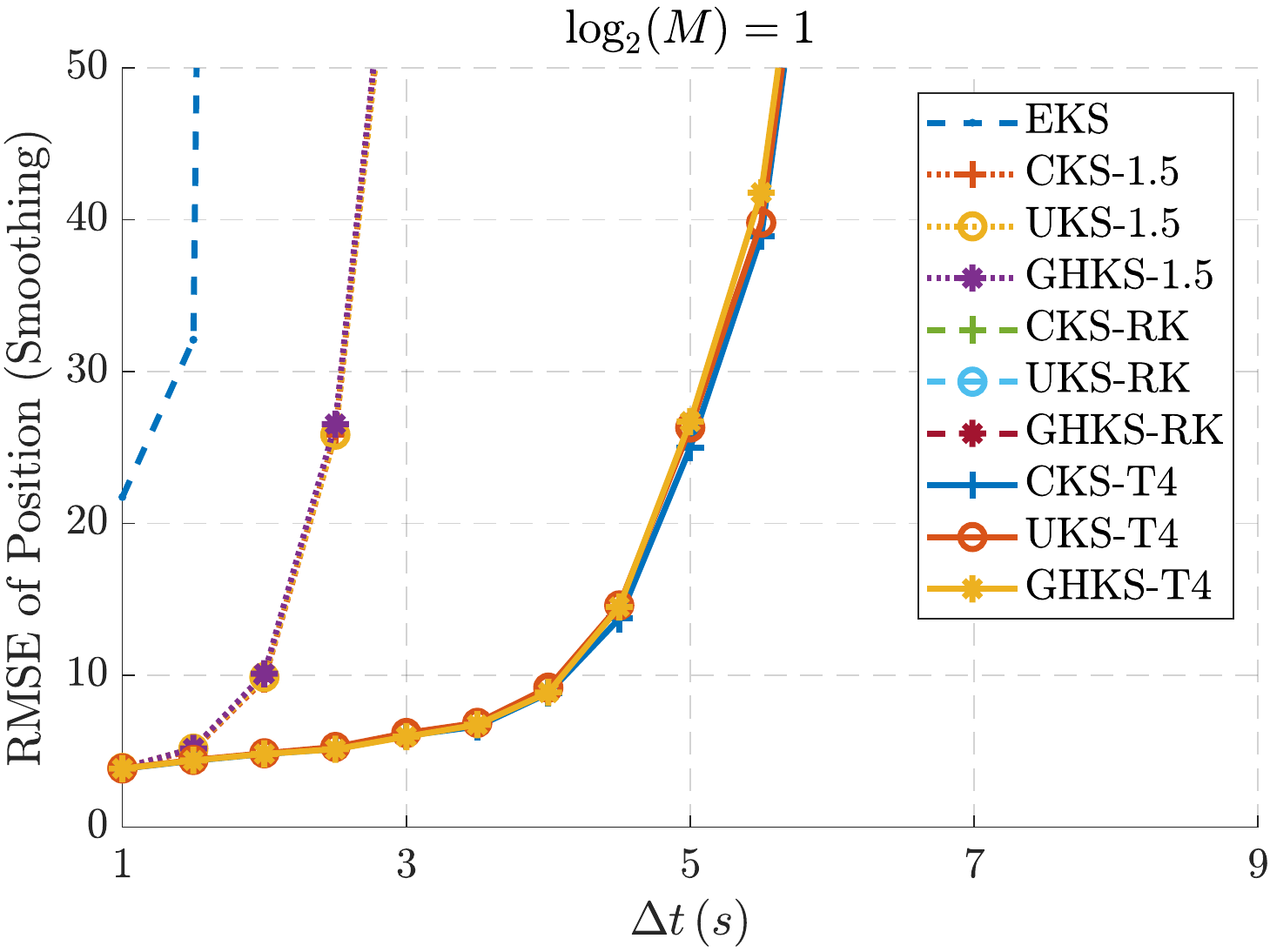}
	\includegraphics[width=0.19\linewidth]{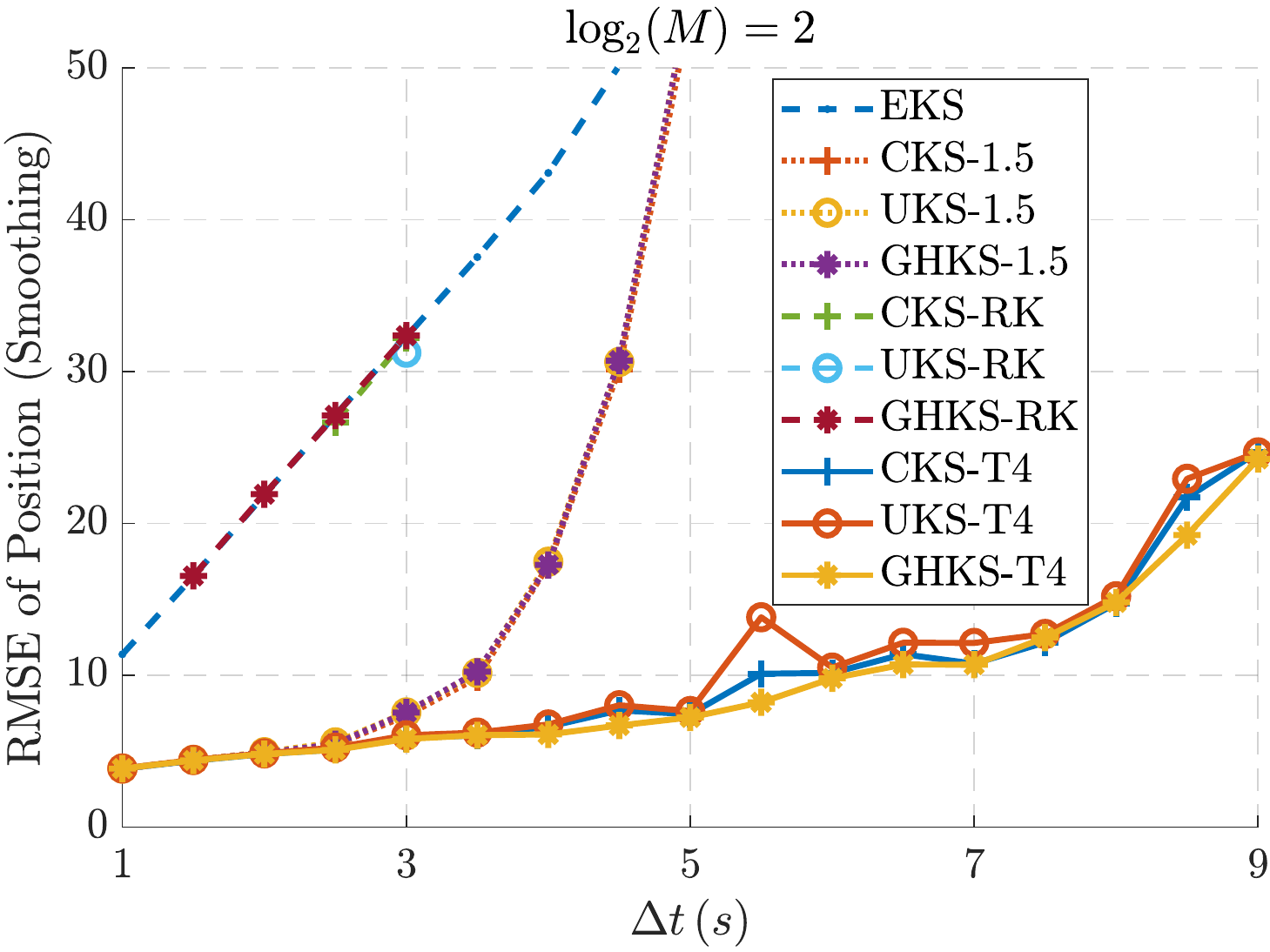}
	\includegraphics[width=0.19\linewidth]{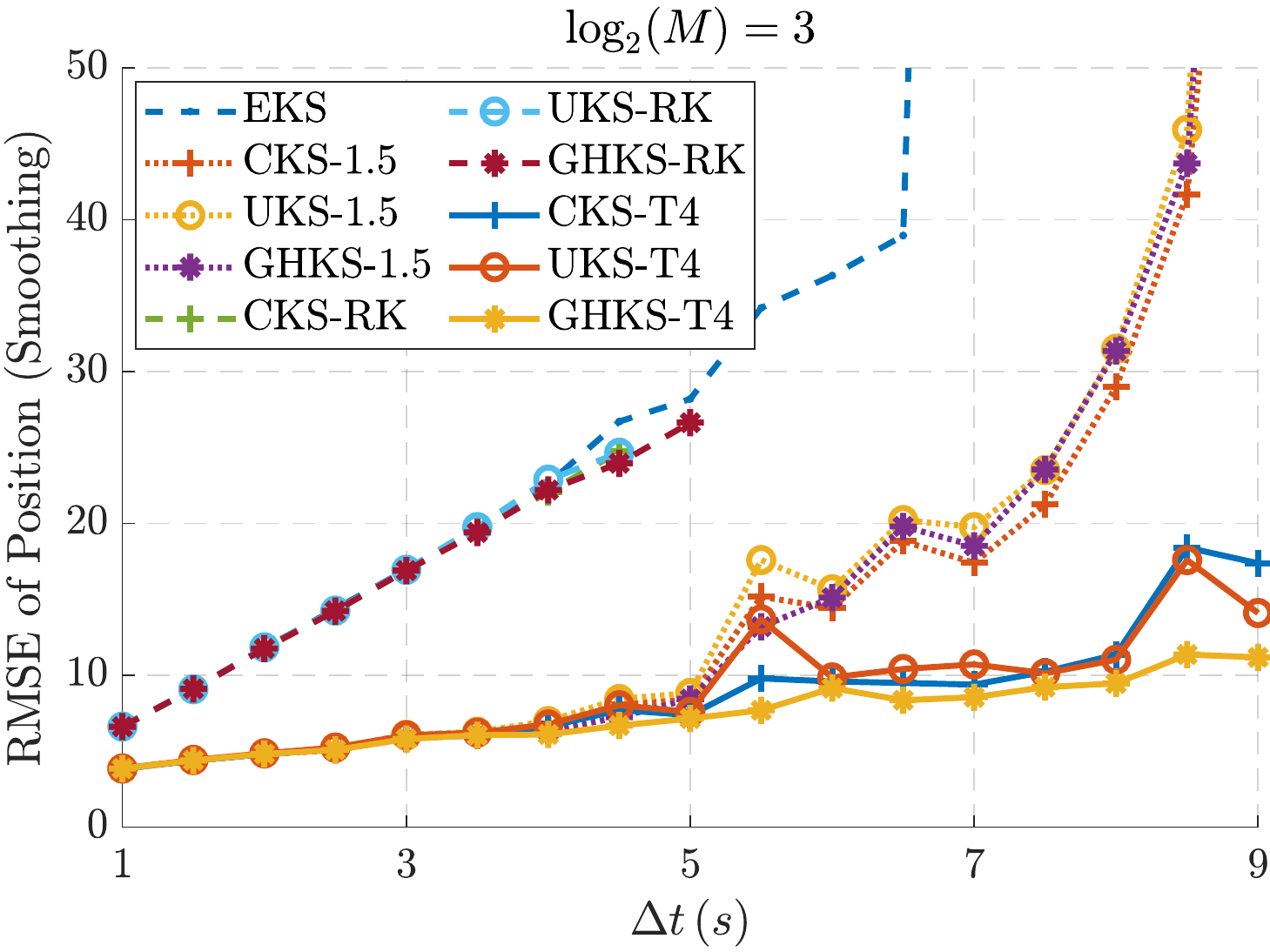}
	\includegraphics[width=0.19\linewidth]{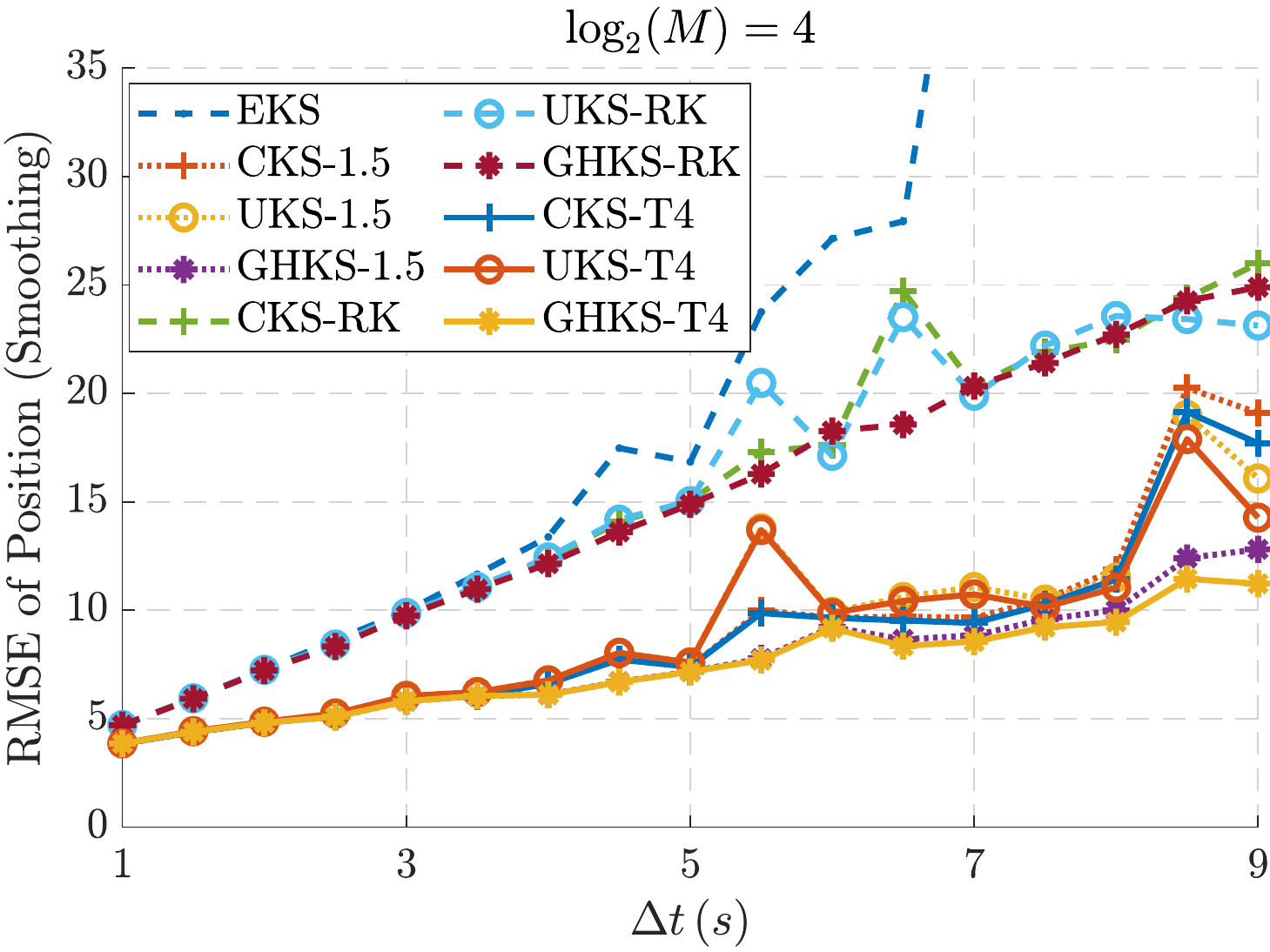}
	\includegraphics[width=0.19\linewidth]{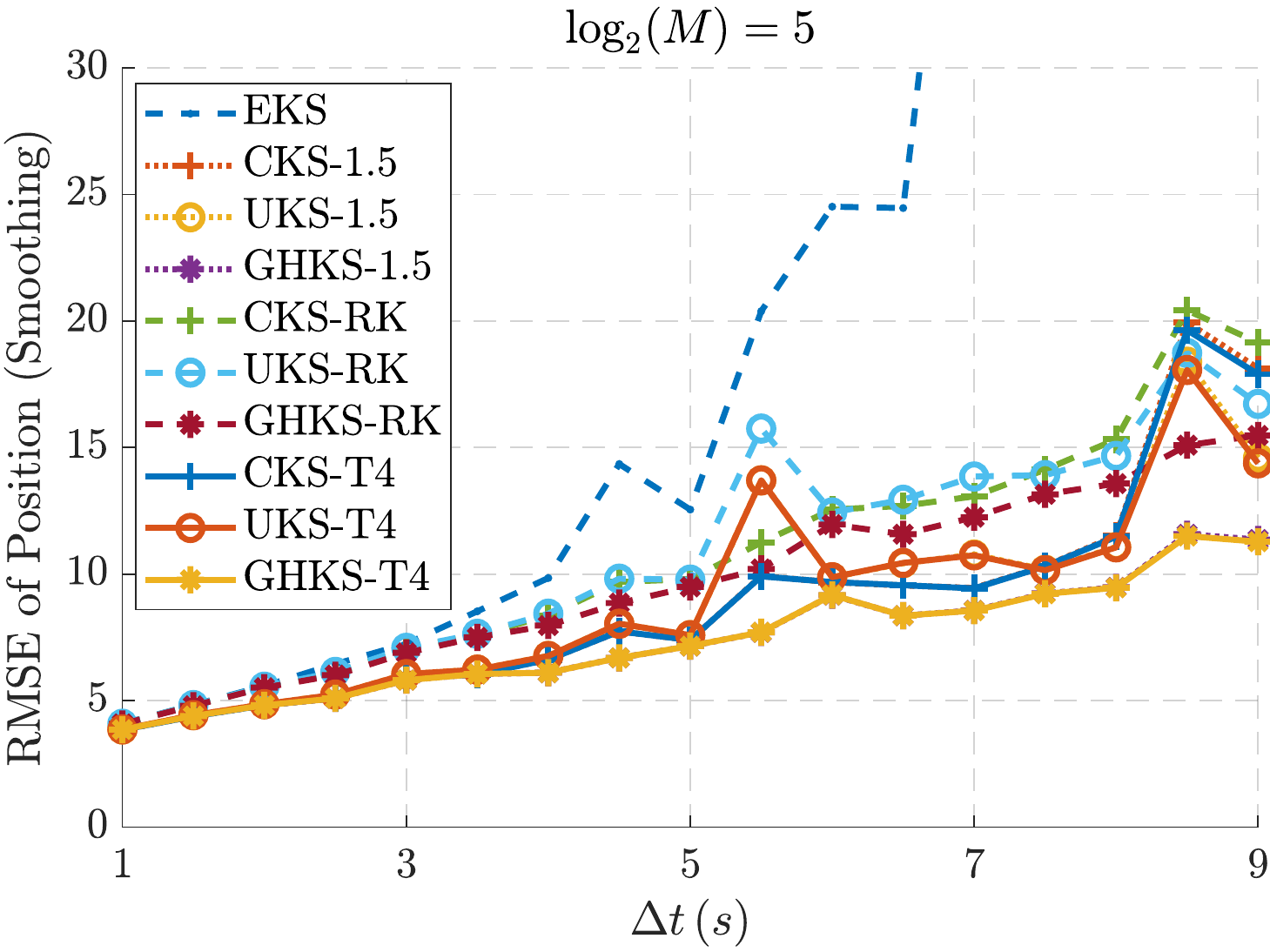}
	\caption{The RMSE of smoothing results over 100 independent Monte Carlo runnings. The $M$ and $\Delta t$ are the number of integration steps and the time interval, respectively. From top to bottom rows, we control the expansion order of TME smoothers to be 2, 3, and 4. From left to right columns, we control the number of integration steps $M$. }
	\label{fig:rmse-smooth-all}
\end{figure*}

\begin{figure*}[t!]
	\centering
	\includegraphics[width=.3\linewidth]{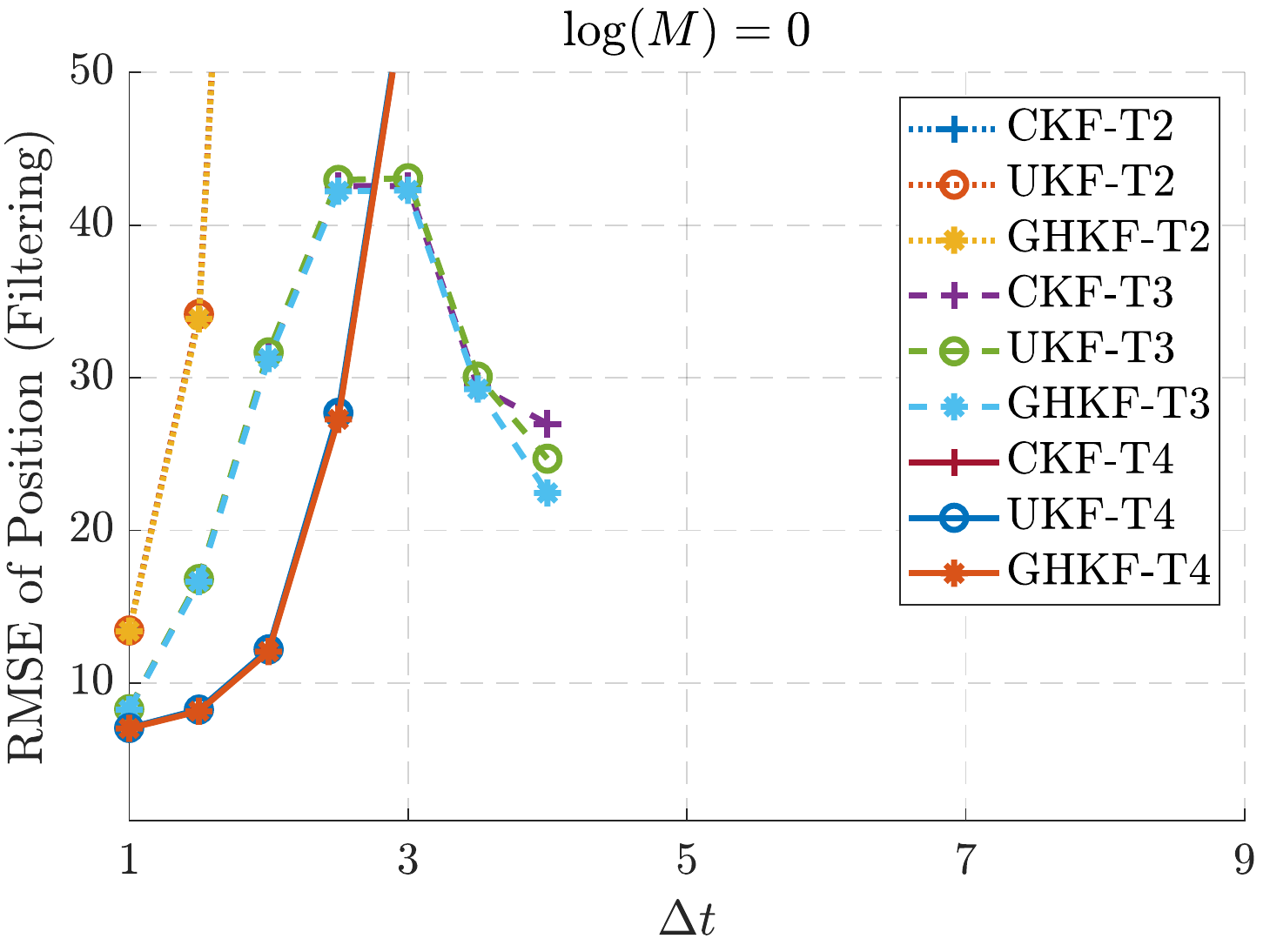}
	\includegraphics[width=.3\linewidth]{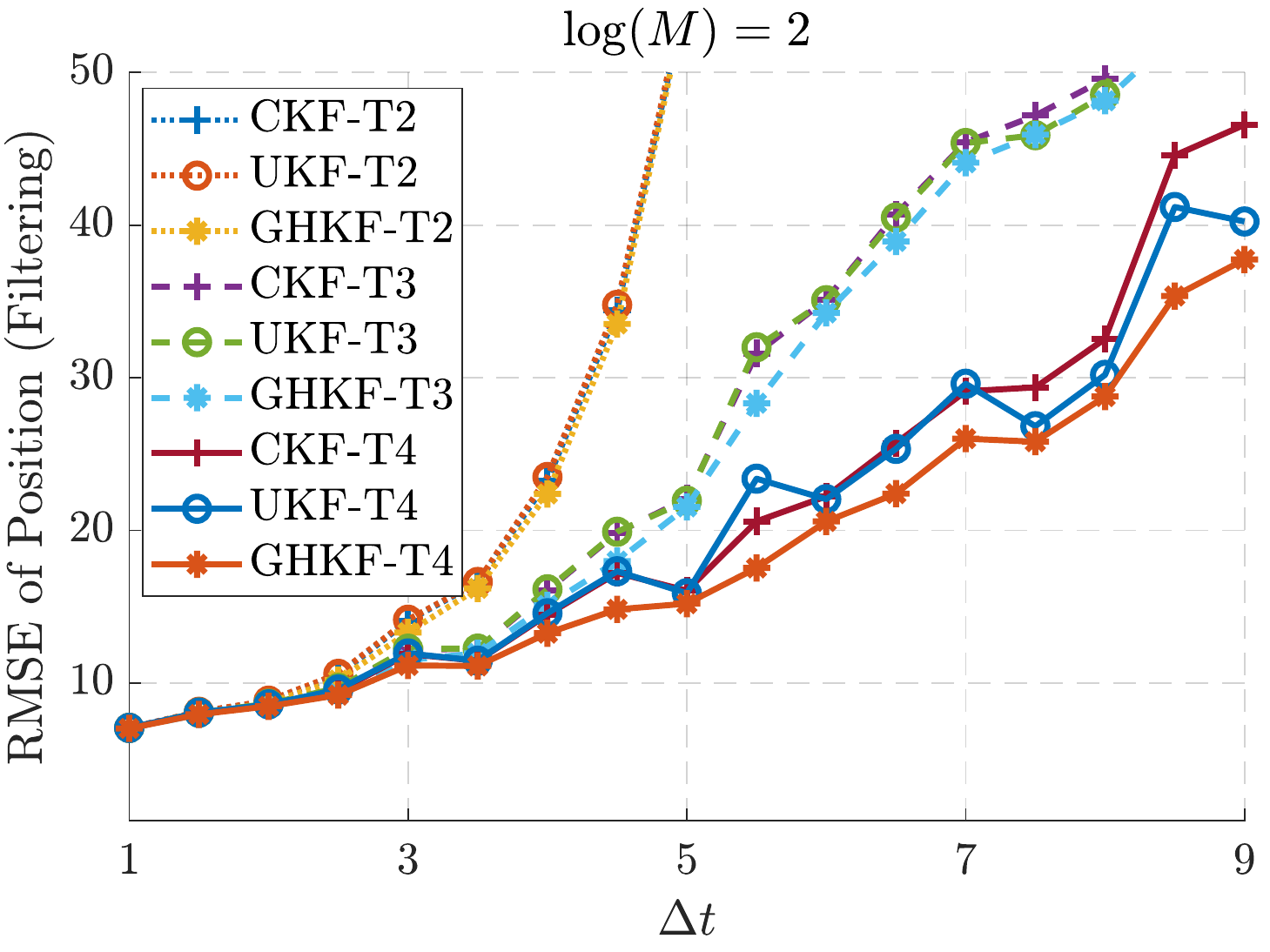}
	\includegraphics[width=.3\linewidth]{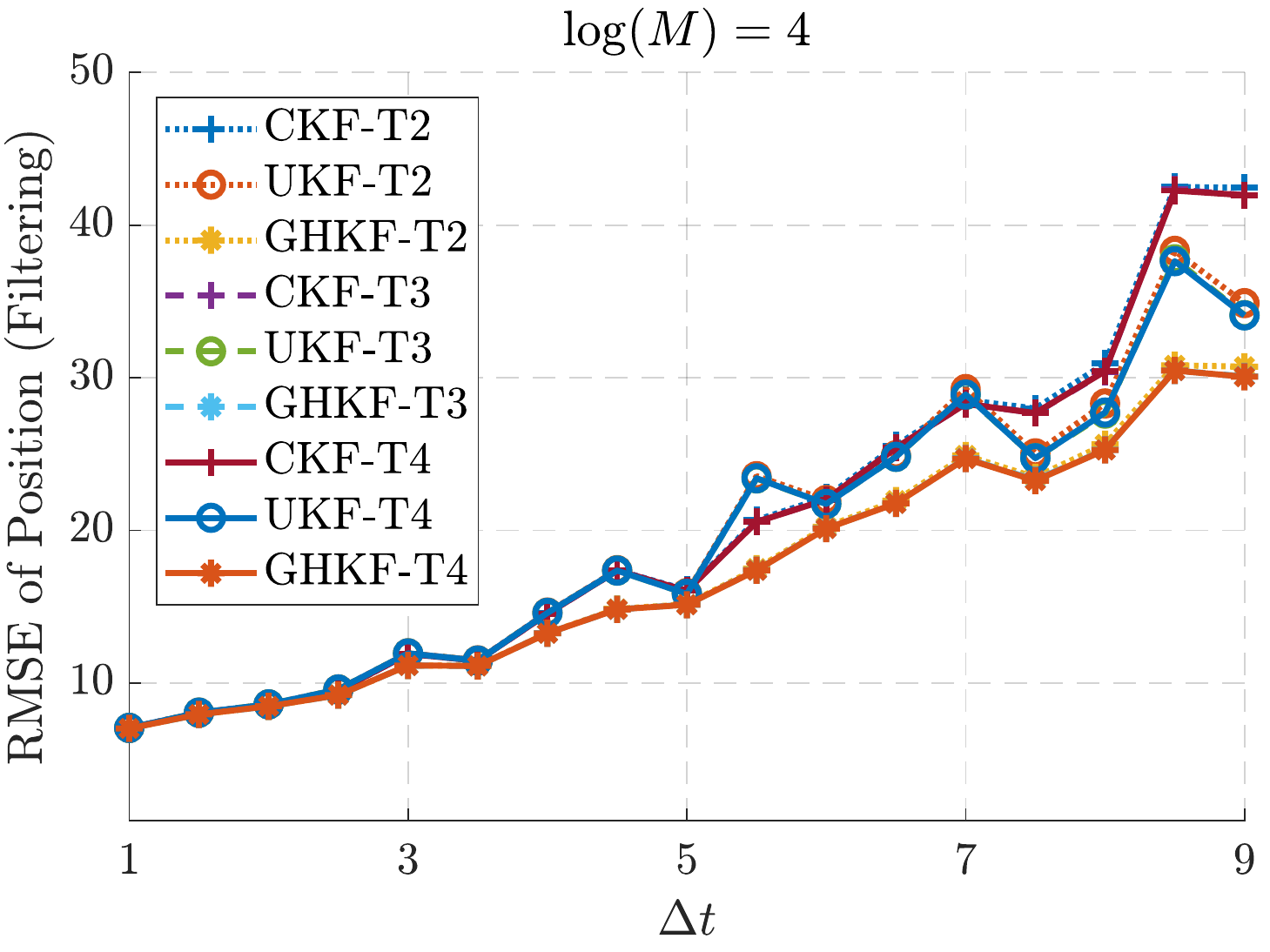}\\
	\includegraphics[width=.3\linewidth]{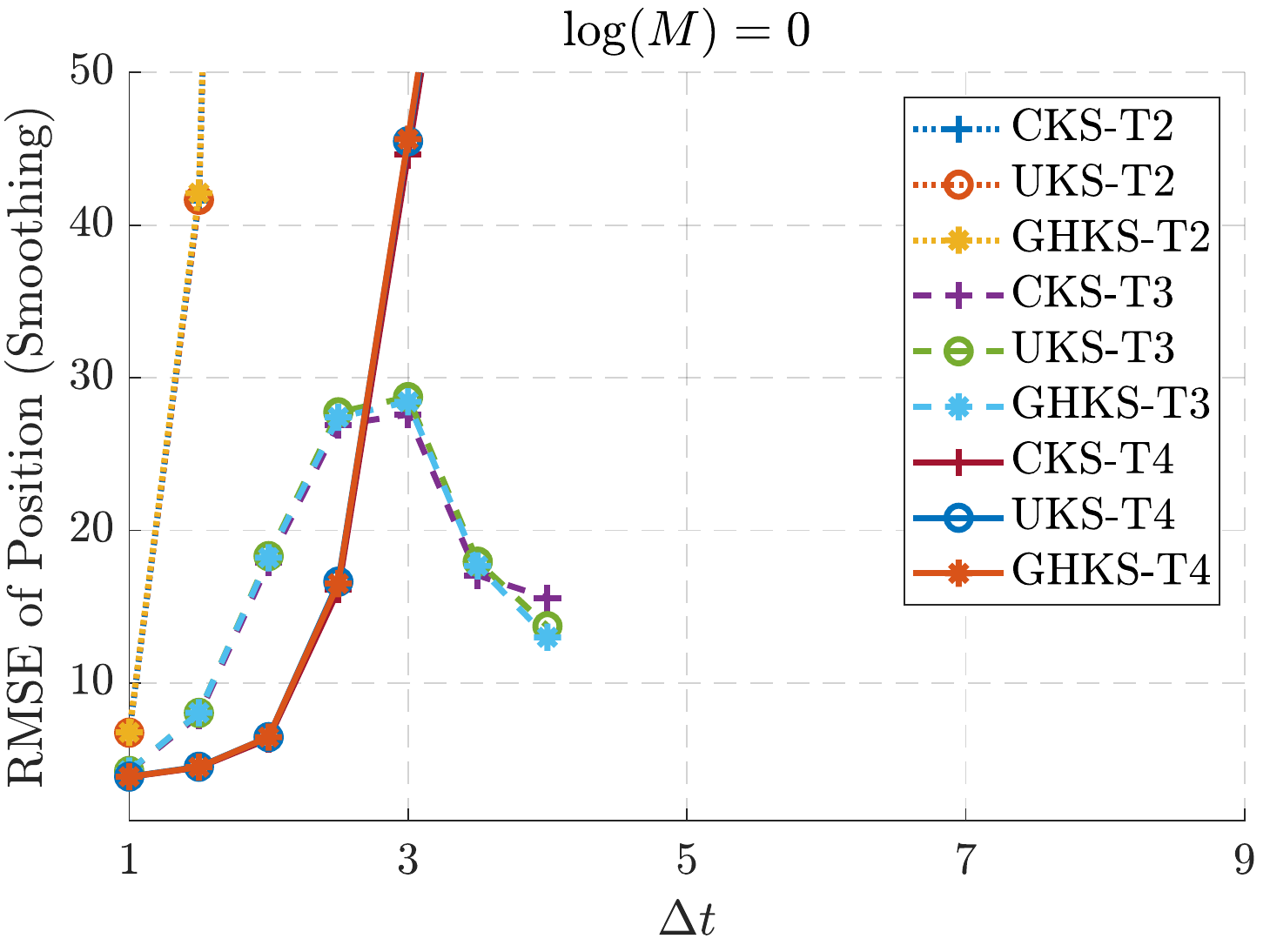}
	\includegraphics[width=.3\linewidth]{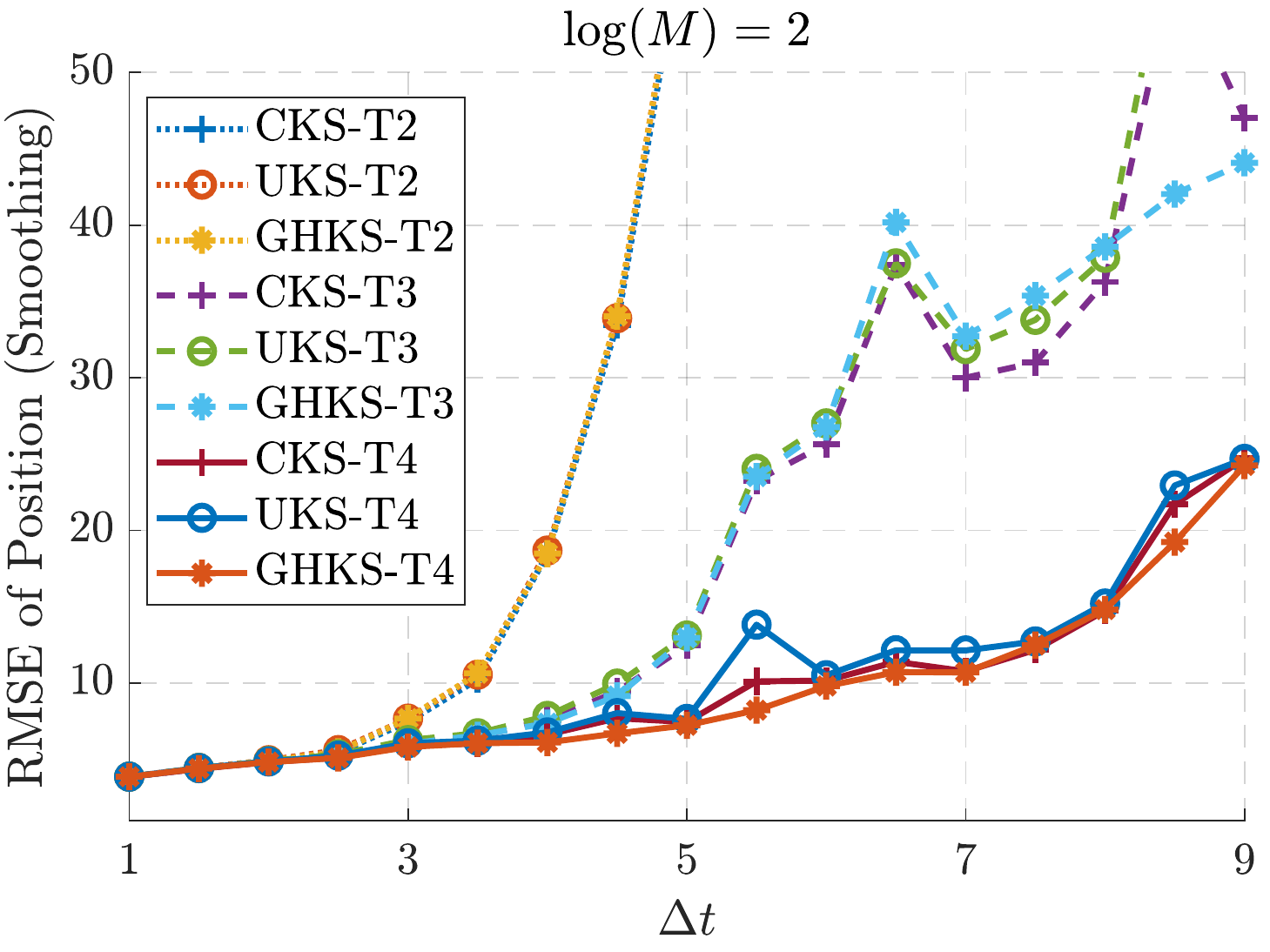}
	\includegraphics[width=.3\linewidth]{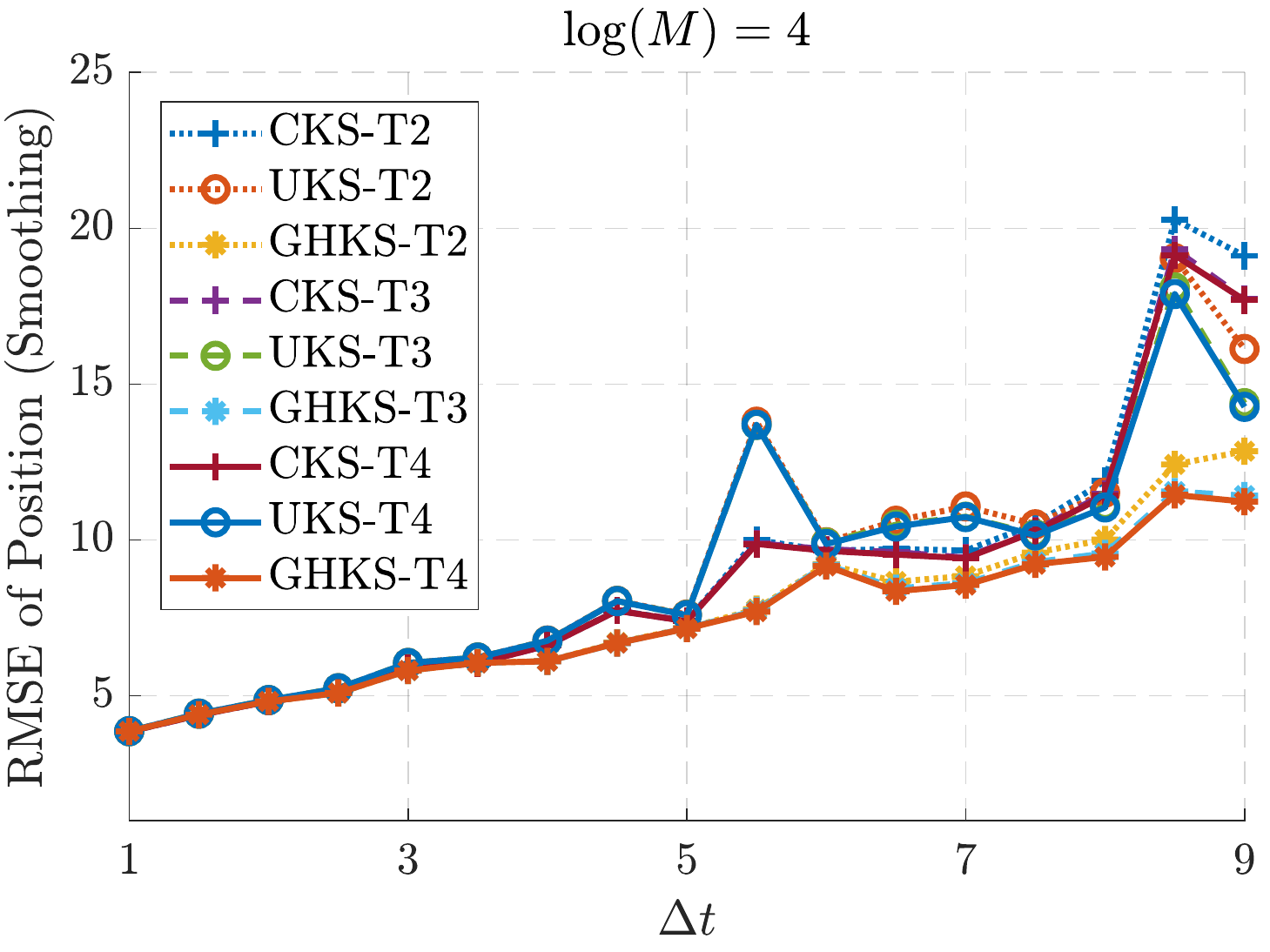}
	\caption{Comparison of TME filters and smoothers with different expansion orders. }
	\label{fig:rmse-TME-order}
\end{figure*}
\begin{figure*}[t!]
	\centering
	\includegraphics[width=.328\linewidth]{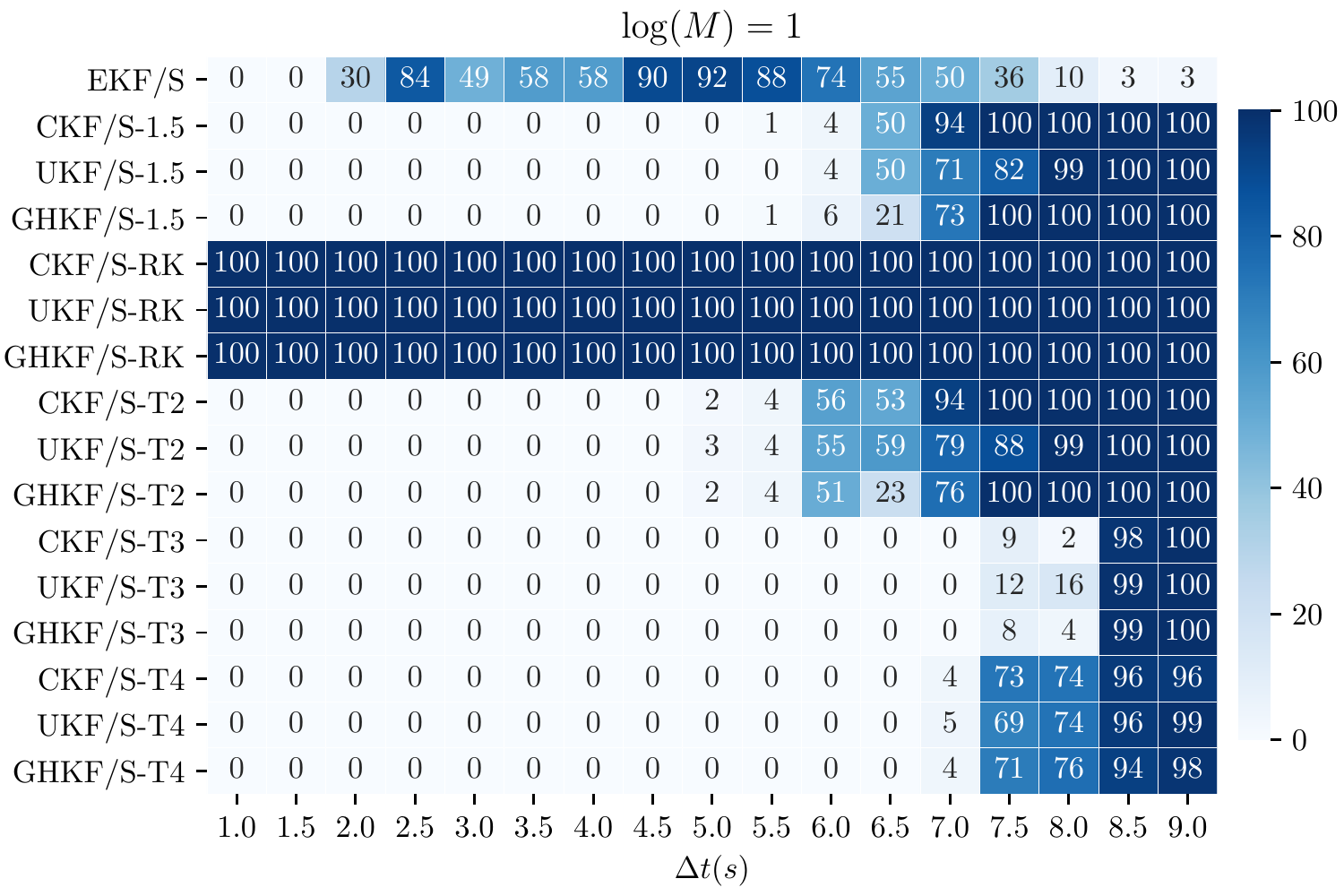}
	\includegraphics[width=.328\linewidth]{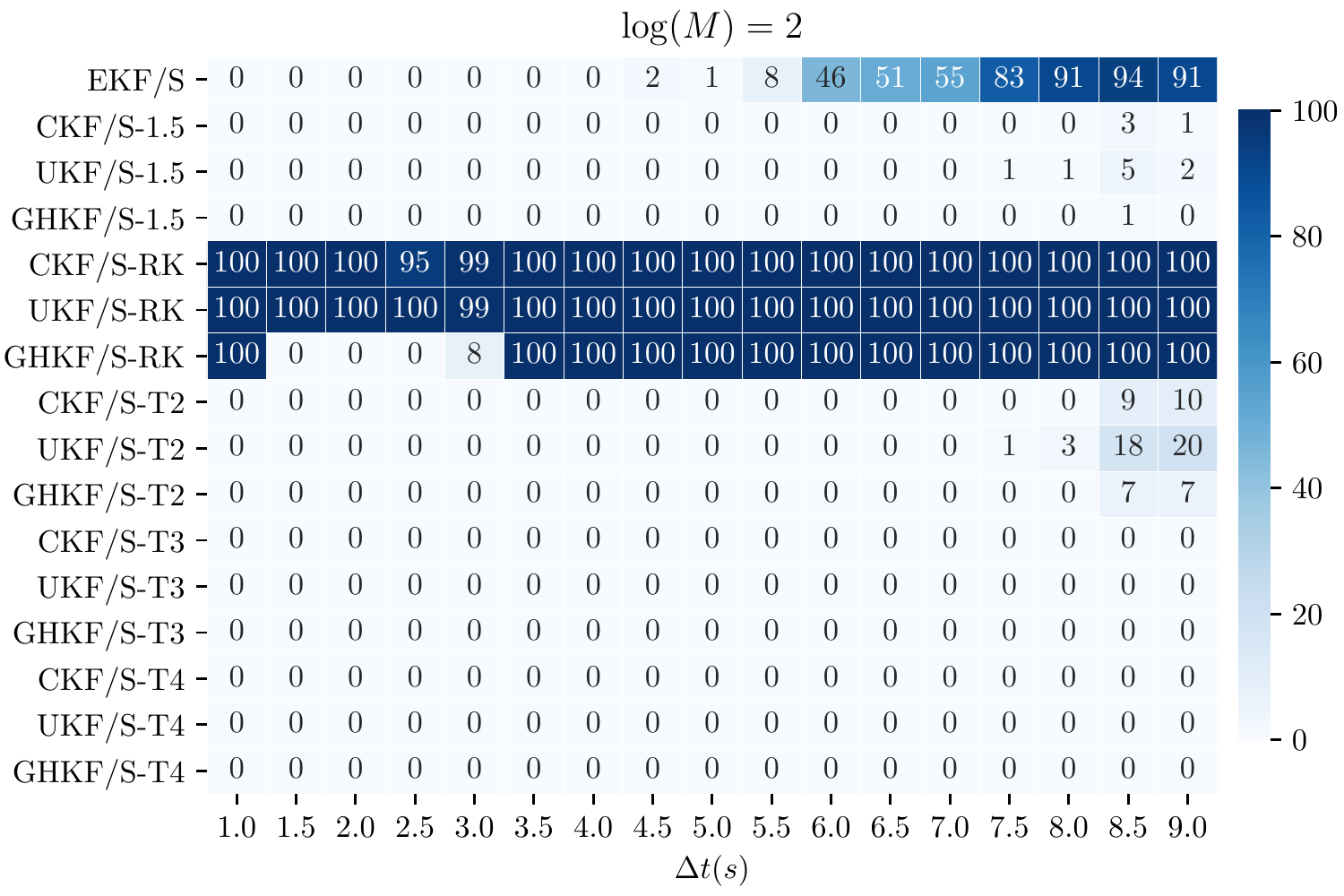}
	\includegraphics[width=.328\linewidth]{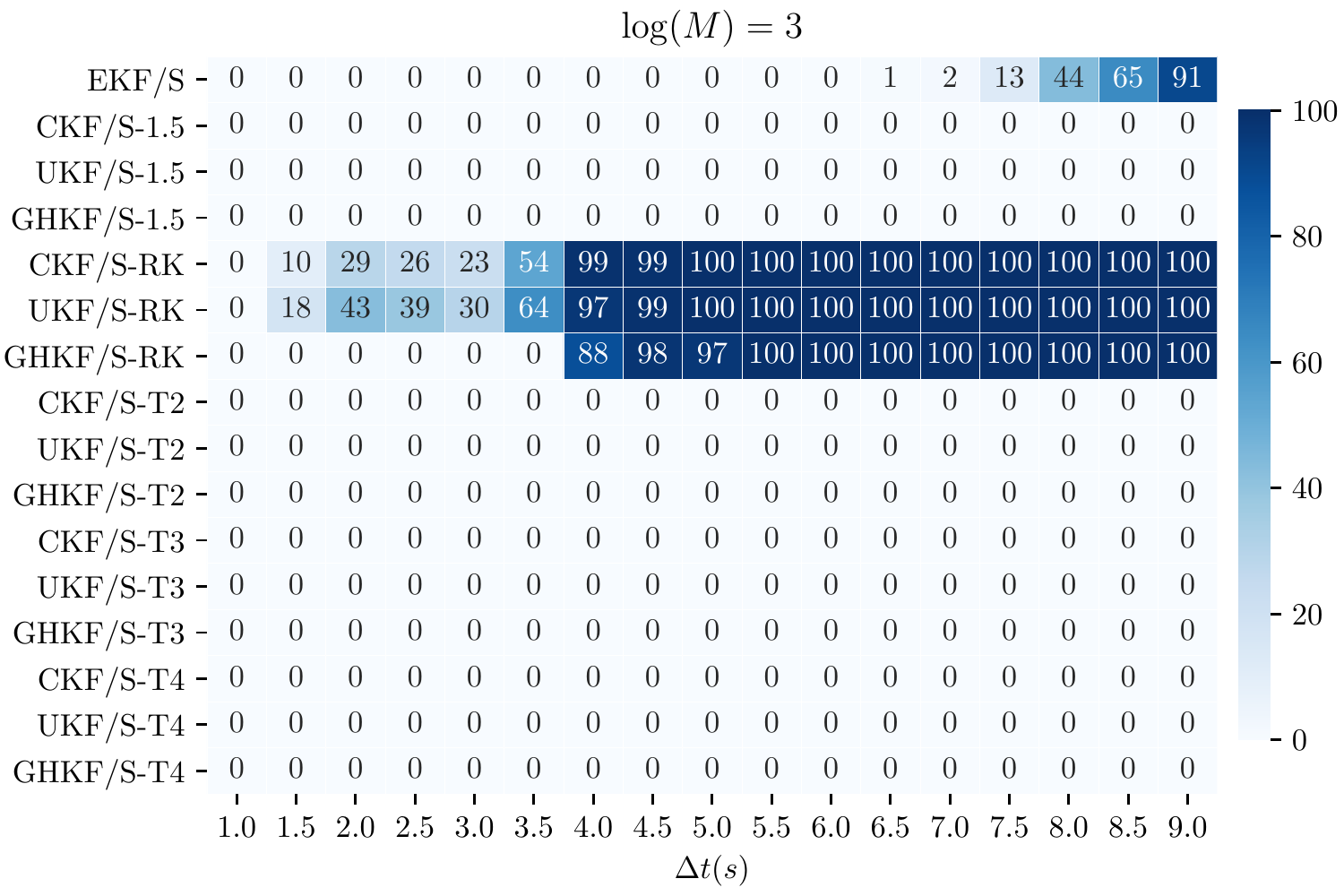}
	\caption{Number of divergence of filters and smoothers. The figures for $\log_2(M)\geq 4$ are of not interests, because It\^{o}-1.5 and TME have no divergences and ODE type of smoothers improve slightly. }
	\label{fig:num-div-all}
\end{figure*}

In Figure~\ref{fig:rmse-filter-all}, the filtering RMSE of the position is shown. It is expected that by increasing the number of integration steps $M$, the accuracy of the filters are all improved significantly, and the differences among filters become smaller. When $\log_2(M)=5$, which is high enough, their results are almost identical, except for CD-EKF. We also observe that the RMSE increases when time interval $\Delta t$ becomes larger, but the growth rate is different for different filters and integration steps. For example, on the last row of Figure~\ref{fig:rmse-filter-all} and $\log_2(M)=2$, the growth rate of RMSE of TME filters is almost linear with respect to $\Delta t$, while for It\^{o}-1.5, it appears to be of higher order. When $\Delta t$ is small enough, the results of all filters are almost the same. However, the Gauss-ODE type of filters (CD-CKD-RK, CD-UKF-RK, and CD-GHKF-RK), only produce reasonable results with large enough number of integration steps. The Linear-ODE filter (CD-EKF-RK) seems to perform well when $\Delta t$ is small, but does not compare favourably with others when $\Delta t$ is large. The performance of the Linear-ODE type filter is not consistent with respect to $\Delta t$ either.

The TME Gaussian filter is competitive with It\^{o}-1.5 and ODE type filters, even with just a $2$nd order of expansion. As shown in the first row of Figure~\ref{fig:rmse-filter-all}, the accuracy of It\^{o}-1.5 and ODE type of filters are only slightly better than TME-2 filters. The TME-3 and TME-4 filters substantially outperform other types of filters, especially when the time interval $\Delta t$ is large. 

Compared to others, the TME filters also require less number of integration steps to achieve a same level of accuracy. As illustrated in Figure~\ref{fig:rmse-filter-all}, the RMSE growth rate of TME-2, TME-3, and TME-4 become nearly linear with respect to $\Delta t$ starting from $\log_2(M)=4, 3, $ and $2$, respectively. For It\^{o}-1.5 filters, only when $\log_2(M)\geq4$, the RMSE growth rate starts to be linear enough. This reveals that when fixing the integration steps $M$, the TME filters have better tolerance for larger $\Delta t$, and less chance for RMSE to grow rapidly. It is also beneficial in real applications, because performing additional integration steps usually comes with a trade-off between the time efficiency and accuracy. 

From Figure~\ref{fig:rmse-filter-all}, we also discover that the 3rd order Gauss--Hermite numerical integration method is generally better than unscented transform and spherical cubature methods. However, in this coordinate turn model, using Gauss--Hermite requires $7^3=2187$ number of sigma-point evaluations, while the cubature and unscented transform methods only need $14$ and $15$ evaluations, respectively.

In Figure~\ref{fig:rmse-smooth-all}, the RMSE results of the smoothers are shown. The accuracies are all improved considerably comparing to their corresponding filtering results. In general, the patterns of the smoothing results are almost the same to the filtering results shown in Figure~\ref{fig:rmse-filter-all}, where the TME Gaussian smoothers also outperform others in terms of accuracy with respect to the time interval $\Delta t$ and integration steps $M$. The Linear-ODE type of smoother performs the worst. The Gauss-ODE type of smoothers are no better than the numerical time-discretisation based smoothers (It\^{o}-1.5 and TME), and only start to produce reasonable results from $\log_2(M)\geq 3$. 

It is also worth noticing from Figure~\ref{fig:rmse-smooth-all} that the Gauss-ODE type of smoothers have almost linear RMSE growth with respect to $\Delta t$. This property seems to be beneficial when dealing with large $\Delta t$. However, in fact, the Gauss-ODE type of smoothers are not even numerically stable with large $\Delta t$ (see, Figure~\ref{fig:num-div-all}). Moreover, for almost every choice of $\Delta t$ and $M$ in Figure~\ref{fig:rmse-smooth-all}, the ODE type of smoothers are actually worse than the TME smoothers.

In Figure~\ref{fig:rmse-TME-order}, we examine the RMSE of TME filters and smoothers with different Taylor expansion orders. Generally, the errors are improved with higher order of expansion. However, the improvement is only found to be significant when the time interval $\Delta t$ is large or the number of integration steps $\log_2(M)$ is small. When $\log_2(M)\geq4$, the TME filters and smoothers are fairly similar with different expansion orders.  

In the first column of Figure~\ref{fig:rmse-TME-order}, it is surprising that the TME-3 filter performs even better with $\Delta t$ growing larger from $\Delta t=3$~s when $\log_2(M)=0$. This phenomenon is also found in the second row of Figure~\ref{fig:rmse-filter-all} when $\log_2(M)=1$. In addition, the TME-3 has particularly better numerical stability, as shown in Figure~\ref{fig:num-div-all}. Considering the TME estimator is highly model-sensitive, the TME-3 appears to be especially affinitive to this coordinate turn model.

Next, pertaining to the numerical stability, we record the number of divergences for the filters and smoothers in Figure~\ref{fig:num-div-all}. This is recorded once the non-positive definite covariance estimate manifests or the estimate is unbounded (NaN errors) in both of the filtering and smoothing procedures. First, we find that the stability of ODE type of filters and smoothers are the worst. The Gauss-ODE type of filters and smoothers even diverge for almost every $\Delta t$ when $\log_2(M)\leq 2$. With $\log_2(M)\geq3$, the stability is improved, but they still diverge from every $\Delta t\geq 5$. The stability of Linear-ODE type of filter and smoothers is also poor and inconsistent. 

The time-discretisation (It\^{o}-1.5 and TME) methods are generally more numerically stable than the ODE type of methods. When $\log_2(M)\geq 2$, the It\^{o}-1.5 and TME filters and smoothers converge for almost every choice of $\Delta t$. The It\^{o}-1.5 filters and smoothers are only comparable to the TME-2 filters and smoothers. By increasing the expansion order to three and four, the number of divergences of TME filters and smoothers decreases, and are less than of It\^{o}-1.5. The TME-3 is more stable than TME-4 in this model when $\log_2(M)=1$. For the numerical integration methods, the filters and smoothers using Gauss--Hermite are slightly more stable than those with spherical cubature or unscented transform. 

\begin{figure}[h!]
	\centering
	\includegraphics[width=.95\linewidth]{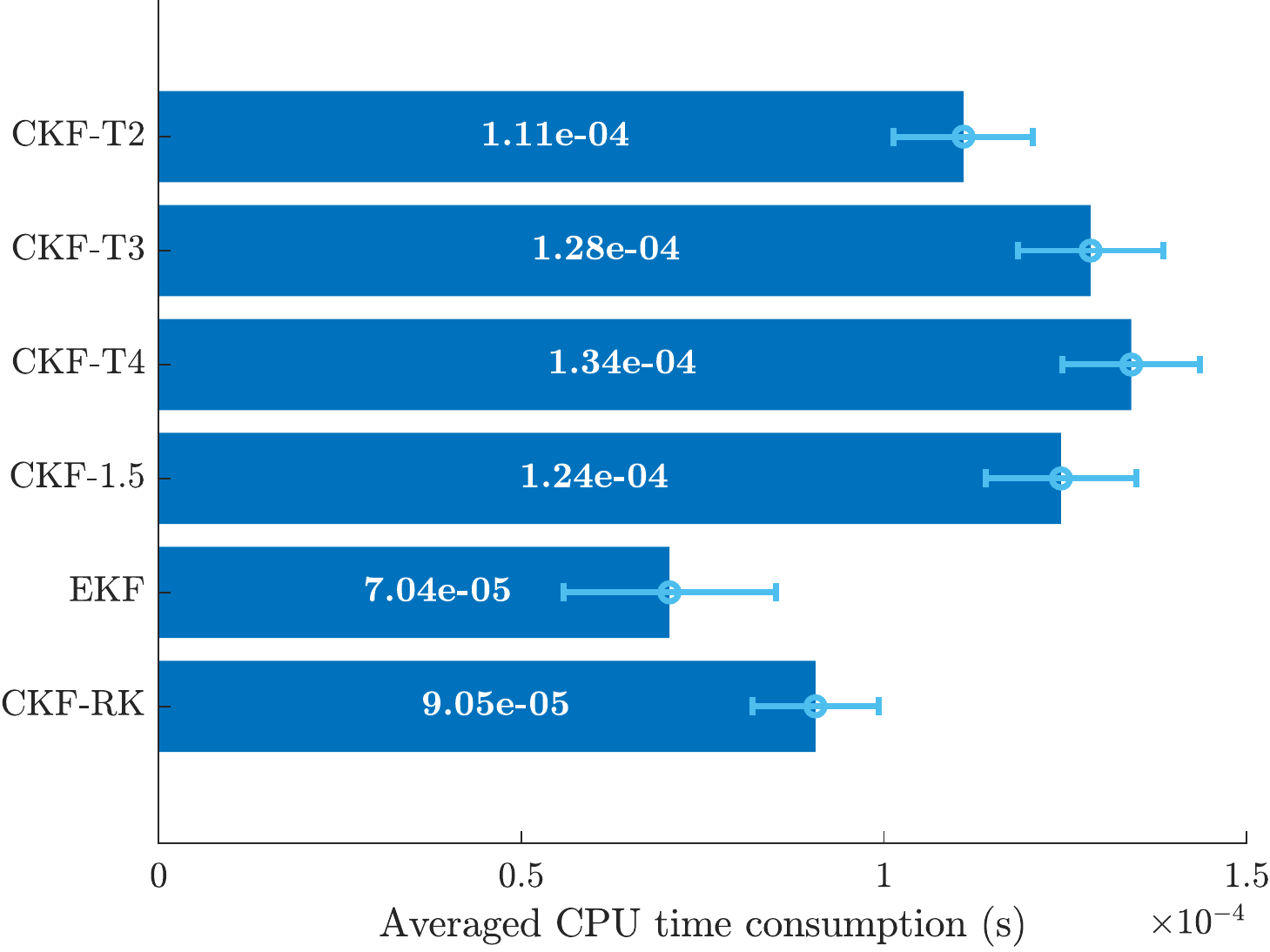}
	\caption{Averaged CPU time through $10^6$ independent running on an Xeon\textsuperscript{\textregistered} E3-1230v5 workstation with MATLAB\textsuperscript{\textregistered} R2019a implementation. The error bars indicate the standard deviation by Monte Carlo runs. }
	\label{fig:cpu-time}
\end{figure}

To compare the actual computational efficiencies of filters, we perform $10^6$ independent runs of the prediction steps and calculate the mean CPU time consumption. We focus on comparing only the prediction step because that is the only difference among the involved Gaussian filters. For simplicity, the smoothers are not compared. We uniformly use the same cubature integration method for the Gaussian integral evaluations. The result is illustrated in Figure~\ref{fig:cpu-time}. We observe that the ODE type of filters (EKF and CKF-RK) are the most efficient. For the time-discretisation based methods (TME and It\^{o}-1.5), the CKF-T2 and CKF-T4 are the least and most time consumable, respectively. However, we also find that the efficiencies of the time-discretisation based methods are very similar, which means that the impacts brought by increasing the expansion order can be negligible to certain levels. For example, the CKF-T4 is only slightly higher than It\^{o}-1.5 with $0.1\times 10^{-4}$~s.

Overall, in terms of estimation accuracy, the second order TME filter and smoother is comparable to the It\^{o}-1.5 and the ODE type of filters and smoothers. With higher order expansions, the third and fourth order TME significantly outperform them on both filtering and smoothing tasks. The combination of TME with Gauss--Hermite integration is moderately better than cubature and unscented transform. Although the covariance estimate of TME is not ensured to be positive definite, the third and fourth order TME filter and smoother actually achieve the least number of divergence, followed by It\^{o}-1.5 and second order TME. The ODE type of methods, especially Gauss-ODE are surprisingly the most unstable among them. Provided with enough integration steps, the time-discretisation based methods TME and It\^{o}-1.5, will converge. 

\section{Conclusion}
\label{sec:conclusion}

In this paper, we have proposed a novel method for continuous-discrete non-linear Gaussian filtering and smoothing, where the system dynamics are characterised by an SDE. The core contribution is in how to form a Gaussian approximation to the transition density of the SDE. Differently from the traditional It\^{o}--Taylor discretisation, Gaussian, and linearised ODE methods, we proposed to exploit a Taylor moment expansion scheme, where the moment functions are time-discretised by a Taylor expansion. The benefit is that the mean and covariance solutions are asymptomatically exact.

We then derived the corresponding TME Gaussian filter and smoother. In addition, we analysed the positive definiteness of the TME covariance estimates and stability of the TME Gaussian filter and smoother. The numerical experiments indicate that even a second order TME Gaussian filter and smoother is in line with the state-of-the-art. With higher expansion order, the proposed TME filter and smoother substantially excel the state-of-the-art methods, in terms of both estimation accuracy and numerical stability. 

\appendices
\section{Proof of Lemma \ref{lemma:general-covariance}}
\label{sec:append-2}
Let $\beta^{u,v}_r\triangleq\beta_r(x_u\,x_v) = \mathcal{A}^r(x_u\,x_v)$ denote the $r$-th iteration of operator $\mathcal{A}$, and $\partial_i \beta^{u,v}_r = \partial \beta^{u,v}_r / \partial x_i$. For $r = 0$ and  $1$, we find
\begin{equation}
\begin{split}
\alpha^u_0 &= x_u, \\
\alpha^u_1 &= f_u,\\
&\vdots \\
\alpha^v_0 &= x_v, \\
\alpha^v_1 &= f_v,\\
&\vdots \\
\beta^{u,v}_0 &= x_ux_v = \alpha^u_0\alpha^v_0,\\
\beta^{u,v}_1 &= \alpha^u_0\alpha^v_1 + \alpha^v_0\alpha^u_1 + \Gamma_{uv}.\\
&\vdots
\end{split}
\label{equ:iter-pattern-multi}
\end{equation}
We can calculate $\Phi^{u,v}_{x, r}$ by \eqref{equ:psd-first} and initially reveal $\Phi^{u,v}_{x, 0} = 0$, $\Phi^{u,v}_{x, 1}=\Gamma_{uv}$. From this pattern above, we will first show $\beta^{u, v}_r$ has a general expression
\begin{equation}
\beta_r = \sum^r_{s=0}\binom{r}{s}\alpha^u_s\alpha^v_{r-s} + \Phi^{u,v}_{x,r},
\label{equ:assump-x2-multi}
\end{equation}
where $\Phi^{u,v}_{x,r}$ is
\begin{equation}
\begin{split}
\Phi^{u,v}_{x,r}&=\sum_{i, j}\sum^{r-1}_{s=0}\binom{r-1}{s}\left( \partial_i\alpha^u_s\,\partial_j\alpha^v_{r-s} \right)\Gamma_{ij} + \mathcal{A}\Phi^{u,v}_{x,r-1},
\end{split}
\label{equ:assump-x2-psi-multi}
\end{equation}
It is apparent that \eqref{equ:assump-x2-multi} and \eqref{equ:assump-x2-psi-multi} hold for $r=1$. By Algorithm~\ref{def:taylor-moment-estimator}, the iteration of $\beta^{u,v}_r$ is
\begin{equation}
\begin{split}
\beta^{u,v}_{r+1} = \mathcal{A}(\beta^{u,v}_{r})=\sum_i\tash{\beta^{u,v}_r}{x_i}f_i + \frac{1}{2}\sum_{i, j}\frac{\partial^2\beta^{u,v}_r}{\partial x_i\partial x_j}\Gamma_{ij}.
\label{equ:A-B-evolution-multi}
\end{split}
\end{equation}
With Equation~\eqref{equ:assump-x2-multi}, we continue Equation~\eqref{equ:A-B-evolution-multi} and deduce that
\begin{equation}
\begin{split}
&\sum_i\sum^r_{s=0}\binom{r}{s}\left(\partial_i\alpha^u_s\alpha^v_{r-s} + \alpha^u_s\partial_i\alpha^v_{r-s} + \partial_i\Phi^{u,v}_{x,r}\right)f_i \\
&\quad+\frac{1}{2}\sum_{i, j}\Biggl(\sum^r_{s=0}\binom{r}{s}\bigl(\tashh{\alpha_s^u}{x_i}{x_j}\alpha^v_{r-s} + \partial_i\alpha_s^u\partial_j\alpha^v_{r-s} \\
&\quad\qquad\qquad\qquad\quad\quad+\partial_j\alpha_s^u\partial_i\alpha^v_{r-s} + \alpha^u_s\tashh{\alpha^v_{r-s}}{x_i}{x_j} \\
&\quad\qquad\qquad\qquad\quad\quad+ \tashh{\Phi^{u,v}_{x,r}}{x_i}{x_j}\bigr)\Biggr)\Gamma_{ij} \\
&= \sum^r_{s=0}\binom{r}{s}\left(\alpha^u_{s+1}\alpha^v_{r-s} + \alpha^u_s\alpha^v_{r-s+1}\right)\\
&\quad+ \sum_{i, j}\left(\sum^r_{s=0}\binom{r}{s}\partial_i\alpha_s^u\partial_j\alpha^v_{r-s} \right)\Gamma_{ij} +\mathcal{A}\Phi^{u,v}_{x,r} \\
&= \sum^{r+1}_{s=0}\binom{r}{s-1}\left(\alpha^u_s\alpha^v_{r-s+1}\right) + \sum^{r+1}_{s=0}\binom{r}{s}\left(\alpha^u_s\alpha^v_{r-s+1}\right)\\
&\quad+ \Phi^{u,v}_{x,r+1} \\
&= \sum^{r+1}_{s=0}\binom{r+1}{s}\alpha_s\alpha_{r-s+1} + \Phi^{u,v}_{x,r+1} \\
&= \beta_{r+1}
\end{split}
\end{equation}
Thus expressions \eqref{equ:assump-x2-multi} and \eqref{equ:assump-x2-psi-multi} are proved by induction for $r\geq 1$. From Lemma \ref{theorem:psd}, we now derive an iterated form of $\Phi_{x, r}$ as
\begin{equation}
\begin{split}
\Phi^{u,v}_{x, r} &= \beta^{u,v}_r - \sum^r_{s=0}\binom{r}{s}\alpha^u_s\alpha^v_{r-s}\\
&=\sum_{i, j}\sum^{r-1}_{s=0}\binom{r-1}{s}\left( \partial_i\alpha^u_s\,\partial_j\alpha^v_{r-s-1}\right)\Gamma_{ij} +\mathcal{A}\Phi^{u,v}_{x,r-1}.
\end{split}
\end{equation}
In matrix form, it is then
\begin{equation}
\begin{split}
\Phi_{x,r}^{u, v} = \sum^{r-1}_{s=0}\binom{r-1}{s}\trace\left((\nabla\alpha^u_s)^\trans\,\nabla\alpha^v_{r-s-1}\,\bm{\Gamma}\right) +\mathcal{A}\Phi_{x,r-1}^{u, v}.
\label{equ:phi-iter-form}
\end{split} 
\end{equation}
We can also alternatively rearrange the iterated \eqref{equ:phi-iter-form} with
\begin{equation}
\Phi^{u,v}_{x, r} = \sum^{r-1}_{s=0}\mathcal{A}^{s}\sum^{r-s-1}_{l=0}\binom{r-s-1}{l}\trace\left((\nabla\alpha^u_s)^\trans\,\nabla\alpha^v_{r-s-1-l}\,\bm{\Gamma}\right),
\end{equation}
starting from $\Phi_{x, 0}=0$. 

\bibliographystyle{IEEEtran}
\bibliography{refs}

\begin{thebibliography}{10}
\providecommand{\url}[1]{#1}
\csname url@samestyle\endcsname
\providecommand{\newblock}{\relax}
\providecommand{\bibinfo}[2]{#2}
\providecommand{\BIBentrySTDinterwordspacing}{\spaceskip=0pt\relax}
\providecommand{\BIBentryALTinterwordstretchfactor}{4}
\providecommand{\BIBentryALTinterwordspacing}{\spaceskip=\fontdimen2\font plus
\BIBentryALTinterwordstretchfactor\fontdimen3\font minus
  \fontdimen4\font\relax}
\providecommand{\BIBforeignlanguage}[2]{{%
\expandafter\ifx\csname l@#1\endcsname\relax
\typeout{** WARNING: IEEEtran.bst: No hyphenation pattern has been}%
\typeout{** loaded for the language `#1'. Using the pattern for}%
\typeout{** the default language instead.}%
\else
\language=\csname l@#1\endcsname
\fi
#2}}
\providecommand{\BIBdecl}{\relax}
\BIBdecl

\bibitem{sarkka_2013}
S.~S\"{a}rkk\"{a}, \emph{{B}ayesian Filtering and Smoothing}, ser. Institute of
  Mathematical Statistics Textbooks.\hskip 1em plus 0.5em minus 0.4em\relax
  Cambridge University Press, 2013.

\bibitem{barshalomBook2002}
Y.~Bar-Shalom, T.~Kirubarajan, and X.-R. Li, \emph{Estimation with Applications
  to Tracking and Navigation}.\hskip 1em plus 0.5em minus 0.4em\relax New York,
  NY, USA: John Wiley \& Sons, Inc., 2002.

\bibitem{zzmlsp2018}
Z.~Zhao, S.~S{\"a}rkk{\"a}, and A.~B. Rad, ``Spectro-temporal {ECG} analysis
  for atrial fibrillation detection,'' in \emph{2018 IEEE 28th International
  Workshop on Machine Learning for Signal Processing (MLSP)}, Sep. 2018.

\bibitem{zhao2018kalman}
------, ``{K}alman-based spectro-temporal {ECG} analysis using deep
  convolutional networks for atrial fibrillation detection,'' \emph{arXiv
  preprint arXiv:1812.05555}, 2018.

\bibitem{gao2019regularized}
R.~Gao, F.~Tronarp, Z.~Zhao, and S.~S{\"a}rk{\"a}, ``Regularized state
  estimation and parameter learning via augmented {L}agrangian {K}alman
  smoother method,'' in \emph{2019 IEEE 29th International Workshop on Machine
  Learning for Signal Processing (MLSP)}, Oct 2019.

\bibitem{beataBook2016}
B.~Akselsen, \emph{Kalman Filter Recent Advances and Applications}.\hskip 1em
  plus 0.5em minus 0.4em\relax United States of America: Scitus Academics LLC,
  2016.

\bibitem{JariInverse2005}
J.~Kaipio and E.~Somersalo, \emph{Statistical and Computational Inverse
  Problems}.\hskip 1em plus 0.5em minus 0.4em\relax Springer, 2005.

\bibitem{Kazufumi2000}
K.~It\^{o} and K.~Xiong, ``{G}aussian filters for nonlinear filtering
  problems,'' \emph{IEEE Transactions on Automatic Control}, vol.~45, no.~5,
  pp. 910--927, May 2000.

\bibitem{Kalman1960}
R.~E. Kalman, ``A new approach to linear filtering and prediction problems,''
  \emph{Journal of Basic Engineering}, vol.~82, no.~1, pp. 35--45, Mar 1960.

\bibitem{Rauch1965}
H.~E. Rauch, F.~Tung, and C.~T. Striebel, ``Maximum likelihood estimates of
  linear dynamic systems,'' \emph{AIAA Journal}, vol.~3, no.~8, pp. 1445--1450,
  1965.

\bibitem{jazwinski2007stochastic}
A.~Jazwinski, \emph{Stochastic Processes and Filtering Theory}.\hskip 1em plus
  0.5em minus 0.4em\relax Academic Press, 1970.

\bibitem{eric2000UKF}
E.~A. {Wan} and R.~{Van Der Merwe}, ``The unscented {K}alman filter for
  nonlinear estimation,'' in \emph{Proceedings of the IEEE 2000 Adaptive
  Systems for Signal Processing, Communications, and Control Symposium (Cat.
  No.00EX373)}, Oct 2000, pp. 153--158.

\bibitem{julierUKF2004}
S.~J. Julier and J.~K. Uhlmann, ``Unscented filtering and nonlinear
  estimation,'' \emph{Proceedings of the IEEE}, vol.~92, no.~3, pp. 401--422,
  March 2004.

\bibitem{SARKKA2010225}
S.~S\"{a}rkk\"{a}, ``Continuous-time and continuous-discrete-time unscented
  {R}auch--{T}ung--{S}triebel smoothers,'' \emph{Signal Processing}, vol.~90,
  no.~1, pp. 225--235, 2010.

\bibitem{simonCKF2009}
I.~Arasaratnam and S.~Haykin, ``Cubature {K}alman filters,'' \emph{IEEE
  Transactions on Automatic Control}, vol.~54, no.~6, pp. 1254--1269, June
  2009.

\bibitem{ARASARATNAM20112245}
------, ``Cubature {K}alman smoothers,'' \emph{Automatica}, vol.~47, no.~10,
  pp. 2245--2250, 2011.

\bibitem{juha2012}
J.~Sarmavuori and S.~S{\"a}rkk{\"a}, ``{Fourier--Hermite} {K}alman filter,''
  \emph{IEEE Transactions on Automatic Control}, vol.~57, no.~6, pp.
  1511--1515, June 2012.

\bibitem{juhaSmoother2012}
J.~Sarmavuori and S.~S\"{a}rkk\"{a}, ``{F}ourier--{H}ermite
  {R}auch--{T}ung--{S}triebel smoother,'' in \emph{2012 Proceedings of the 20th
  European Signal Processing Conference (EUSIPCO)}, Aug 2012, pp. 2109--2113.

\bibitem{oksendal2013stochastic}
B.~{\O}ksendal, \emph{Stochastic Differential Equations: An Introduction with
  Applications}, 6th~ed.\hskip 1em plus 0.5em minus 0.4em\relax Springer, 2003.

\bibitem{sarkkaarnobook2019}
S.~S\"{a}rkk\"{a} and A.~Solin, \emph{Applied Stochastic Differential
  Equations}, ser. Institute of Mathematical Statistics Textbooks.\hskip 1em
  plus 0.5em minus 0.4em\relax Cambridge University Press, 2019.

\bibitem{Kushner-early1967}
H.~J. Kushner, ``Approximations to optimal nonlinear filters,'' \emph{IEEE
  Transactions on Automatic Control}, vol.~12, no.~5, pp. 546--556, October
  1967.

\bibitem{filip-2019-iterated-smoothing}
F.~Tronarp and S.~S\"{a}rkk\"{a}, ``Iterative statistical linear regression for
  {G}aussian smoothing in continuous-time non-linear stochastic dynamic
  systems,'' \emph{Signal Processing}, vol. 159, pp. 1--12, 2019.

\bibitem{mallick2010cd}
M.~Mallick, M.~Morelande, and L.~Mihaylova, ``Continuous-discrete filtering
  using {EKF}, {UKF}, and {PF},'' in \emph{2012 15th International Conference
  on Information Fusion}, July 2012, pp. 1087--1094.

\bibitem{david2015}
D.~Crouse, ``Basic tracking using nonlinear continuous-time dynamic models,''
  \emph{IEEE Aerospace and Electronic Systems Magazine}, vol.~30, no.~2, pp.
  4--41, Feb 2015.

\bibitem{ckdIndian2010}
I.~Arasaratnam, S.~Haykin, and T.~R. {Hurd}, ``Cubature {K}alman filtering for
  continuous-discrete systems: Theory and simulations,'' \emph{IEEE
  Transactions on Signal Processing}, vol.~58, no.~10, pp. 4977--4993, Oct
  2010.

\bibitem{SARKKA2013500}
S.~S{\"a}rkk{\"a} and J.~Sarmavuori, ``{G}aussian filtering and smoothing for
  continuous-discrete dynamic systems,'' \emph{Signal Processing}, vol.~93,
  no.~2, pp. 500--510, 2013.

\bibitem{SARKKA20121221}
S.~S\"{a}rkk\"{a} and A.~Solin, ``On continuous-discrete cubature {K}alman
  filtering,'' \emph{IFAC Proceedings Volumes}, vol.~45, no.~16, pp.
  1221--1226, 2012.

\bibitem{peter-koleden-book}
P.~E. Kloeden and E.~Platen, \emph{Numerical Solution of Stochastic
  Differential Equations}.\hskip 1em plus 0.5em minus 0.4em\relax
  Springer--Verlag Berlin Heidelberg, 1992.

\bibitem{SANCHO1970384}
N.~Sancho, ``On the approximate moment equations of a nonlinear stochastic
  differential equation,'' \emph{Journal of Mathematical Analysis and
  Applications}, vol.~29, no.~2, pp. 384--391, 1970.

\bibitem{kessler1997taylorexoansion}
M.~Kessler, ``Estimation of an ergodic diffusion from discrete observations,''
  \emph{Scandinavian Journal of Statistics}, vol.~24, no.~2, pp. 211--229,
  1997.

\bibitem{rogers_williams_2000}
L.~C.~G. Rogers and D.~Williams, \emph{Diffusions, Markov Processes and
  Martingales, Volume 2: It\^{o} Calculus}, 2nd~ed., ser. Cambridge
  Mathematical Library.\hskip 1em plus 0.5em minus 0.4em\relax Cambridge
  University Press, 2000.

\bibitem{fpkShalom2000}
S.~Challa and Y.~Bar-Shalom, ``Nonlinear filter design using
  {Fokker--Planck--Kolmogorov} probability density evolutions,'' \emph{IEEE
  Transactions on Aerospace and Electronic Systems}, vol.~36, no.~1, pp.
  309--315, Jan 2000.

\bibitem{marelende2005}
M.~R. Morelande and N.~J. Gordon, ``Target tracking through a coordinated
  turn,'' in \emph{Proceedings of IEEE International Conference on Acoustics,
  Speech, and Signal Processing (ICASSP)}, Mar 2005.

\bibitem{peter-maybeck-book}
P.~S. Maybeck, \emph{Stochastic Models, Estimation and Control, Volume
  2}.\hskip 1em plus 0.5em minus 0.4em\relax Academic Press, 1982.

\bibitem{griffithODE2010}
D.~F. Griffiths and D.~J. Higham, \emph{Numerical Methods for Ordinary
  Differential Equations: Initial Value Problems}.\hskip 1em plus 0.5em minus
  0.4em\relax Springer-Verlag, 2010.

\bibitem{butcher-numerical-ode}
J.~C. Butcher, \emph{Numerical Methods for Ordinary Differential Equations},
  3rd~ed.\hskip 1em plus 0.5em minus 0.4em\relax John Wiley \& Sons Inc., 2016.

\bibitem{simo2007ukf}
S.~S\"{a}rkk\"{a}, ``On unscented {K}alman filtering for state estimation of
  continuous-time nonlinear systems,'' \emph{IEEE Transactions on Automatic
  Control}, vol.~52, no.~9, pp. 1631--1641, Sep. 2007.

\bibitem{yacine2002}
Y.~A\"{i}t-Sahalia, ``Maximum likelihood estimation of discretely sampled
  diffusions: A closed-form approximation approach,'' \emph{Econometrica},
  vol.~70, no.~1, pp. 223--262, 2002.

\bibitem{florenz1986}
D.~Dacunha-Castelle and D.~Florens-Zmirou, ``Estimation of the coefficients of
  a diffusion from discrete observations,'' \emph{Stochastics}, vol.~19, no.~4,
  pp. 263--284, 1986.

\bibitem{florenz1989}
D.~Florens-zmirou, ``Approximate discrete-time schemes for statistics of
  diffusion processes,'' \emph{Statistics}, vol.~20, no.~4, pp. 547--557, 1989.

\bibitem{iacus-sde-simu}
S.~M. Iacus, \emph{Simulation and Inference for Stochastic Differential
  Equations: With R Examples}.\hskip 1em plus 0.5em minus 0.4em\relax
  Springer-Verlag New York, 2008.

\bibitem{horn-johnson-1991}
R.~A. Horn and C.~R. Johnson, \emph{Topics in Matrix Analysis}.\hskip 1em plus
  0.5em minus 0.4em\relax Cambridge University Press, 1991.

\bibitem{algebra-geometry-base}
S.~Basu, R.~Pollack, and M.-F. Roy, \emph{Algorithms in Real Algebraic
  Geometry}.\hskip 1em plus 0.5em minus 0.4em\relax Springer--Verlag Berlin
  Heidelberg, 2006.

\bibitem{AndersonMoore1981}
B.~D.~O. Anderson and J.~B. Moore, ``Detectability and stabilizability of
  time-varying discrete-time linear systems,'' \emph{SIAM Journal on Control
  and Optimization}, vol.~19, no.~1, pp. 20--32, 1981.

\bibitem{Reif1999}
K.~Reif, S.~Günther, E.~Yaz, and R.~Unbehauen, ``Stochastic stability of the
  discrete-time extended {K}alman filter,'' \emph{IEEE Transactions on
  Automatic Control}, vol.~44, no.~4, pp. 714--728, Apr 1999.

\bibitem{XiongZhangChan2006}
K.~Xiong, H.~Zhang, and C.~W. Chan, ``Performance evaluation of {UKF}-based
  nonlinear filtering,'' \emph{Automatica}, vol.~42, no.~2, pp. 261--270, 2006.

\bibitem{Karvonen2018}
T.~Karvonen, S.~Bonnabel, E.~Moulines, and S.~S{\"a}rkk{\"a}, ``On stability of
  a class of filters for non-linear stochastic systems,''
  \emph{arXiv:1809.05667v1}, 2018.

\end{thebibliography}

\begin{IEEEbiography}[{\includegraphics[width=1in,height=1.25in,clip,keepaspectratio]{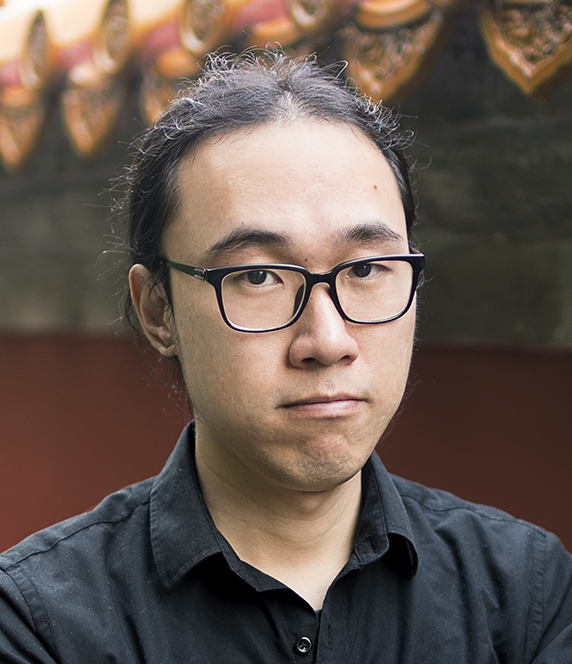}}]{Zheng Zhao}
	received his M.Sc. degree of control science and engineering from Beijing University of Technology in 2017. He is currently a doctoral candidate in Department of Electrical Engineering, Aalto University, Finland. His research interests include non-linear Bayesian filtering and machine learning applications.  
\end{IEEEbiography}
\begin{IEEEbiography}[{\includegraphics[width=1in,height=1.25in,clip,keepaspectratio]{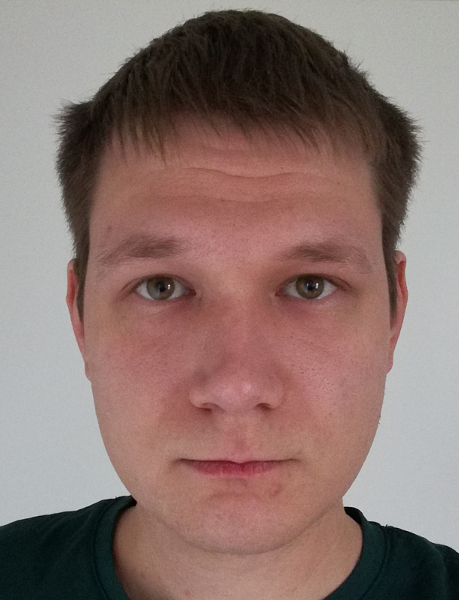}}]{Toni Karvonen} Toni Karvonen received his Master of Science degree in applied mathematics from the University of Helsinki, Finland, in 2015 and his Doctor of Science (Tech.) degree in electrical engineering from Aalto University, Finland, in 2019. He is currently a postdoctoral researcher in the Department of Electrical Engineering and Automation at Aalto University. His research interests are in stochastic state estimation and numerical analysis, in particular in numerical integration and probabilistic numerics.
\end{IEEEbiography}
\begin{IEEEbiography}[{\includegraphics[width=1in,height=1.25in,clip,keepaspectratio]{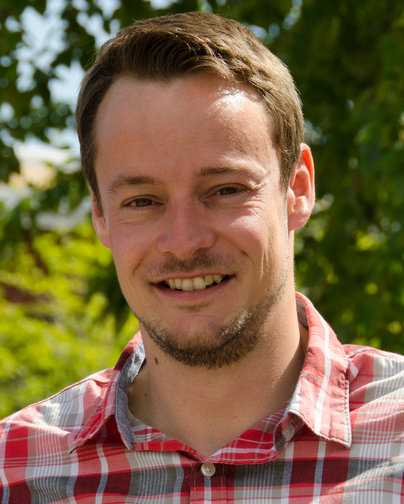}}]{Roland Hostettler}
	(S'10-M'14) received the Dipl. Ing. degree in Electrical and Communication Engineering from Bern University of Applied Sciences, Switzerland in 2007, and the M.Sc. degree in Electrical Engineering and Ph.D. degree in Automatic Control from Lule\aa\, University of Technology, Sweden in 2009 and 2014, respectively. He has held Post-Doctoral Researcher positions at Lule\aa\, University of Technology, Sweden and Post-Doctoral Researcher and Research Fellow positions at Aalto University, Finland. Currently, he is an Associate Senior Lecturer with the Department of Electrical Engineering, Uppsala University, Sweden. His research interests include statistical signal processing with applications to target tracking, biomedical engineering, and sensor networks.
\end{IEEEbiography}
\begin{IEEEbiography}[{\includegraphics[width=1in,height=1.25in,clip,keepaspectratio]{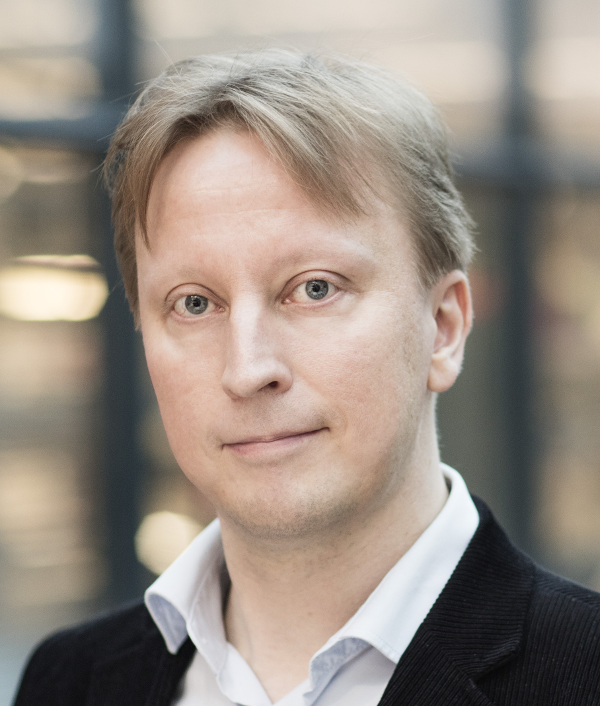}}]{Simo S\"{a}rkk\"{a}}
	is an Associate Professor with Aalto University. His research interests are in multi-sensor data processing systems with applications in location sensing, health technology, machine learning, inverse problems, and brain imaging. He has authored or coauthored over 100 peer-reviewed scientific articles and 3 books. He is a Senior Member of IEEE, serving as an Associate Editor of IEEE Signal Processing Letters, and is a member of IEEE Machine Learning for Signal Processing Technical Committee.
\end{IEEEbiography}

\end{document}